\titleformat{\paragraph}[runin]{\normalfont\normalsize\bfseries}{}{0em}{}[.]
\newtheorem{theorem}{Theorem}[section]
\newtheorem{example}{Example}[section]
\newtheorem{corollary}[theorem]{Corollary}
\newtheorem{lemma}[theorem]{Lemma}
\newtheorem{proposition}[theorem]{Proposition}
\newtheorem{claim}[theorem]{Claim}
\newtheorem{definition}[theorem]{Definition}
\newtheorem{remark}[theorem]{Remark}
\newtheorem{observation}[theorem]{Observation}
\newcommand{\actions}{A}
\newcommand{\actionset}{2^\actions}
\newcommand{\set}{S}
\newcommand{\setal}{\set_\alpha}
\newcommand{\setalp}{\set_{\alpha'}}
\newcommand{\altset}{T}
\newcommand{\reals}{\mathbb{R}}
\newcommand{\demandset}{\mathcal{D}}
\newcommand{\orbit}[1]{\mathsf{C}_{#1}}
\newcommand{\utilmin}[1]{\underline{u}_p(#1)}
\newcommand{\successor}{\mathsf{succ}}
\newcommand{\g}{V}
\newcommand{\nextf}{\successor_{f,c}}
\newcommand{\cand}[2]{\Gamma_{#1}(#2)}
\newcommand{\poten}{\Phi}
\newcommand{\E}{\mathbb{E}}
\newcommand{\greedy}{\textsc{greedy}}
\newcommand{\nill}{\mathsf{null}}
\newcommand{\epsper}{$\epsilon$-perturbation}
\newcommand{\epsiper}{$\epsilon_i$-perturbation}
\DeclareMathOperator*{\argmax}{arg\,max}
\DeclareMathOperator*{\argmin}{arg\,min}
\newcommand{\orbitfc}{\orbit{f,c}}
\newcommand{\ratios}{Q}
\title{Combinatorial Contracts\thanks{An extended abstract appeared in the Proceedings of the 62nd IEEE Symposium on Foundations of Computer Science (FOCS 2021), Boulder, CO, USA. The FPTAS in Section~\ref{sec:ptas} first appeared in \cite{DuttingEFK25}, and strengthens a result that appeared in the extended abstract. This project has received funding from the European Research Council (ERC) under the European Union's Horizon 2020 research and innovation program (grant agreement No. 866132), by the NSF-BSF (grant number 2020788), by the Israel Science Foundation Breakthrough Program (grant No.~2600/24), and by a grant from TAU Center for AI and Data Science (TAD). T. Ezra is supported by the Harvard University Center of Mathematical Sciences and Applications.}}
\author{Paul D\"utting\thanks{Google Research, Switzerland. Email: \texttt{duetting@google.com}} \and Tomer Ezra\thanks{Harvard University, USA. Email: \texttt{tomer@cmsa.fas.harvard.edu}} \and Michal Feldman\thanks{Tel Aviv University and Microsoft ILDC, Israel. Email: \texttt{mfeldman@tau.ac.il}} \and Thomas Kesselheim\thanks{University of Bonn, Germany. Email: \texttt{thomas.kesselheim@uni-bonn.de}}}
\date{}
\begin{document}

\maketitle

\begin{abstract}
We introduce a new model of combinatorial contracts in which a principal delegates the execution of a costly task to an agent. To complete the task, the agent can take any subset of a given set of unobservable actions, each of which has an associated cost.
The cost of a set of actions is the sum of the costs of the individual actions, and
the principal's reward as a function of the chosen actions satisfies some form of diminishing returns.
The principal incentivizes the agents through a contract, based on the observed outcome.

Our main results are for the case where the task delegated to the agent is a project, which can be successful or not.
We show that if the success probability as a function of the set of actions is gross substitutes, then an optimal contract can be computed with polynomially many value queries, whereas if it is submodular, the optimal contract is NP-hard.
All our results extend to linear contracts for higher-dimensional outcome spaces, which we show to be robustly optimal given first moment constraints.

Our analysis uncovers a new property of gross substitutes functions, and reveals many interesting connections between combinatorial contracts and combinatorial auctions, where gross substitutes is known to be the frontier for efficient computation.
\end{abstract}

\section{Introduction}

Contract theory is one of the pillars of microeconomic theory (cf., the 2016 Nobel Prize in Economics for Hart and Holmstr\"om~\cite{Nobel16}).  
Indeed, contract theory and its central principal-agent (hidden-action) model play a similar role in markets for \emph{services}, as the theory of mechanism design and its central (combinatorial) auctions model play in markets for \emph{goods}.
The past few years have seen classic applications of contract theory moving online, and with it computational, algorithmic, and optimization approaches are becoming more relevant. Applications range from crowdsourcing platforms, to online labor markets, to online marketing. 

In the classic hidden-action /  principal-agent model of Holmstr\"om~\cite{Holmstrom79} and Grossman and Hart~\cite{GrossmanHart83}, a principal delegates a task to an agent. The agent can take one of $n$ costly actions (e.g., effort levels), and these trigger some distribution over $m$ possible rewards that go to the principal.
The principal cannot observe the action taken by the agent; she can only observe the obtained outcome. 
This model captures an incentive problem, which is quite different from that in mechanism design and auctions, commonly referred to as ``moral hazard'': in and by itself the agent has no interest in taking a costly action. A contract defines a transfer---from the principal to the agent---for each possible stochastic outcome, and serves to incentivize the agent to exert effort. 

In this setting, an \emph{optimal} contract, 
that is, one 
that maximizes the principal's utility assuming that the agent best responds to the contract, can be computed in time polynomial in $n$ and $m$ (as was already shown in~\cite{GrossmanHart83}). 
The idea is to determine for each action whether the principal can incentivize the agent to take it via linear programming. Out of these, the one maximizing the principal's utility will be selected.

An important aspect that the classic model does not explicitly capture---a well recognized fact in the economics literature (cf., the influential multi-tasking paper of Holmstr\"om and Milgrom \cite{HolmstromMilgrom91})---is that typically performing a complex task entails taking a \emph{set} of actions and that the principal's reward as a function of the chosen actions satisfies some form of diminishing returns.

Of course, this could be modeled by the classic model by writing down one meta-action for each of the $2^n$ subsets of actions. However, this approach would ignore all structure of the underlying problem and, in particular, following the above blueprint to compute an optimal contract would 
require considering exponentially many subsets of actions.

\subsection{Our Contribution}
 
In this work we propose a new model of combinatorial contracts that captures the additional structure in problems where a task entails taking a set of actions. 
Within this model we study the design of optimal contracts through a computational lens. 
We establish a non-trivial positive computational result, as well as hardness results, and en-route we reveal several interesting connections to combinatorial auctions.

\paragraph{A model of combinatorial contracts}

In our base model a principal seeks to delegate a project to an agent, and the project can either succeed or fail. We denote the two outcomes with $1$ (for success) and $0$ (for failure). The principal receives a reward of $r(1) \geq 0$ when the project succeeds, and a reward of $r(0) = 0$ otherwise. 
There is a ground set $\actions$ of $n$ actions. The agent can choose any subset $\set \subseteq \actions$ of these actions. If the agent chooses to take actions $\set$, the project succeeds with probability $f(\set)$. Each of the actions $a \in \actions$ has a cost $c(a)$, and the cost of a set of actions $S$ is the sum of the costs $c(a)$ for $a \in S$.

To incentivize the agent to take a certain set of actions, the principal defines a contract. As in the classic model we assume that the actions are hidden, so the contract can only depend on the outcome of the project. Therefore, the principal's only choice is to define payments $t(1)$ and $t(0)$ to the agent for the case that the project succeeds or fails, respectively. The agent will then choose actions so that the expected difference of payment and costs is maximized. The principal's goal in turn is to maximize the expected value minus the payment. 

We also consider a generalization of the base model, in which the outcome space is not binary (a project that succeeds or fails), but rather a vector of $m$ rewards to the principal. In this more general model, each set of actions entails a different probability distribution over rewards, with $f_j(S)$ for $j \in \{0, \ldots, m-1\}$ being the probability of outcome $j$ under actions $S$.

The fact that the principal's reward satisfies some form of diminishing returns (e.g., some form of submodularity) then naturally translates to corresponding assumptions on the probability distributions.

\paragraph{Computing optimal contracts}

Our main results are for the base model where the principal delegates a project to an agent, and the project can be successful or not. While any contract in this binary outcome case is affine, it could pay a positive amount for failure and it could pay more for failure than for success. We observe that this is never optimal, and it suffices to consider contracts that only pay for success. Any such contract can be interpreted as a linear contract, which, for some $\alpha \in [0,1]$, sets $t(1) = \alpha \cdot r(1)$ and $t(0) = 0$. 
Under such a contract, the agent chooses the set $\set$ that maximizes $\alpha \cdot f(\set) \cdot r(1) - \sum_{a \in \set} c(a)$, or equivalently,  $f(\set) \cdot r(1) - \frac{1}{\alpha} \sum_{a \in \set} c(a)$. 
That is, the agent's problem is equivalent to resolving a \emph{demand query} at prices $\frac{1}{\alpha} c(a)$ in the framework of combinatorial auctions.

Our main positive result is for the case where the success probability function $f$ is gross substitutes --- a strict subclass of submodular functions that includes natural functions as special cases (e.g., additive, unit demand, and matroid rank functions), and plays a central role in both economics and computer science. 

\vspace{0.1in}
\noindent {\bf Main Theorem 1:} For gross substitutes success probability functions, the optimal contract can be computed in time polynomial in the number of actions $n$ given access to a value oracle (namely an oracle that for any set $\set$ computes the value $f(\set)$ in polynomial time).
\vspace{0.1in}

A key object in our analysis is the set of {\em critical} values of $\alpha$---values that are potential candidates for an optimal contract.
Our key technical observations are, firstly, that for every function $f$ there are only finitely many critical values of $\alpha$. 
Secondly, in case of a gross-substitutes function $f$, the number of such critical points is bounded polynomially in $n$. Thirdly, we can efficiently iterate over these critical points. Together this gives us a polynomial-time algorithm. 

In order to bound the number of critical points, our main insight is that for a gross-substitutes function $f$ there can be only $O(n^2)$ maximizers $\set$ of $f(\set) \cdot r(1) - \frac{1}{\alpha} \sum_{a \in \set} c(a)$ for different values of $\alpha$. In the language of combinatorial auctions, this means that the number of changes in the demand correspondence when prices (of all items simultaneously) are being scaled linearly is bounded by $O(n^2)$. 
We prove this by uncovering a new property of gross substitutes functions, according to which, generically, as we increase $\alpha$ (decrease prices), whenever the demand set changes, either a single item enters the demand set or an expensive item replaces a cheaper one in a one-to-one fashion. This then implies the claimed bound through a potential function argument. We also show that the upper bound of $O(n^2)$ on the number of critical points for gross-substitute functions $f$ is tight.

For cases beyond gross substitutes success probability functions $f$, we show a hardness result, which applies even for budget additive functions.%

\vspace{0.1in}
\noindent {\bf Main Theorem 2:} For budget additive (and, hence, submodular) success probability functions, computing the optimal contract is NP-hard. 
\vspace{0.1in}

We prove this result by a reduction from the problem of subset sum. In the instance we construct there are only two potential $\alpha$'s that can lead to the optimal contract.
However, to determine which of the two contracts is better,  we need to compute the agent's best response at the corresponding $\alpha$'s, but if we could do this efficiently we could also solve subset sum.

In addition, we show that our approach used for the gross substitutes case, of going over all the critical points, does not work for the case of submodular success probability functions (or even coverage functions). We show this by recursively constructing a coverage function with exponentially many critical points. 

We also show that for every (monotone) success probability function $f$ (whether submodular or not) there is an FPTAS, which computes a $(1-\epsilon)$-approximation with  
$\mathrm{poly}(n,\frac{1}{\epsilon})$ value and demand queries. 
To obtain this result, we observe that we can restrict our attention to $\mathrm{poly}(n,\frac{1}{\epsilon})$ candidate contracts, which we have to compare, one of which will be a $(1-\epsilon)$-approximation.

We then turn to more general outcome spaces, where there is a vector of $m$ rewards, and each set of actions $S \subseteq A$ induces a distribution over these rewards. A linear contract for this setting specifies which fraction $\alpha$ of the reward goes to the agent. Unlike in the case of a binary outcome, linear contracts are no longer optimal for this more general setting. However, as we show, they are robustly optimal in a max-min sense, when for each action only the expected reward rather than the exact distribution is known, and the principal wishes to maximize her utility in the worst-case over all compatible distributions.

Moreover, all our results for optimal (linear) contracts for the case of a binary outcome translate to linear contracts for this more general setting.

\paragraph{Follow-up work}

Deo-Campo Vuong et al.~\cite{DeoCampoVuongEtAl24} and D\"utting et al.~\cite{DuettingFGT24} give an algorithm that finds an optimal contract with access to value and demand oracles, whose running time is polynomial in $n$ and the number of critical points. 
Using this they obtain a poly-time algorithm for supermodular success probability functions $f$.
The work of D\"utting et al.~\cite{DuettingFGT24} also studies a class of matching-based $f$, 
which can be shown to be XOS, a strict superclass of submodular functions. They show that this class of problem instances admits an efficient demand oracle, but can---in general---have a super-polynomial number of critical points. They also present efficient algorithms for several special cases of this problem.

D\"utting et al.~\cite{DuettingFGR24} show that exponentially many value and demand queries are required to find an optimal contract for submodular $f$.
Ezra et al.~\cite{EzraFS24} show that for submodular  $f$, no polynomial time algorithm with value oracle access only can approximate the optimal contract to within any constant factor, unless P = NP. This result applies even if the submodular function is a coverage function. 
For the more general case, where the success probability function $f$ is XOS, Ezra et al.~\cite{EzraFS24} show that, under value oracle access, for any $\epsilon > 0$, no polynomial time algorithm can approximate
the optimal contract to within a factor of $n^{-\frac{1}{2} +\epsilon}$, unless P = NP. Recalling that with value and demand oracle access to $f$ we obtain an FPTAS, this shows the power of demand queries in this setting.

\paragraph{Open problems} 

Despite the progress in follow-up work, several fundamental questions remain open. While Ezra et al.~\cite{EzraFS24} demonstrated that it is impossible to obtain constant-factor approximations for submodular success probabilities in the value oracle model, it would be interesting to identify subclasses of submodular functions that admit such approximation. 
A particularly interesting class of submodular functions, for which it may still be possible to get a PTAS or even an FPTAS, is the class of budget-additive functions. For this class it would also be interesting to prove or disprove whether the number of critical points is polynomially bounded. 
Moreover, while D\"utting et al.~\cite{DuettingFGR24} proved that we can't hope to compute an optimal contract for submodular $f$ even with value and demand oracles, it would be interesting to identify natural settings that admit an efficient demand oracle, for which the problem of computing an optimal contract is hard.

\subsection{Related Work}

\paragraph{Algorithmic contract theory}

The two foundational papers of contract theory are the aforementioned papers of Holmstr\"om \cite{Holmstrom79} and Grossman and Hart \cite{GrossmanHart83}. In addition to the basic model, these papers contain the linear programming approach to computing optimal contracts. In another classic paper, Holmstr\"om and Milgrom \cite{HolmstromMilgrom87} study a multi-round interaction between a principal and an agent, and show that under the assumptions of that model a linear contract is optimal. Holmstr\"om and Milgrom \cite{HolmstromMilgrom91} consider a model similar to ours, but consider a fractional allocation of efforts to actions which makes their model less amenable to a computational analysis. 

Carroll~\cite{10.2307/43495392} assumes that there is a fixed set of known actions but the actual set of actions is a superset of these. He shows that then a linear contract is max-min optimal. In a similar spirit, D\"utting et al.~\cite{DBLP:conf/ec/DuttingRT19} consider the case that only the \emph{expected} rewards of actions are known but not their actual distributions. They show that linear contracts are also max-min optimal in this setting (we show a similar result in Section~\ref{sec:linear-optimal} for the case where the agent chooses any set of actions). 
Besides this, they also discuss how well a linear contract can approximate an optimal one.

Babaioff et al.~\cite{BabaioffFNW12} turn to a setting in which there are multiple agents. They introduce the combinatorial agency model, where a principal incentivizes a \emph{team} of agents to exert costly effort on his behalf in equilibrium, and the outcome depends on the complex combinations of the efforts by the agents. It generalizes an earlier work by Feldman et al.~\cite{FeldmanCSS07} for a simple multi-hop routing. Follow-up work by Babaioff et al.~\cite{BabaioffFN09,BabaioffFN10} studies the effect of mixed strategies and free riding in combinatorial agency. 
More recently, D\"utting et al.~\cite{DuttingEFK23} have explored the multi-agent problem through a similar lens as we do here. Additional results for that model are shown in \cite{DeoCampoVuongEtAl24} and \cite{EzraFS24}.
The combinatorial explosion in these papers comes from a similar source as in our paper. For example, if there are $k$ agents and each agent has $2$ actions, then there are $2^k$ action profiles. An important difference to our work is in the incentive compatibility constraint, which in these papers has to hold for each agent individually, and leads these papers to study equilibria, while in our paper there is only a single agent who will choose the best set of actions. 

D\"utting et al.~\cite{DBLP:journals/siamcomp/DuttingRT21} study a combinatorial contracting problem with one agent choosing one of $n$ actions. There are $m$ different success events, each action causes each event to happen with a certain probability. As these random draws are independent, there are $2^m$ different subsets of success events that can take place, so the number of outcomes is exponential in the input size. This is a similar but orthogonal question to what we study in this paper, where the exponential growth comes from the number of different combinations of actions the agent can choose.

Two additional directions that study contracts through an algorithmic lens, but that are less relevant to this work, are: (1) Work that considers the problem of learning contracts from an (online) no-regret learning perspective (e.g., \cite{HoSV16,DuettingGSW23,ZhuBYWJJ23,DuettingFPS25}), and (2) work that combines moral hazard with screening (e.g., \cite{DBLP:journals/corr/abs-2010-06742,AlonEtAl21,CastiglioniM021,AlonDLT23}).

\paragraph{Incentivizing effort beyond contracts}

The problem of incentivizing effort has also recently been studied in different models. Kleinberg and Kleinberg \cite{DBLP:conf/sigecom/KleinbergK18} and Bechtel et al.~\cite{BechtelD21,BechtelDP22}, for instance, study problems of algorithmic delegation. In another related direction, Li et al.~\cite{LiHSW22} and Hartline et al.~\cite{HartlineSLW23} explore the use of scoring rules for incentivizing effort in information elicitation problems. Finally, Kleinberg and Raghavan \cite{DBLP:conf/ec/KleinbergR19} (and follow-ups) consider problems of strategic classification.

\paragraph{Gross substitutes functions}

Gross substitute functions play a central role in economics (e.g., \cite{kelso1982job}). They have been independently discovered in mathematics, under a different name in the context of discrete convex analysis, see \cite{MurotaS99,Murota96}.
The class of gross substitutes functions is a strict subclass of submodular functions \cite{LehmannLN06}, which includes natural functions such as additive, unit demand, and matroid rank functions as special cases. 
This class plays a central role in the analysis of combinatorial markets; for example, it is the frontier for both market equilibrium existence \cite{kelso1982job,GS99}, and for the efficient computation of a welfare-maximizing allocation \cite{nisan2006communication}. 
Its original definition uses the notions of prices, utility and demand \cite{kelso1982job}, but due to its centrality in combinatorial markets, it has attracted a lot of work that furthers our understanding of its characteristics (e.g., \cite{BalkanskiL18,Murota2016,DobzinskiFF21}); see \cite{PaesLeme17} for an influential algorithmic survey.

\paragraph{Assortment optimization} 

There is also a surprising connection between our perspective on contract theory and assortment optimization as introduced by Talluri and van Ryzen \cite{DBLP:journals/mansci/TalluriR04}. For our main positive result, we show that for a gross-substitutes function $f$ there can be only $O(n^2)$ maximizers $\set$ of $f(\set) - \frac{1}{\alpha} \sum_{a \in \set} c(a)$ for different values of $\alpha$. In the language of Talluri and van Ryzen \cite{DBLP:journals/mansci/TalluriR04}, this means that if the total probability of purchase is a gross-substitutes function, then the number of efficient sets is in $O(n^2)$, making their Bellman equations efficiently solvable. The work by Immorlica et al.~\cite{DBLP:journals/teco/ImmorlicaLMST21} uses similar perspectives, which could possibly also be made (more) efficient by our insights.

\subsection{Paper Structure}

We present our problem in Section~\ref{sec:model}. In Section~\ref{sec:insights} we present some useful insights on the structure of the problem.
Our polynomial time algorithm for gross substitutes success probability functions is presented in  Section~\ref{sec:gs}. 
In Section~\ref{sec:beyond} we study success probability functions beyond gross substitutes: in Section~\ref{sec:submodular} we provide negative results for submodular functions; in Section~\ref{sec:ptas} we present a FPTAS for general success probability functions; and in Section~\ref{sec:weakly-poly} we present a weakly poly-time algorithm for instances with poly-size critical sets.
In Section~\ref{sec:linear-optimal}, we discuss the case of non-binary outcomes.
Omitted proofs are deferred to the appendix.

\section{The Combinatorial Contracts Problem}
\label{sec:model}

\paragraph{Hidden-action principal-agent setting} 

There is a single principal and a single agent. There is a set $\actions = \{1, \ldots, n\}$ of $n$ possible actions. The strategy of the agent consists of a set of actions $\set \in \actionset$. Every action $a \in \actions$ is associated with a positive cost $c(a)>0$. The cost of a set of actions $\set \in \actionset$ is additive; i.e., $c(\set)=\sum_{a \in \set}c(a)$. The cost of not taking any actions $c(\emptyset)$ is zero.

We focus on the case of a binary outcome space $\Omega = \{0,1\}$. (We consider more general, higher dimensional outcome spaces in Section~\ref{sec:linear-optimal}.) Outcome $0$ corresponds to failure, and outcome $1$ corresponds to success. The principal derives a reward $r(1) \in \mathbb{R}_{\geq 0}$ from outcome $1$ (i.e., success), and a reward of $r(0) = 0$ from outcome $0$ (i.e., failure).

Every strategy $\set \in \actionset$ by the agent has an associated success probability $f: \actionset \rightarrow [0,1]$. We assume that $f(\emptyset) = 0$, and that $f$ is monotonically non-decreasing, i.e., $\set \subseteq \set'$  implies that $f(\set) \leq f(\set')$.

The principal cannot directly observe the set of actions chosen by the agent, but she can observe the stochastic outcome of the chosen set of actions.

\paragraph{The contract design problem}

A contract $t: \Omega \rightarrow \mathbb{R}_{\geq 0}$ is a mapping from outcomes to non-negative payments for each outcome. In the binary case, a contract thus corresponds to two numbers, $t(0)$ and $t(1)$, the payment upon failure and success.

The principal's expected reward for a set of actions $\set \in \actionset$ is $R(\set) = f(\set) \cdot r(1)$. The expected payment from the principal to the agent for a set of actions $\set \in \actionset$ is defined as $T(\set) = (1-f(\set)) \cdot t(0) + f(\set) \cdot t(1)$. The principal's expected utility from a set of actions $\set \in \actionset$ is
\[
    u_p(\set,t) = R(\set) - T(\set).
\]
The agent's expected utility from a set of actions $S \in \actionset$ is 
\[
    u_a(\set,t) = T(\set) - c(\set).
\]

A set of actions $S \in \actionset$ is \emph{a best response} to a contract $t$ (we also say it is \emph{incentivized} by contract $t$) if it yields the highest possible utility to the agent, where we assume that the agent breaks ties in favor of the principal. 

Formally, let 
\[
    \demandset(t) = \argmax_{\set' \in \actionset} u_a(\set',t)
\]
be the collection of sets of actions that maximize the agent's utility. Then the collection of sets of actions that are incentivized by the contract is
\[
    \demandset^\star(t)= \argmax_{\set' \in \demandset(t)} u_p(\set',t) \subseteq \demandset(t).
\]

The assumption that the agent breaks ties in favor of the principal is a standard one in the contracts literature (see, e.g., \cite{10.2307/43495392}). It is motivated by the fact that one could perturb payments slightly to achieve the same effect. 

Note that we have set things up so that the best response condition (or the ``IC constraint'') implies individual rationality, i.e., that the agent's utility is non-negative. 
Also note that for any $\set \in \demandset^\star(t)$, the principal's utility $u_p(\set,t)$ is the same. We can thus define $u_p(t)$ to be the principal's utility $u_p(\set,t)$ from any set $\set \in \demandset^\star(t)$. 

The computational problem that we are interested in is that of computing a contract that maximizes the principal's utility.

\bigskip

\begin{tabular}{p{1.5cm}p{8.6cm}}
\multicolumn{2}{l}{\textsf{OPT-CONTRACT}:}\\
{\bf Input:} & Action set $\actions$, outcome set $\Omega$, rewards $r(j)$ for $j \in \Omega$, costs $c(i)$ for $i \in A$, oracle access to $f$\\
{\bf Output:} & Contract $t$ that maximizes $u_p(t)$
\end{tabular}

\bigskip

The following simple observation, will allow us to narrow down the search space: 

\begin{restatable}{observation}{propzerozero}
    For any contract $t$ there is a contract $t'$ such that $t'(0) = 0$ that yields a weakly higher utility to the principal.
\end{restatable}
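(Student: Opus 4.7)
The plan is to exhibit $t'$ explicitly by shifting the payments down by $t(0)$, splitting into two cases depending on the sign of $t(1)-t(0)$ (since we must maintain $t'(1),t'(0)\ge 0$).

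First, suppose $t(1)\ge t(0)$. Define $t'(0)=0$ and $t'(1)=t(1)-t(0)\ge 0$, so $t'$ is a valid contract. Observe that for every set $\set\in\actionset$ we have
\[
T'(\set) = f(\set)\bigl(t(1)-t(0)\bigr) = T(\set)-t(0),
\]
so $u_a(\set,t')=u_a(\set,t)-t(0)$ for all $\set$. Thus the agent's ranking over sets is unchanged, tie-breaking is unchanged, and the incentivized set $\set^\star$ is the same under $t'$ as under $t$. The principal's reward $R(\set^\star)$ is also unchanged while the expected transfer drops by $t(0)\ge 0$, so $u_p(t')=u_p(t)+t(0)\ge u_p(t)$.

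Next, suppose $t(1)<t(0)$. I would argue that in this case the principal's utility under $t$ is already non-positive, so the zero contract $t'\equiv 0$ (which clearly yields $u_p(t')=0$) weakly dominates it. Under $t$, for any nonempty $\set$ we have $u_a(\set,t)=t(0)+f(\set)(t(1)-t(0))-c(\set)<t(0)=u_a(\emptyset,t)$, using $t(1)-t(0)<0$, $f(\set)\ge 0$, and $c(\set)>0$ for $\set\neq\emptyset$. Hence the agent's unique best response is $\emptyset$, giving $u_p(t)=R(\emptyset)-T(\emptyset)=-t(0)\le 0=u_p(t')$.

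There is no real obstacle here beyond being careful about the non-negativity of payments, which is exactly what forces the case split; in both cases the constructed $t'$ is a valid contract with $t'(0)=0$ that weakly improves the principal's utility.
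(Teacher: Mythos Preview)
Your proof is correct. It differs from the paper's approach in a minor but noteworthy way: the paper fixes the agent's currently incentivized set $S$ and defines $t'(1)$ so that $T'(S)=T(S)$, then argues the agent may switch to some $S'$ with $f(S')\ge f(S)$ and bounds the principal's utility accordingly. You instead shift both payments down by $t(0)$, which keeps the agent's ordinal ranking (and hence the incentivized set) \emph{exactly} the same and makes the improvement immediate. Your version is a bit more careful, since you explicitly handle the case $t(1)<t(0)$ (where the shift would violate non-negativity) by passing to the zero contract; the paper's sketch does not separately discuss this case or the edge case $f(S)=0$.
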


We can thus focus on contracts $t$ such that $t(0) = 0$. Any such contract can be expressed by a single parameter $\alpha$ such that $t(1) = \alpha \cdot r(1)$. This motivates identifying contracts $t$ with their $\alpha$, and replacing $t$ with $\alpha$ and $t(1)$ with $\alpha \cdot r(1)$ in the definitions above. 
For example, we replace  $\demandset(t)$ and $\demandset^{\star}(t)$ by $\demandset(\alpha)$ and $\demandset^{\star}(\alpha)$, respectively.

Hereafter, we normalize $r(1)$ to be $1$. 
Hence, $R(\set) = f(\set)$, $t(1)=\alpha$, and $f(\set)\cdot r(1)$ can be replaced by $f(\set)$.

\begin{example} \label{ex:demand} 
    Consider the success probability function $f: 2^{\{1,2,3\}} \rightarrow [0,1]$ where:
    $f(\emptyset)=0,~f(\{1\}) = f(\{2\}) = 0.35, f(\{1,2\})=0.5,~f(\{3\}) =f(\{1,3\}) =f(\{2,3\})=f(\{1,2,3\})=0.6,$
    and $c(1)=c(2)=0.05,~ c(3)=0.15$.
    (One can verify that $f$ is submodular but not gross substitutes, see definitions below.)
    Consider two contracts, $\alpha_1=0.5$ and $\alpha_2=0.25$.
    For $\alpha_1=0.5$ it holds that $\demandset(\alpha_1) = \{\{1,2\},\{3\}\}$ since both sets give the agent an expected utility of $0.15$. On the other hand $\demandset^\star(\alpha_1)=\{\{3\}\}$ since this is the only set in the demand that maximizes the utility of the principal.
    For $\alpha_2=0.25$ it holds that $\demandset(\alpha_2)=\demandset^\star(\alpha_2) = \{\{1\},\{2\}\}$ since both sets
    maximize the expected utility of the agent, and give the same principal's utility.
\end{example}

\paragraph{The agent's problem} 

For a fixed contract $t$ with $t(0) = 0$ or the corresponding $\alpha$, 
the agent's problem is to find a set $\set \in \demandset^\star(\alpha)$.

\bigskip

\begin{tabular}{p{1.5cm}p{8.6cm}}
\multicolumn{2}{l}{\textsf{BEST-RESPONSE}:}\\
{\bf Input:} &Contract $\alpha \in (0,1]$\\
{\bf Output:} &Some set $S \in \demandset^\star(\alpha)$
\end{tabular}

\bigskip

To determine whether $\set \in \demandset(\alpha)$ for some $\alpha \in (0,1]$ we need to compare the agent's utility for pairs of sets of actions $\set, \set' \in \actionset$. Rather than comparing $u(\set,\alpha)$ to $u(\set',\alpha)$ we can equivalently compare $u(\set,\alpha)/\alpha$ and $u(\set',\alpha)/\alpha$, i.e., 
\[
    f(\set)  - \sum_{i \in \set} c(i)/\alpha \quad \text{and}\quad f(\set') - \sum_{i \in \set'} c(i)/\alpha.
\]

\paragraph{Success probability functions} 

It is natural to impose some form of ``decreasing marginal returns'' on the success probabilities $f: \actionset \rightarrow [0,1]$ as a function of the set of actions taken. We consider the following classes of functions:

\begin{itemize}
    \item The function $f$ is \emph{submodular} if for every $\set, \set' \in \actionset$ with $\set \subseteq \set'$ and every  $i \in \actions \setminus \set'$ we have $f(\set \cup \{i\}) - f(\set) \geq f(\set' \cup \{i\}) - f(\set')$. 
	
    \item The function $f$ is \emph{budget additive} if there exists a budget $B \in [0,1]$ such that for every $\set \in \actionset$ we have $f(\set) = \min\{B, \sum_{i \in \set} f(\{i\})\}$.
		
    \item The function $f$ is {\em coverage} if there exists a finite set $U$, where every element $j \in U$ is associated with a weight $w_j \in \reals_{\geq 0}$, and a function $g:\actions \rightarrow 2^U$ such that for every set $S \in 2^\actions$, $f(S)=\sum_{j \in \bigcup_{i\in S}g(i)}w_j$.
	
    \item The function $f$ is \emph{gross substitutes} if for any two vectors $p, q \in \reals_{\geq 0}^n$ such that $q \geq p$ and any $\set \in \actionset$ such that $f(\set) - \sum_{i \in \set} p_i \in \argmax_{\set' \in \actionset} f(\set') - \sum_{i \in \set'} p_i$ there is a $\altset \in \actionset$ such that $f(\altset) - \sum_{i \in \altset} q_i \in \argmax_{\altset' \in \actionset} f(\altset') - \sum_{i \in \altset'} q_i$ and $\altset \supseteq \{i \in \set \mid q_i = p_i\}.$ 
    I.e., whenever the prices of some items increase and the prices of other items remain intact, the agent's demand for the items whose price remain intact weakly increases.
	
    \item The function $f$ is \emph{unit demand} if for every $\set \in \actionset$, 
    $f(\set) = \max_{i \in \set} f(\{i\})$.
\end{itemize}

Unit demand functions are gross substitutes. Gross substitutes functions, coverage functions, and budget additive functions are submodular. Gross substitutes functions, coverage functions, and budget additive functions are incomparable to each other.

\paragraph{Value vs.~demand queries} 

The success probabilities $f: \actionset \rightarrow [0,1]$ are combinatorial objects, whose explicit description size is exponential in $n$. For computational questions we therefore consider the following two types of oracle access to these functions:
\begin{itemize} 
    \item A \emph{value oracle} receives a set $\set \in \actionset$ as input and returns $f(\set)$.
    \item A \emph{demand oracle} receives a vector $p \in \reals^n_{\geq 0}$ of ``prices'' as input and returns some set $\set$ that maximizes the ``utility'' $f(\set) - \sum_{i \in \set} p_i$.
\end{itemize}

Demand oracles are a natural assumption in the context of combinatorial auctions, where they correspond to asking the agent for a set of items that maximizes his utility given item prices. 
As we shall see, they also play a natural role in combinatorial contracts.
For example, the agent's problem above is essentially solving a demand query.

\section{Structural Insights}
\label{sec:insights}

In this section we present some useful insights regarding the structure of $\demandset_{f,c}(\alpha)$ and $\demandset_{f,c}^\star(\alpha)$ (recall that $\demandset_{f,c}(\alpha)$ is the collection of sets of actions maximizing the agent's utility under contract $\alpha$, and $\demandset_{f,c}^\star(\alpha)$ is the subset among these that maximize the principal's utility). 
We also define {\em critical} values of $\alpha$ as ones for which the demand changes (formal details below). 
As we shall see, the set of critical values of $\alpha$ is useful in calculating the optimal contract.

Given some $\alpha$, let $\g_{f,c}(\alpha) = \max_{\set\in \demandset_{f,c}(\alpha)} f(\set)$. Note that for every $\alpha<1$ and every $\set\in  \demandset^\star_{f,c}(\alpha)$ it holds that $\g_{f,c}(\alpha) = f(\set)$.  

For example, in Example~\ref{ex:demand}, it holds that 
$\g_{f,c}(0.5) = 0.6 $ since $\demandset^\star_{f,c}(0.5)=\{\{3\}\}$ and $f(\{3\})=0.6$. Similarly, $\g_{f,c}(0.25) = 0.35$ since $\demandset^\star_{f,c}(0.25)=\{\{1\},\{2\}\}$ and $f(\{1\})=f(\{2\})=0.35$.

The following proposition establishes a monotonicity property of the demand as a function of $\alpha$.
\begin{proposition}\label{prop:monotone}
    Let $0\leq \alpha_1 <\alpha_2 $.
    For every
    $\set_1\in\demandset_{f,c}(\alpha_1)$ and $\set_2\in\demandset_{f,c}(\alpha_2)$, it holds that $f(\set_1) \leq f(\set_2)$.
\end{proposition}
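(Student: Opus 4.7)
The plan is to use the standard two-inequality trick from revealed-preference arguments. Write out what membership in the demand set means: $S_1 \in \demandset_{f,c}(\alpha_1)$ gives
\[
\alpha_1 f(S_1) - c(S_1) \geq \alpha_1 f(S_2) - c(S_2),
\]
and $S_2 \in \demandset_{f,c}(\alpha_2)$ gives
\[
\alpha_2 f(S_2) - c(S_2) \geq \alpha_2 f(S_1) - c(S_1).
\]

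Adding the two inequalities causes the cost terms to cancel, leaving
\[
(\alpha_2 - \alpha_1) f(S_2) \geq (\alpha_2 - \alpha_1) f(S_1).
\]
Since $\alpha_2 > \alpha_1$, we can divide by the positive quantity $\alpha_2 - \alpha_1$ to conclude $f(S_1) \leq f(S_2)$.

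There is no real obstacle here; the only minor subtlety is to make sure one uses the correct form of the agent's utility. Since we have normalized $r(1)=1$ and restricted attention to contracts with $t(0)=0$ (as justified by the earlier observation), the agent's utility simplifies to $\alpha f(S) - c(S)$, so it is legitimate to compare sets via this expression. The argument uses no structural assumption on $f$ beyond being a set function, so it applies in full generality.
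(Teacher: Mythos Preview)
Your proof is correct and essentially identical to the paper's own argument: both write down the two optimality inequalities, add them so that the cost terms cancel, and divide by $\alpha_2-\alpha_1>0$.
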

\begin{proof}
    Let $\set_1\in\demandset_{f,c}(\alpha_1)$ and let $\set_2\in\demandset_{f,c}(\alpha_2)$. By these definitions, we have $\alpha_1 f(\set_1) - c(\set_1) \geq \alpha_1 f(\set_2) - c(\set_2)$ and $\alpha_2 f(\set_2) - c(\set_2) \geq \alpha_2 f(\set_1) - c(\set_1)$. Adding these inequalities implies $\alpha_1 f(\set_1) + \alpha_2 f(\set_2) \geq \alpha_2 f(\set_1) + \alpha_1 f(\set_2)$, or equivalently, $(\alpha_2 - \alpha_1) f(\set_1) \leq (\alpha_2 - \alpha_1) f(\set_2)$. Dividing by $\alpha_2 - \alpha_1$ implies the claim.
\end{proof}

Note that $\g_{f,c}(\alpha) = \max_{\set \in \demandset_{f,c}(\alpha)} f(\set)$. Thus, Proposition~\ref{prop:monotone} implies monotonicity of $\g_{f, c}$.
\begin{corollary}
    \label{cor:mon-g}
    For every $0 \leq \alpha_1 < \alpha_2$, $\g_{f,c}(\alpha_1) \leq \g_{f,c}(\alpha_2)$.
\end{corollary}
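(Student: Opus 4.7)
The plan is to reduce the corollary directly to Proposition~\ref{prop:monotone}. Since the note just before the statement explicitly rewrites $\g_{f,c}(\alpha) = \max_{\set \in \demandset_{f,c}(\alpha)} f(\set)$, we may work with $\demandset_{f,c}$ rather than $\demandset^\star_{f,c}$, which is exactly the level of generality at which Proposition~\ref{prop:monotone} is stated.

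The proof will be a one-liner: fix $0 \le \alpha_1 < \alpha_2$, and choose maximizers $\set_1 \in \argmax_{\set \in \demandset_{f,c}(\alpha_1)} f(\set)$ and $\set_2 \in \argmax_{\set \in \demandset_{f,c}(\alpha_2)} f(\set)$, which exist because $\demandset_{f,c}(\alpha_i)$ is a nonempty finite collection (subset of $\actionset$). By definition these witnesses satisfy $f(\set_1) = \g_{f,c}(\alpha_1)$ and $f(\set_2) = \g_{f,c}(\alpha_2)$. Now apply Proposition~\ref{prop:monotone} to the pair $(\set_1, \set_2)$ to conclude $f(\set_1) \le f(\set_2)$, i.e., $\g_{f,c}(\alpha_1) \le \g_{f,c}(\alpha_2)$.

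There is really no obstacle here: the content is in Proposition~\ref{prop:monotone}, and the corollary is just the statement that pointwise monotonicity of $f$-values across the two demand correspondences upgrades to monotonicity of the maxima. The only small thing worth spelling out (if one wants to be careful) is the identification $\g_{f,c}(\alpha) = \max_{\set \in \demandset_{f,c}(\alpha)} f(\set)$: this holds because any $\set \in \demandset^\star_{f,c}(\alpha)$ maximizes the principal's utility $(1-\alpha) f(\set)$ over $\demandset_{f,c}(\alpha)$, which for $\alpha \in [0,1)$ is equivalent to maximizing $f(\set)$ over $\demandset_{f,c}(\alpha)$, and for $\alpha = 1$ all sets in the demand give the principal the same (zero) utility so the identification is vacuous on the $f$-value side after passing to the set $\demandset^\star_{f,c}$ via the tie-breaking rule. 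Either way, the corollary follows immediately.
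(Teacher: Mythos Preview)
Your proposal is correct and follows exactly the paper's approach: the paper does not give a separate proof of the corollary but simply notes (in the sentence preceding it) that $\g_{f,c}(\alpha) = \max_{\set \in \demandset_{f,c}(\alpha)} f(\set)$ and that Proposition~\ref{prop:monotone} therefore yields the monotonicity of $\g_{f,c}$, which is precisely your argument of picking witnesses $\set_1,\set_2$ and applying the proposition. Your closing parenthetical justifying the identification at $\alpha = 1$ is unnecessary (and a little muddled), since the paper takes this identification as given; you can safely drop it.
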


We next observe that $\g_{f,c}$ is right continuous. The proof  exploits the fact that the agent breaks ties in favor of the principal. 

\begin{restatable}{observation}{obscontright}
    For every $\alpha>0$, 
    $\lim_{\epsilon\rightarrow 0^+}\g_{f,c}(\alpha + \epsilon)$ and $\lim_{\epsilon\rightarrow 0^+}\g_{f,c}(\alpha - \epsilon)$ are well defined. 
    Moreover, $$\g_{f,c}(\alpha) = \lim_{\epsilon\rightarrow 0^+}\g_{f,c}(\alpha + \epsilon).$$
    \label{obs:properties}
\end{restatable}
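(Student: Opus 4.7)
The plan is to derive both claims from two ingredients: monotonicity of $\g_{f,c}$ (already proven in Corollary~\ref{cor:mon-g}) and the finiteness of $\actionset = 2^{\actions}$. Since $\g_{f,c}$ is bounded (it takes values in $[0,1]$) and monotonically non-decreasing, standard real analysis immediately gives that one-sided limits $\lim_{\epsilon\to 0^+}\g_{f,c}(\alpha+\epsilon)$ and $\lim_{\epsilon\to 0^+}\g_{f,c}(\alpha-\epsilon)$ exist at every $\alpha > 0$. In fact, because $\g_{f,c}$ takes only the finitely many values $\{f(\set) : \set \in \actionset\}$, monotonicity forces it to be piecewise constant, which makes the existence of the limits especially transparent. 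I would state this as the first paragraph.

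The main work is the right-continuity identity $\g_{f,c}(\alpha) = \lim_{\epsilon\to 0^+}\g_{f,c}(\alpha+\epsilon)$. The inequality ``$\leq$'' is immediate from Corollary~\ref{cor:mon-g}. For ``$\geq$'', I would pick a sequence $\alpha_k \searrow \alpha$ and, for each $k$, a set $\set_k \in \demandset^\star_{f,c}(\alpha_k)$ so that $f(\set_k) = \g_{f,c}(\alpha_k)$. Since there are only $2^n$ distinct subsets, some set $\set^\star$ appears infinitely often in the sequence; restricting to that subsequence, we have $f(\set^\star) = \lim_{k\to\infty}\g_{f,c}(\alpha_k) = \lim_{\epsilon\to 0^+}\g_{f,c}(\alpha+\epsilon)$.

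The key step is then to argue that $\set^\star \in \demandset_{f,c}(\alpha)$. For each of the infinitely many $k$'s we have $\alpha_k f(\set^\star) - c(\set^\star) \geq \alpha_k f(\set') - c(\set')$ for every $\set' \in \actionset$; passing $\alpha_k \to \alpha$ (the inequalities are preserved under limits because both sides are continuous in $\alpha_k$) yields $\alpha f(\set^\star) - c(\set^\star) \geq \alpha f(\set') - c(\set')$, so $\set^\star \in \demandset_{f,c}(\alpha)$. Since $\g_{f,c}(\alpha) = \max_{\set \in \demandset_{f,c}(\alpha)} f(\set)$, this gives $\g_{f,c}(\alpha) \geq f(\set^\star) = \lim_{\epsilon\to 0^+}\g_{f,c}(\alpha+\epsilon)$, completing the equality.

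I expect the subtle point to be handling the ``break ties in favor of the principal'' convention correctly: we do not need it for membership in $\demandset_{f,c}(\alpha)$, but we do rely on the identity $\g_{f,c}(\alpha) = \max_{\set \in \demandset_{f,c}(\alpha)} f(\set)$ (noted right after Proposition~\ref{prop:monotone}), which is precisely where the principal-favoring tie-break enters. No analogous argument works for the left limit --- and indeed at prices where the demand jumps, $\lim_{\epsilon \to 0^+}\g_{f,c}(\alpha-\epsilon)$ can be strictly smaller than $\g_{f,c}(\alpha)$, which is why the observation only asserts existence, not equality, on the left.
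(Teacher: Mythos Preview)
Your proof is correct. Both your argument and the paper's aim at the same intermediate goal---exhibiting a set $\set^\star$ with $f(\set^\star)$ equal to the right limit and showing $\set^\star \in \demandset_{f,c}(\alpha)$---but the routes differ. You extract $\set^\star$ by a pigeonhole/compactness argument on a sequence $\set_k \in \demandset^\star_{f,c}(\alpha_k)$ with $\alpha_k \searrow \alpha$, and then pass the agent-optimality inequalities to the limit; this is clean and uses only continuity of the utilities in $\alpha$ plus finiteness of $\actionset$. The paper instead fixes $\set$ to be a minimum-cost set among those with $f(\set)$ equal to the right limit, and argues by contradiction: assuming some $\set'$ strictly beats $\set$ at $\alpha$, it shows $f(\set') < f(\set)$ (invoking Proposition~\ref{prop:monotone} and the minimum-cost choice) and then constructs an explicit $\epsilon>0$ at which $\set'$ still strictly beats $\set$, contradicting monotonicity at $\alpha+\epsilon$. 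Your approach avoids the case analysis on $\Delta_f$ and the explicit $\epsilon$ construction; the paper's approach avoids sequences but needs the auxiliary ``minimum cost'' tie-break to rule out $\Delta_f = 0$. Both ultimately lean on the same place for the principal-favoring tie-break, namely $\g_{f,c}(\alpha) = \max_{\set \in \demandset_{f,c}(\alpha)} f(\set)$, exactly as you note.
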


To summarize, $\g_{f,c}$ is a monotone, right-continuous function whose image is contained in the image of $f$. Thus, it must be a ``step function'' with at most $2^n$ steps. This is cast in the following corollary.

\begin{corollary}\label{cor:structure}
    There exists some $k < 2^n$ and a series of $\alpha$ values $0=\alpha_0<\alpha_1 < \ldots < \alpha_k \leq  1$ such that for every  $x \in [0,1]$, $\g_{f,c}(x) = \g_{f,c}(\alpha)$, where $\alpha = \max\{\alpha_i \mid \alpha_i \leq x\}$. 
    Moreover, for every $0 \leq i < j \leq k$,  $\g_{f,c}(\alpha_i) < \g_{f,c}(\alpha_j)$.  
\end{corollary}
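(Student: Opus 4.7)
The plan is to derive the corollary directly from the two properties already established: monotonicity of $\g_{f,c}$ (Corollary~\ref{cor:mon-g}) together with right-continuity (Observation~\ref{obs:properties}), plus the fact that the image of $\g_{f,c}$ lies in the image of $f$. Since by definition $\g_{f,c}(\alpha) = f(\set)$ for some $\set \in \demandset^\star_{f,c}(\alpha)$, the image of $\g_{f,c}$ is contained in $\{f(\set) : \set \in \actionset\}$, a set of size at most $2^n$. Combined with monotonicity, this means the image of $\g_{f,c}$ is a finite totally ordered set $v_0 < v_1 < \cdots < v_k$ with $k+1 \leq 2^n$, hence $k < 2^n$. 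Note also that $\g_{f,c}(0) = f(\emptyset) = 0$, so $v_0 = 0$.

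Next I would define the candidate breakpoints as
\[
\alpha_i \;=\; \inf\{x \in [0,1] : \g_{f,c}(x) = v_i\} \qquad \text{for } i = 0, 1, \ldots, k.
\]
I would then verify that $\g_{f,c}(\alpha_i) = v_i$ by a short case analysis: either the infimum is attained in the set on the right and there is nothing to prove, or else by right-continuity (Observation~\ref{obs:properties}) and monotonicity $\g_{f,c}(\alpha_i) = \lim_{\epsilon \to 0^+} \g_{f,c}(\alpha_i + \epsilon) = v_i$, where the last equality uses that for every $\epsilon > 0$ there exists $x \in [\alpha_i, \alpha_i + \epsilon)$ with $\g_{f,c}(x) = v_i$, combined with monotonicity forcing $v_i \le \g_{f,c}(\alpha_i + \epsilon) \le v_{i+1}$ eventually $= v_i$ for small enough $\epsilon$ (if $i < k$, pick $\epsilon < \alpha_{i+1} - \alpha_i$; if $i = k$, any $\epsilon$ works). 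For $i = 0$ this gives $\alpha_0 = 0$, since $\g_{f,c}(0) = 0 = v_0$.

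From monotonicity of $\g_{f,c}$ and the definition of $\alpha_i$ as the infimum, it immediately follows that $\alpha_0 < \alpha_1 < \cdots < \alpha_k \le 1$ and that $\g_{f,c}$ takes the constant value $v_i$ on $[\alpha_i, \alpha_{i+1})$ for $i < k$, and on $[\alpha_k, 1]$. This is exactly the formula $\g_{f,c}(x) = \g_{f,c}(\max\{\alpha_i : \alpha_i \le x\})$ claimed in the statement. The ``moreover'' clause $\g_{f,c}(\alpha_i) < \g_{f,c}(\alpha_j)$ for $i < j$ is then immediate since $\g_{f,c}(\alpha_i) = v_i < v_j = \g_{f,c}(\alpha_j)$ by construction.

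The only subtlety, and the step I would write out most carefully, is the application of right-continuity to conclude that the infimum $\alpha_i$ lies in the set over which it is taken (i.e., $\g_{f,c}(\alpha_i) = v_i$ and not $v_{i-1}$). Everything else is elementary bookkeeping about step functions with finite image.
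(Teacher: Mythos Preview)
Your proposal is correct and takes essentially the same approach as the paper: the paper does not give a detailed proof of this corollary but simply notes in the preceding sentence that $\g_{f,c}$ is monotone (Corollary~\ref{cor:mon-g}), right-continuous (Observation~\ref{obs:properties}), and has image contained in the image of $f$, hence is a step function with at most $2^n$ steps. Your argument is exactly the fleshing-out of this one-line justification, and the details you supply (defining the $\alpha_i$ as infima of the level sets and using right-continuity to show the infimum is attained) are the natural ones.
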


Every $\alpha_i$ from the last corollary (except for $\alpha_0$) is said to be a {\em critical} value of $\alpha$. The set $\{\alpha_1, \ldots, \alpha_k\}$ is termed the {\em critical set} with respect to $f,c$, and is denoted by $\orbit{f,c}$; i.e., 
\begin{equation*}
    \orbit{f,c} = \{\alpha \in (0,1] \mid \g_{f,c}(\alpha) \neq \g_{f,c}(\alpha - \epsilon) \quad \forall \epsilon > 0\}. 
\end{equation*}

The principal's utility is $(1-\alpha)\cdot \g_{f,c}(\alpha)$. By Corollary~\ref{cor:structure}, $\g_{f,c}(\alpha)$ is constant for every $\alpha$ that lies between two consecutive critical $\alpha$'s. This implies that the principal can restrict the search space to $\alpha$ values in the critical set $\orbit{f,c}$ (or zero); that is:

\begin{observation} \label{obs:opt}
    Let $\alpha^\star$ be the optimal contract with respect to $f,c$. Then, $\alpha^\star \in \orbit{f,c} \cup \{0\}$.
\end{observation}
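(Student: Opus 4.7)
The plan is to exploit the step-function structure of $\g_{f,c}$ established in Corollary~\ref{cor:structure}, together with the explicit form of the principal's utility $u_p(\alpha) = (1-\alpha)\cdot \g_{f,c}(\alpha)$, to argue that nothing is gained by choosing $\alpha$ strictly between two consecutive critical values.

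First, I would recall the setup: by Corollary~\ref{cor:structure} there exist $0 = \alpha_0 < \alpha_1 < \dots < \alpha_k \leq 1$ (where $\{\alpha_1,\dots,\alpha_k\} = \orbit{f,c}$) such that for every $x \in [0,1]$ we have $\g_{f,c}(x) = \g_{f,c}(\alpha_i)$ for $\alpha_i = \max\{\alpha_j : \alpha_j \leq x\}$. In particular, $\g_{f,c}$ is constant on each half-open interval $[\alpha_i, \alpha_{i+1})$ (and on $[\alpha_k, 1]$).

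Next, the core computation: fix an arbitrary contract $\alpha \in [0,1]$ and let $\alpha_i \in \orbit{f,c} \cup \{0\}$ be the largest critical value (or $0$) that is $\leq \alpha$. Then $\g_{f,c}(\alpha) = \g_{f,c}(\alpha_i)$, so
\[
u_p(\alpha_i) \;=\; (1-\alpha_i)\cdot \g_{f,c}(\alpha_i) \;\geq\; (1-\alpha)\cdot \g_{f,c}(\alpha_i) \;=\; (1-\alpha)\cdot \g_{f,c}(\alpha) \;=\; u_p(\alpha),
\]
where the inequality uses $\alpha_i \leq \alpha$ together with $\g_{f,c}(\alpha_i) \geq 0$ (which holds because $f$ maps into $[0,1]$ and $\set=\emptyset$ is always available, giving $\g_{f,c} \geq 0$). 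Hence replacing $\alpha$ by $\alpha_i$ only weakly increases the principal's utility, and so some maximizer lies in $\orbit{f,c} \cup \{0\}$, which is exactly what the observation asserts.

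There is no real obstacle here: the whole argument is a one-line consequence of the fact that $(1-\alpha)$ is strictly decreasing in $\alpha$ while $\g_{f,c}$ is constant between consecutive critical values, so pushing $\alpha$ down to the left endpoint of its step can never hurt. The only subtlety worth stating explicitly is the convention about which element of $\orbit{f,c} \cup \{0\}$ one picks when there are multiple optima; the tie-breaking assumption that the agent favors the principal, already used to ensure right-continuity of $\g_{f,c}$ in Observation~\ref{obs:properties}, guarantees that $u_p(\alpha_i)$ really equals $(1-\alpha_i)\cdot \g_{f,c}(\alpha_i)$ and not some smaller value.
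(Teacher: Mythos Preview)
Your argument is correct and is exactly the paper's reasoning, just spelled out more carefully: the paper simply notes (in the paragraph preceding the observation) that the principal's utility is $(1-\alpha)\cdot \g_{f,c}(\alpha)$ and that $\g_{f,c}$ is constant between consecutive critical values by Corollary~\ref{cor:structure}, which is precisely the step-function/decreasing-factor argument you give.
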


For example, suppose $f$ is an additive success probability function. 
Then, for every contract $\alpha$, the best response of the agent is to select all actions $a\in \actions$ such that $\alpha \cdot f(a) \geq c(a)$.
Consequently, the critical set $\orbit{f,c}$ 
is $\{\frac{c(a)}{f(a)} \mid a \in \actions \}$.
Thus, the optimal contract $\alpha^\star$ is the best point among these critical $\alpha$ values.
As we shall see, things become more complex for richer classes of success probability functions.

\section{Gross Substitutes Functions} 
\label{sec:gs}

In this section, we devise a polynomial time algorithm for the case that the success probability function $f$ is gross-substitutes, which includes additive, unit demand, and matroid rank functions as special cases.

The two main insights that drive this result are as follows. First---as formalized in Section~\ref{sec:implementation}, with the key insight appearing in Lemma~\ref{lem:fsa}---we can exploit the greedy nature of the demand oracle problem for gross-substitutes $f$ to obtain a procedure that efficiently finds the next higher critical value of $\alpha.$ With this we can find an optimal contract by iterating over all critical values, provided that the number of critical values is polynomially bounded. Second---as formalized in Section~\ref{sec:critical-set}, with the key insight appearing in Lemma~\ref{lemma:ispotential}---we show that for gross-substitutes $f$, as we change $\alpha$, the demand set can only change in very specific ways. Namely, whenever the demand set changes, 
either a single action enters the demand set or an expensive action replaces a cheaper
one in a one-to-one fashion. This, in turn, enables a potential argument that lets us bound the number of critical values by $O(n^2)$.

\begin{theorem}\label{thm:gs-poly}
    Given a gross substitutes success probability function $f$, and an additive cost function $c$, one can compute the optimal contract $\alpha$ in polynomial time.
\end{theorem}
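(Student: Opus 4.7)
By Observation~\ref{obs:opt}, an optimal contract lies in $\orbit{f,c} \cup \{0\}$, so it suffices to enumerate the critical set $\orbit{f,c}$ together with, for each critical $\alpha$, a set $S_\alpha \in \demandset^\star(\alpha)$, and then return the $\alpha \in \orbit{f,c} \cup \{0\}$ maximizing $(1-\alpha) f(S_\alpha)$. Two ingredients are needed: a polynomial upper bound on $|\orbit{f,c}|$ when $f$ is gross substitutes, and an efficient procedure to traverse $\orbit{f,c}$ in increasing order of $\alpha$ using only value queries.

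\paragraph{Key lemma: $|\orbit{f,c}|$ is $O(n^2)$.} This is the main obstacle. The agent's problem at contract $\alpha$ is a demand query at the uniformly scaled prices $p_a = c(a)/\alpha$, so a critical $\alpha$ is a value at which the demand correspondence jumps as the price vector shrinks proportionally. I plan to establish a ``one-step'' characterization: for gross substitutes $f$, if $S$ is the demanded set just below a critical $\alpha$ and $S'$ is the demanded set at/above it, then $S'$ differs from $S$ either by a single \emph{insertion} ($S' = S \cup \{a'\}$ for some $a' \notin S$) or by a single \emph{upgrade} ($S' = (S \setminus \{a\}) \cup \{a'\}$ with $a \in S$, $a' \notin S$, and $c(a') > c(a)$). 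The proof invokes the gross substitutes exchange axiom in its M$^\natural$-concave form: an infinitesimal uniform price decrease applied at $S$ must yield a new maximizer obtainable by a single exchange, and the direction of cost movement in the upgrade case follows because the items leaving the demand are exactly those whose price decrement was outweighed by a better alternative. Order the actions by cost and assign rank $\pi(a_j) = j$; the potential $\Phi(S) = \sum_{a \in S} \pi(a)$ takes values in $\{0, 1, \ldots, n(n+1)/2\}$ and strictly increases (by at least $1$) across each insertion and each upgrade. Since by Corollary~\ref{cor:mon-g} consecutive $S_i$ along the sweep have strictly increasing $f$-values and hence are distinct sets, this yields $|\orbit{f,c}| = O(n^2)$. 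The delicate part is handling ``degenerate'' critical $\alpha$ at which several candidate moves are simultaneously tight; I would handle these via a generic cost perturbation and a limit argument, checking that the potential still strictly increases across each surviving critical point.

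\paragraph{Sweep algorithm.} Given the key lemma, I enumerate $\orbit{f,c}$ by sweeping $\alpha$ upward from $(\alpha_0, S_0) = (0, \emptyset)$. At step $i$, the next demanded set $S_{i+1}$ differs from $S_i$ by an insertion or an upgrade, giving $O(n^2)$ candidates $S'$. For each candidate, I compute the unique $\alpha_{S'}$ satisfying the indifference equation $\alpha\bigl(f(S') - f(S_i)\bigr) = c(S') - c(S_i)$ using one value query for $f(S')$ (with $f(S_i)$ cached from the previous step). Setting $\alpha_{i+1} = \min_{S'} \{\alpha_{S'} : \alpha_{S'} > \alpha_i\}$ and taking $S_{i+1}$ to be the minimizing $S'$ with the largest $f(S')$ (to respect the principal-favorable tie-breaking in $\demandset^\star$) gives the next critical pair. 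The sweep runs for $O(n^2)$ iterations, each using $O(n^2)$ value queries, so the total query complexity is $\operatorname{poly}(n)$; returning $\arg\max_i (1-\alpha_i) f(S_i)$ yields the optimal contract.
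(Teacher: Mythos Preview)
Your high-level strategy matches the paper's: bound $|\orbit{f,c}|$ via the rank-by-cost potential $\Phi(S)=\sum_{a\in S} r_a$, after establishing that for \emph{generic} costs consecutive demanded sets differ by a single insertion or a cost-increasing swap, and then extend the bound to arbitrary costs by a perturbation argument. The paper does exactly this, though the one-step property needs more than an appeal to M$^\natural$-concavity: the paper's argument runs the greedy demand algorithm with a specific tie-breaking rule and applies the GS price-increase definition to two carefully chosen cost perturbations (raise $c(a_1)$, then additionally $c(a_2)$) to pin down exactly which element enters and which leaves.

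The real gap is in your sweep. You take $S_{i+1}$ to be the minimizing single-step candidate $S'$, but the one-step property is established only for generic costs; with ties, several moves can coalesce at one critical $\alpha$ and the demand can jump by more than one step. Concretely, let $f$ be additive with $f(\{1\})=f(\{2\})=0.3$, $f(\{3\})=0.4$ and $c(1)=c(2)=0.15$, $c(3)=0.2$. The unique critical point is $\alpha=\tfrac12$, where the demand jumps from $\emptyset$ to $\{1,2,3\}$. Your sweep from $(0,\emptyset)$ finds $\alpha_1=\tfrac12$ but sets $S_1$ to a singleton (the largest-$f$ candidate is $\{3\}$), and then stalls because every remaining single-step indifference point also equals $\tfrac12$; you output principal utility $(1-\tfrac12)\cdot 0.4=0.2$ instead of the correct $(1-\tfrac12)\cdot 1=0.5$. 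The paper's $\successor_{f,c}$ computation is different and does not rely on the one-step property: it runs greedy to obtain the ordered set $S_\alpha$, enumerates the $O(n^2)$ values of $\alpha'$ at which the greedy trace would first diverge, and then \emph{verifies} each candidate by evaluating $\g_{f,c}(\alpha')$ (again via greedy), which is correct for arbitrary costs. A minimal repair of your sweep is to resolve an actual demand query at each $\alpha_i$ rather than carrying forward the candidate $S'$; without that step, the algorithm as written is incorrect.
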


To prove Theorem~\ref{thm:gs-poly}, we present Algorithm~\ref{alg:contract}, which computes the optimal contract for any success probability function $f$, and show that it can be implemented in polynomial time for the case where $f$ is gross substitutes. 

Algorithm~\ref{alg:contract} uses the {\em successor} function, which, for any value in $[0,1]$, returns the next critical $\alpha$. Formally, the successor is a function $\successor_{f,c}: [0,1] \rightarrow [0,1]\cup \{\nill\}$ which, for any value of $\alpha \in [0,1]$, returns the smallest $\alpha' > \alpha$ such that $\alpha' \in \orbit{f,c}$ or $\nill$ if such an $\alpha'$ does not exist.

\begin{algorithm}
\caption{Optimal contract}
\label{alg:contract}
\begin{algorithmic}
\STATE{\textbf{Result:} Optimal contract $\alpha$}
\STATE{\textbf{Input:} Success probability function $f$, and costs $c$}
\STATE $\alpha^\star = \alpha^{(0)} = 0$
\STATE $t=1$
\STATE $\alpha^{(1)}=\successor_{f,c}(\alpha^{(0)})$ 
\WHILE{$\alpha^{(t)} \neq \nill$} 
\IF{$(1-\alpha^{(t)} ) \cdot \g_{f,c}(\alpha^{(t)}) > (1-\alpha^\star ) \cdot \g_{f,c}(\alpha^\star)$} 
\STATE $\alpha^\star= \alpha^{(t)}$
\ENDIF	
\STATE $t=t+1$
\STATE $\alpha^{(t)} =\successor_{f,c}(\alpha^{(t-1)})$
\ENDWHILE
\RETURN $\alpha^\star$
\end{algorithmic}
\end{algorithm}

Algorithm~\ref{alg:contract} is a generic algorithm for finding the optimal contract. The algorithm goes over all critical  $\alpha$'s and returns the best one among them (which by Observation~\ref{obs:opt} is the optimal contract). 
It assumes the existence of $\successor_{f,c}$ and $\g_{f,c}$ oracles. Its complexity is bounded by the size of the critical set $\orbit{f,c}$ and the complexity required for computing  $\successor_{f_c}$ and $\g_{f,c}$.

\begin{theorem}
    \label{th:poly-opt-contract}
    Algorithm~\ref{alg:contract} returns the optimal contract. Its time complexity is bounded by $|\orbit{f,c} |$ multiplied by the complexity of computing $\successor_{f,c}$ and $\g_{f,c}$.
\end{theorem}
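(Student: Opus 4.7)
The plan is to verify that Algorithm~\ref{alg:contract} enumerates exactly the candidates identified by Observation~\ref{obs:opt}, namely $\orbit{f,c} \cup \{0\}$, and then argue that the bookkeeping in the \textbf{if} block correctly tracks the argmax of the principal's utility $(1-\alpha)\cdot \g_{f,c}(\alpha)$ over this finite set. Since $\orbit{f,c} = \{\alpha_1 < \alpha_2 < \cdots < \alpha_k\}$ by Corollary~\ref{cor:structure}, iterating $\successor_{f,c}$ starting from $\alpha^{(0)}=0$ produces the sequence $\alpha^{(t)} = \alpha_t$ for $t=1,\dots,k$ (by definition of the successor as the smallest critical value strictly greater than the input), and the loop terminates when $\successor_{f,c}(\alpha_k) = \nill$. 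Hence every critical value is visited exactly once.

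Next, I would check that the stored value $\alpha^\star$ after the loop satisfies $(1-\alpha^\star)\cdot \g_{f,c}(\alpha^\star) = \max_{\alpha \in \orbit{f,c} \cup \{0\}} (1-\alpha)\cdot \g_{f,c}(\alpha)$. This is a straightforward induction on $t$: the invariant is that after processing $\alpha^{(t)}$, the variable $\alpha^\star$ holds an argmax over $\{0, \alpha^{(1)}, \ldots, \alpha^{(t)}\}$. The base case $t=0$ holds by initialization since $(1-0)\cdot \g_{f,c}(0) = \g_{f,c}(0)$ is the only candidate; the inductive step is exactly the comparison inside the \textbf{if}. Combining this invariant at $t=k$ with Observation~\ref{obs:opt} then yields that $\alpha^\star$ is an optimal contract.

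For the complexity claim, I would observe that the \textbf{while} loop executes exactly $|\orbit{f,c}|$ iterations, and each iteration performs one call to $\successor_{f,c}$, and (in the worst case) two evaluations of $\g_{f,c}$ to compare utilities; all other operations (arithmetic on a constant number of scalars) take $O(1)$ time. Summing gives the stated bound.

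There is no real obstacle in this proof: the algorithm is a textbook enumerate-and-compare over a provably-sufficient finite candidate set, and the heavy lifting is entirely deferred to the oracles $\successor_{f,c}$ and $\g_{f,c}$, whose efficient implementation for gross substitutes is the subject of the subsequent sections. The only minor point worth stating explicitly is that the $\alpha=0$ candidate is implicitly covered by the initialization $\alpha^\star = \alpha^{(0)} = 0$, which is consistent with Observation~\ref{obs:opt}.
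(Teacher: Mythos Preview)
Your proposal is correct and follows the same reasoning as the paper, which in fact does not give a formal proof of this theorem at all: the paper simply states in the paragraph preceding the theorem that the algorithm ``goes over all critical $\alpha$'s and returns the best one among them (which by Observation~\ref{obs:opt} is the optimal contract)'' and that its complexity is bounded by the size of the critical set times the oracle costs. Your write-up makes this explicit via the enumeration-by-$\successor_{f,c}$ argument and the argmax invariant, which is exactly the intended justification.
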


In the remainder of this section we show that for every gross substitutes success probability function $f$, Algorithm~\ref{alg:contract} runs in polynomial time. 
In Section~\ref{sec:implementation} we show that $\successor_{f,c}$ and $\g_{f,c}$ can be implemented in polynomial time, and in Section~\ref{sec:critical-set} we show that $|\orbit{f,c}| \leq \frac{n (n+1)}{2}$.

\subsection{\texorpdfstring{Implementation of $\successor_{f,c}$ and $\g_{f,c}$}{Implementation of succ and V}}
\label{sec:implementation}

It is well known that a demand query for gross substitutes functions can be computed in polynomial time in $n$ by the following greedy algorithm: As long as there is an action with non-negative marginal utility (added value minus cost of the action), pick an action with maximal marginal utility (see, e.g., \cite{PaesLeme17}). Generally, ties can be broken arbitrarily in this greedy procedure. For our purposes, it will be helpful to consider a particular tie-breaking rule, as follows.

\begin{definition}
    \label{def:greedy}
    The ordered demanded set with respect to a contract $\alpha$, denoted $\setal \in \demandset_{f,c}(\alpha)$, is the ordered set obtained by the greedy algorithm with the following tie-breaking rule: \begin{itemize}
        \item Among multiple actions with the same (highest) marginal utility, pick an action with maximal cost. Among these, pick the action with the smallest index. 
        \item If the highest marginal utility is 0, pick such an action. 
    \end{itemize} 
\end{definition}

We refer to the greedy algorithm with the  tie-breaking rule in Definition~\ref{def:greedy} as \greedy. Let $\setal[t]$ denote the action selected by \greedy\ in the $t$-th iteration, and let $\setal^t$ denote the set of actions selected within the first $t$ iterations, that is, $\setal^t = \{\setal[1],\ldots,\setal[t]\}$.
 
We first claim that the set $\setal$  maximizes the principal's utility among all sets in the agent's demand.
\begin{proposition}
    For every $\alpha$, it holds that $\setal \in \demandset^\star_{f,c}(\alpha)$.
\end{proposition}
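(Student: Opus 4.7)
Since Definition~\ref{def:greedy} already asserts $\setal \in \demandset_{f,c}(\alpha)$---a consequence of the standard correctness of the greedy algorithm for demand queries on gross substitutes functions (see \cite{PaesLeme17}), which holds under any tie-breaking rule---the remaining task is to show that $\setal$ also maximizes the principal's utility among demanded sets. The key reduction is that for any two $S, T \in \demandset_{f,c}(\alpha)$, the identity $\alpha f(S) - c(S) = \alpha f(T) - c(T)$ gives $\alpha(f(S) - f(T)) = c(S) - c(T)$, so maximizing $u_p(\cdot, \alpha) = (1-\alpha)f(\cdot)$ over $\demandset_{f,c}(\alpha)$ is equivalent to maximizing $c(\cdot)$. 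It suffices to prove that $\setal$ attains $\max_{S \in \demandset_{f,c}(\alpha)} c(S)$.

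I would establish this via a cost perturbation. Define $c_\epsilon(i) = c(i) - \epsilon\,c(i) - \epsilon^2 \cdot 2^{-i}$ for $\epsilon > 0$. Since $\actions$ is finite and $f$ takes finitely many values, there is a threshold $\epsilon_0 > 0$ such that for every $\epsilon \in (0, \epsilon_0)$ two claims hold: (a) greedy applied to $c_\epsilon$ with arbitrary tie-breaking performs exactly the same sequence of selections as greedy on $c$ with the tie-breaking rule of Definition~\ref{def:greedy}; and (b) every set in $\demandset_{f, c_\epsilon}(\alpha)$ belongs to $\demandset_{f,c}(\alpha)$ and maximizes $c(\cdot)$ there.

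Claim (a) follows by induction on the iteration count, because $\alpha(f(\setal^{t-1} \cup \{i\}) - f(\setal^{t-1})) - c_\epsilon(i)$ equals the unperturbed marginal utility plus $\epsilon c(i) + \epsilon^2 \cdot 2^{-i}$, so the perturbation terms act purely as lexicographic tie-breakers (first maximizing $c(i)$, then minimizing index $i$), exactly matching Definition~\ref{def:greedy}; the fact that only finitely many marginal-utility values can arise gives a uniform threshold. Claim (b) follows by the standard perturbation argument: any $T \in \demandset_{f,c_\epsilon}(\alpha)$ must first be a maximizer of the unperturbed objective $\alpha f(\cdot) - c(\cdot)$ (otherwise a finite positive gap would be swamped by the $O(\epsilon)$ perturbation), and among such maximizers, the first-order term $\epsilon c(T)$ then forces $T$ to maximize $c(\cdot)$. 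Combining (a) and (b): the greedy output on $c_\epsilon$, which equals $\setal$ by (a), lies in $\demandset_{f,c_\epsilon}(\alpha)$ by correctness of greedy on gross substitutes (applied to the cost function $c_\epsilon$, as $f$ itself is unchanged), and hence in $\demandset^\star_{f,c}(\alpha)$ by (b).

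The main obstacle will be executing claim (a) carefully: the greedy procedure's stopping rule---``continue while some action has non-negative marginal utility, and pick such an action even when its marginal utility is $0$''---must be shown to trigger at the same step on $c$ and $c_\epsilon$, and the equality of intermediate sets $\setal^{t-1}$ across both runs must be maintained inductively. Both reductions are finite: only finitely many marginal-utility values, marginal-cost values, and set comparisons arise, so a single threshold $\epsilon_0 > 0$ suffices uniformly.
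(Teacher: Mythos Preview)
Your argument is correct, but the paper takes a shorter and somewhat different route. Instead of perturbing the cost vector, the paper perturbs the contract parameter: it observes that for sufficiently small $\epsilon>0$ the output of \greedy\ at $\alpha+\epsilon$ coincides with $\setal$ (the tie-breaking rule ``highest cost first'' is precisely the rule that wins when $\alpha$ increases, because among tied actions the one with larger cost has larger marginal $f$-value). Hence $\setal\in\demandset_{f,c}(\alpha+\epsilon)$, and the already established monotonicity (Proposition~\ref{prop:monotone}) immediately yields $f(\setal)\geq f(S)$ for every $S\in\demandset_{f,c}(\alpha)$, which is the desired conclusion.

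Your approach trades the appeal to Proposition~\ref{prop:monotone} for a direct, self-contained argument. The cost perturbation $c_\epsilon(i)=c(i)-\epsilon\,c(i)-\epsilon^2\cdot 2^{-i}$ neatly encodes the entire lexicographic tie-breaking of Definition~\ref{def:greedy} (including the index rule, which the paper's one-line justification leaves implicit), and the reduction ``maximize $u_p$ on the demand set $\Leftrightarrow$ maximize $c$'' is a nice structural observation. The paper's proof is terser because it reuses the monotonicity machinery already in place; yours is more explicit and would stand alone without Section~\ref{sec:insights}. Both rely on the same underlying fact---that the specific tie-breaking makes \greedy's output stable under the ``right'' infinitesimal perturbation---just instantiated differently.
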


\begin{proof}
    Consider a contract $\alpha' =\alpha+\epsilon$ for a sufficiently small $\epsilon$. 
    Let $B_t$ be the set of actions for which \greedy~breaks ties at step $t$ (with respect to $\alpha$),  i.e., $$B_t=\left\{a\in \actions \setminus  \setal^{t-1} \mid f(a\mid \setal^{t-1}) -\frac{c(a)}{\alpha} = f(\setal[t]\mid \setal^{t-1}) -\frac{c(\setal[t])}{\alpha}\right\}.$$ 
    In what follows we
    prove by induction on $t$, that for sufficiently small $\epsilon>0$, \greedy~on $\alpha'$ returns $S_\alpha$. 
    Since $\epsilon>0$, by  Definition~\ref{def:greedy}, when  \greedy~runs on $\alpha'>\alpha$, we get that among actions in $B_t$, it selects at round $t$ action $\setal[t]$. This is since the increase in the marginal value of an action $a\in B_t$ from $\alpha$ to $\alpha'$ is $\frac{c(a)}{\alpha} -\frac{c(a)}{\alpha'}$, which is strictly monotone in $c(a)$, and the tie-breaking rule chooses the one with the highest cost in a consistent way.
    We know that in \greedy~on $\alpha$ at time $t$, every action $a$ in  $B_t$ is strictly preferred over every action $a'$ in $(A \setminus \setal^{t-1}) \setminus B^t$, i.e., 
    $$ 
        f(a\mid \setal^{t-1}) -\frac{c(a)}{\alpha } > f(a'\mid \setal^{t-1}) -\frac{c(a')}{\alpha }.
    $$ 
    Thus, by continuity of the marginal utility of actions in $\alpha$, we get that for small enough value of $\epsilon>0$, all actions in  $B_t$ are strictly preferred over all actions in  $(A \setminus \setal^{t-1}) \setminus B^t$ with respect to $\alpha'$.
    Thus, at each step $t$, \greedy~on $\alpha'$ selects action $\setal[t]$, which implies that
    $S_\alpha = S_{\alpha'}$. 
    Combining with Proposition~\ref{prop:monotone}, this shows that $f(\setal) \geq f(\set)$ for every $\set \in \demandset_{f,c}(\alpha)$, implying that $\setal \in \demandset^\star_{f,c}(\alpha)$.
\end{proof}

We next show that $\nextf(\alpha)$ can be computed in polynomial time.

\begin{lemma}\label{lem:fsa}
    Given a gross substitutes function $f$, an additive cost function $c$, and $\alpha \geq 0 $, one can compute $\nextf(\alpha)$ in polynomial time.
\end{lemma}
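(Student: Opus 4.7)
The plan is to compute $\nextf(\alpha)$ by deriving a polynomial-size family of candidate thresholds from a single execution of \greedy\ at $\alpha$, and then scanning these in increasing order to locate the smallest one at which $\g_{f,c}$ strictly increases.

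First, I would invoke \greedy\ at $\alpha$ to produce the ordered demanded set $\setal = (a_1, \ldots, a_k)$ together with its prefixes $P_t = \{a_1, \ldots, a_{t-1}\}$ and the marginals $\Delta_t = f(P_t \cup \{a_t\}) - f(P_t)$ (setting $\Delta_{k+1} = 0$ for the step at which greedy terminates). Then, for every step $t \in \{1, \ldots, k+1\}$ and every action $b \notin P_t$ with $b \neq a_t$, I would compute $\delta_{t,b} = f(P_t \cup \{b\}) - f(P_t)$ using one value query and solve the tie-equation $\alpha'(\delta_{t,b} - \Delta_t) = c(b) - c(a_t)$ (with the degenerate form $\alpha' \delta_{t,b} = c(b)$ for $t = k+1$) to obtain a candidate $\alpha_{t,b}$. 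This generates $O(n^2)$ candidates using $O(n^2)$ value queries. I would then sort those candidates strictly exceeding $\alpha$ and, for each in turn, re-invoke \greedy\ at that threshold and compare $\g_{f,c}$ to $\g_{f,c}(\alpha)$; I return the first candidate at which a strict increase occurs, and \nill\ if none does.

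Correctness rests on the observation that at $\alpha^\star = \nextf(\alpha)$ the greedy runs at $\alpha$ and at $\alpha^\star$ must diverge for the first time at some step $t^\star$ at which the prefix $P_{t^\star}$ is identical in both runs; at that step the newly selected action $b$ must be exactly tied with the old selection $a_{t^\star}$ under contract $\alpha^\star$ (with the tie resolved by Definition~\ref{def:greedy}, favoring higher cost and then lower index), which forces $\alpha^\star = \alpha_{t^\star, b}$ and places it in the enumerated set. The main obstacle I expect is handling \emph{silent} candidate thresholds at which the greedy output merely changes order but $\g_{f,c}$ does not jump --- for instance, two distinct greedy orderings producing the same final set. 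These are bypassed by iterating the procedure: if the smallest candidate above $\alpha$ is silent, advance $\alpha$ to that value, recompute the prefixes and the candidate set, and repeat. Polynomial termination of the iteration leans on the gross-substitutes-specific structural property that whenever the greedy's output changes across a threshold it differs from the previous one by at most a single swap (the same insight that bounds $|\orbit{f,c}|$ by $O(n^2)$ in Section~\ref{sec:critical-set}), which also bounds the number of iterations, and hence the total work, polynomially in $n$.
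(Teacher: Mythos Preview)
Your core algorithm---running \greedy\ once at $\alpha$, enumerating the $O(n^2)$ tie-ratios from its prefixes, and returning the smallest one strictly above $\alpha$ at which $\g_{f,c}$ jumps---is exactly the paper's approach, and your correctness sketch (at the first step $t^\star$ where the ordered runs at $\alpha$ and at $\alpha^\star$ diverge, the common prefix forces an exact tie at $\alpha^\star$ between $a_{t^\star}$ and the new choice, so $\alpha^\star$ is one of the enumerated ratios) is the paper's argument as well. The paper does not iterate: it argues that $\nextf(\alpha)$ is always among the candidates from the \emph{single} run at $\alpha$, then just scans them.

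The iteration you append is therefore unnecessary, and its termination argument has a genuine gap. If, as your own correctness paragraph asserts, $\alpha^\star=\nextf(\alpha)$ already appears among the candidates computed from the run at $\alpha$, then scanning those candidates in increasing order and skipping the silent ones finds it; there is no need to advance $\alpha$ and recompute prefixes. More importantly, your polynomial bound on the number of iterations appeals to the single-swap insight of Section~\ref{sec:critical-set}, but the potential $\poten$ used there strictly increases only at \emph{critical} points. At a silent threshold---where \greedy's ordering changes but the demanded set does not---$\poten(\setal)$ is unchanged, so that argument gives no control over how many silent thresholds lie between $\alpha$ and $\nextf(\alpha)$. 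You would need a separate bound on the number of ordering-change points of \greedy, which you have not supplied.
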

\begin{proof}
    Consider the case where $\nextf(\alpha)\neq \nill$. Let $\alpha' = \nextf(\alpha)$, let $\setal$ and $\setalp$ be the respective sets returned by \greedy, and let $d=|\setal|$. Observe that there has to be an $i \in [d]$ such that $\setal[i] \neq \setalp[i]$ or that $\lvert \setalp \rvert > d$. In the former case, consider the smallest such $i$. We have $\setal^{i-1} = \setalp^{i-1}$ and $(\alpha' - \epsilon) f(\setal[i] \mid \setal^{i-1}) - c(\setal[i]) \geq (\alpha' - \epsilon) f(\setalp[i] \mid \setal^{i-1}) - c(\setalp[i])$ for all sufficiently small $\epsilon > 0$ as well as $\alpha' f(\setalp[i] \mid \setalp^{i-1}) - c(\setalp[i]) \geq \alpha' f(\setal[i] \mid \setalp^{i-1}) - c(\setal[i])$. This implies that $\alpha' = \frac{c(\setalp[i]) - c(\setal[i])}{f(\setalp[i] \mid \setal^{i-1}) - f(\setal[i] \mid \setal^{i-1})}$. Analogously, if $\lvert \setalp \rvert > d$, then $\alpha' = \frac{c(\setalp[d+1])}{f(\setalp[d+1] \mid \setal)}$.
	
    So, in order to compute $\nextf(\alpha)$, it suffices to consider all (finite) ratios 
    \begin{align*}
    \frac{c(a) - c(\setal[i])}{f(a \mid \setal^{i-1}) - f(\setal[i] \mid \setal^{i-1})} \quad \text{and} \quad \frac{c(a)}{f(a \mid \setal)}
    \end{align*}
    for all $a \in \actions$ and $i \in [d]$. The smallest one that is bigger than $\alpha$ and has a larger value of $\g_{f,c}$ is $\nextf(\alpha)$. If there is none, then $\nextf(\alpha)=\nill$. Since there are at most $n^2$ such ratios, this can be implemented in polynomial time.
\end{proof}

\subsection{Bounding the Size of the Critical Set}
\label{sec:critical-set}

In this section we establish an upper bound on the size of the critical set, for every gross substitutes function and every additive cost function.
A matching lower bound appears in Proposition~\ref{prop:gs-lower-bound} in the appendix.

\begin{theorem}
\label{theorem:gs}
    Let $f$ be a gross substitutes function, and let $c$ be an additive cost function. 
    It holds that $|\orbit{f,c}| \leq \frac{n (n+1)}{2}$. 
\end{theorem}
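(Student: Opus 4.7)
The plan is to bound $|\orbit{f,c}|$ by a potential function argument on the \greedy-demanded set. I will define a cost-rank potential $\poten$ that (i) always takes integer values in $\{0, 1, \ldots, n(n+1)/2\}$, and (ii) strictly increases whenever $\alpha$ crosses a critical value. Together these immediately yield $|\orbit{f,c}| \leq n(n+1)/2$.

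Concretely, fix an ordering $a_1, \ldots, a_n$ of $\actions$ with $c(a_1) \leq c(a_2) \leq \cdots \leq c(a_n)$, breaking cost-ties by index, and define $r(a_i) = i$ and $\poten(\set) = \sum_{a \in \set} r(a)$. Then $\poten(\emptyset) = 0$, $\poten(\actions) = n(n+1)/2$, and $\poten$ is integer-valued on $\actionset$. For a contract $\alpha$, write $\setal$ for the ordered set returned by \greedy.

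The main technical step---and the main obstacle---is to show that $\poten(\setal)$ strictly increases whenever $\alpha$ crosses a value in $\orbit{f,c}$. I would decompose the transition from $\setal$ just below a critical $\alpha^{(t)}$ to $\setal$ at $\alpha^{(t)}$ into a sequence of elementary moves of two kinds: either (a) appending a new action at the end of the greedy order, which raises $\poten$ by $r(a) \geq 1$, or (b) replacing one action by a strictly more expensive one, which raises $\poten$ by a positive rank difference. Establishing this elementary-move structure is where the gross substitutes property is essential. At a critical $\alpha^{(t)}$ a tie appears between some action $a$ currently placed by \greedy and a candidate $b$ with equal marginal utility; the gross substitutes exchange property ensures that the configurations demanded at $\alpha^{(t)}$ are related by one-for-one swaps of tied items, and \greedy's tie-breaking rule (pick the highest-cost action among ties) selects the more expensive of the two, so the transition is a swap from a cheaper to a strictly more expensive action. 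The delicate point---handling the case where the greedy trajectories diverge after the first swap, so that later iterations of \greedy operate in a different context---I would argue by induction on the step at which the two trajectories first differ, invoking the GS exchange property on the residual problem to show that the same one-step structure recurs. Chaining these strict integer increases across all $k = |\orbit{f,c}|$ critical values then yields $k \leq \poten(\actions) - \poten(\emptyset) = n(n+1)/2$.
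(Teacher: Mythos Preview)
Your overall strategy matches the paper's: both use the cost-rank potential $\poten(\set)=\sum_{a\in\set} r_a$ and argue that $\poten(\setal)$ strictly increases across each critical value, which bounds $|\orbit{f,c}|$ by $n(n+1)/2$. The gap is in the technical execution of the ``main obstacle'' you correctly identify.

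First, you work directly with an arbitrary cost function. The paper does not: it first reduces to \emph{generic} costs, meaning that for every $\alpha>0$ there is at most one unordered pair $(a_1,a_2)$ (possibly with $a_2=\nill$) for which a marginal-utility tie can occur. Genericity guarantees that at each critical $\alpha$ the \greedy\ trajectories for $\alpha$ and $\alpha-\epsilon$ differ at a \emph{single} step, so the transition is one elementary move, not a ``sequence''. The paper then handles non-generic $c$ by a random $\epsilon$-perturbation $\hat c$ that is generic with probability~$1$ and can only enlarge the critical set (Lemma~4.8). Without this reduction, several swaps can happen simultaneously at the same $\alpha$, and your proposal does not explain why the net effect of all of them is potential-increasing.

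Second, even in the single-swap case your ``induction on the first differing step, invoking GS on the residual problem'' is where the proof actually lives, and it is not a straightforward induction. In the paper's Case~2 (tie between $a_1$ and $a_2$ with $c(a_1)>c(a_2)$), showing that $S_{\alpha-\epsilon}$ equals either $S_\alpha\setminus\{a_1\}$ or $(S_\alpha\setminus\{a_1\})\cup\{a_2\}$ requires a careful double-perturbation argument: raise $c(a_1)$ by a small $\delta$ to obtain $c'$, run \greedy\ on $(f,c',\alpha)$ to identify $S_{\alpha-\epsilon}$, then raise $c'(a_2)$ by the same $\delta$ to obtain $c''$, and use the gross-substitutes definition \emph{twice} (once from $c$ to $c'$, once from $c'$ to $c''$) to pin down the containment relations. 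The point is that after the first swap the later greedy steps are in a different context, and the GS \emph{exchange} property on marginals does not by itself control what \greedy\ does downstream; one really needs the demand-based GS definition applied to the perturbed price vectors. Your proposal gestures at this difficulty but does not supply the argument.
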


To prove the theorem, we proceed as follows. In Section~\ref{sec:generic-cost}, we introduce the notion of generic cost functions and prove the theorem for this subclass. Specifically, a generic cost function ensures that for every contract $\alpha>0$, the \greedy\ algorithm encounters at most one tie between two actions. Under this assumption, we characterize how the agent's best response changes as $\alpha$ increases. 
In Section~\ref{sec:arbitrary-cost}, we show that although the structure of best response changes might be different under arbitrary cost functions, the analysis of generic cost functions still bounds the number of critical points.
To this end, we employ a perturbation argument, showing that generalizing beyond generic cost functions can only reduce the number of critical points.

\subsubsection{Generic Cost Functions}
\label{sec:generic-cost}

We wish to define the notion of a {\em generic} cost function with respect to a success probability function so that for every $\alpha$, \greedy\ has at most one round where tie breaking occurs.
In order to formally define this notion, we first define sets of candidate critical values.

\begin{definition} \label{def:candidates}
    Given a function $f$, a cost function $c$, and an (unordered) pair of actions $a_1,a_2\in \actions \cup \{\nill\}$ such that $a_1 \neq a_2$, let 
    \begin{align*}
        \cand{f,c}{a_1,a_2}=\{\alpha \mid        \exists_{\set_1,\set_2 \subseteq \actions}~\alpha f(a_1 \mid \set_1) - c(a_1) = 
	\alpha f(a_2 \mid \set_2) - c(a_2) \},
    \end{align*}
    where $c(\nill)=0$ and $f(\nill \mid S)=0$ for every set $S$.
\end{definition}

That is, $\cand{f,c}{a_1,a_2}$ is the set of values of $\alpha$ such that the marginal utility of $a_1$ with respect to some set $\set_1$ equals the marginal utility of $a_2$ with respect to some set $\set_2$.
These are candidate values of $\alpha$ for which \greedy\ may be indifferent between adding $a_1$ and adding $a_2$, where $a_1$ (or $a_2$) may be $\nill$.

We next observe that only $\alpha$ values that belong to some $\cand{f,c}{a_1,a_2}$ may be critical.

\begin{observation} \label{obs:must-cand}
    If for every $a_1,a_2\in \actions \cup \{\nill\}$ such that $a_1 \neq a_2$ it holds that $\alpha \not\in \cand{f,c}{a_1,a_2}$, then $\alpha \not\in \orbit{f,c}$.		
\end{observation}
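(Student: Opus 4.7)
The plan is to prove the contrapositive: assuming $\alpha \not\in \cand{f,c}{a_1,a_2}$ for every distinct pair $a_1, a_2 \in \actions \cup \{\nill\}$, I will show $\alpha \not\in \orbit{f,c}$ by arguing that the ordered demanded set $\setal$ is unchanged on an open interval around $\alpha$, which by the preceding proposition forces $\g_{f,c}$ to be locally constant at $\alpha$ as well.

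First, I would trace \greedy\ executing at contract $\alpha$, producing actions $\setal[1], \ldots, \setal[d]$ with prefixes $\setal^{0}, \ldots, \setal^{d} = \setal$. At iteration $t$, \greedy\ compares the marginal utilities $\alpha f(a \mid \setal^{t-1}) - c(a)$ over $a \in \actions \setminus \setal^{t-1}$, and picks one that is both non-negative and maximal (or stops when none is non-negative). Specializing Definition~\ref{def:candidates} to $\set_1 = \set_2 = \setal^{t-1}$, the hypothesis on $\alpha$ yields two consequences: applying it to pairs $a_1 \neq a_2$ in $\actions$ rules out ties between any two non-negative marginal utilities at this step, and applying it with $a_2 = \nill$ (so $c(\nill)=0$, $f(\nill \mid \setal^{t-1})=0$) rules out the possibility that any marginal utility equals exactly zero. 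Therefore at every iteration both the identity of \greedy's chosen action and the continue-or-stop decision are determined by \emph{strict} inequalities.

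Second, since each marginal utility $\alpha' f(a \mid S) - c(a)$ is linear, hence continuous, in $\alpha'$, the finitely many strict inequalities recorded above persist on some open interval $(\alpha - \epsilon, \alpha + \epsilon)$, obtained by intersecting the individual open neighborhoods on which each inequality remains strict. On this interval \greedy\ executes identically at every $\alpha'$, so $\set_{\alpha'} = \setal$ as sets. By the preceding proposition, $\set_{\alpha'} \in \demandset^\star_{f,c}(\alpha')$, whence $\g_{f,c}(\alpha') = f(\set_{\alpha'}) = f(\setal) = \g_{f,c}(\alpha)$ throughout the interval. Choosing $\alpha' = \alpha - \epsilon/2$ witnesses $\g_{f,c}(\alpha - \epsilon/2) = \g_{f,c}(\alpha)$, so $\alpha \not\in \orbit{f,c}$.

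The delicate point to get right is the $\nill$ case: if some action's marginal utility at $\alpha$ were exactly zero, arbitrarily small perturbations could toggle whether it enters $\setal$, and local constancy of $\setal$ would fail. Quantifying the hypothesis over pairs involving $\nill$ is precisely what excludes this boundary scenario, and together with the standard strict-inequality-plus-continuity argument above makes the local-constancy step go through.
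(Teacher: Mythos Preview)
Your proof is correct and follows essentially the same approach as the paper: trace \greedy\ at $\alpha$, observe that the hypothesis forces every comparison (including the continue-or-stop decision, via the $\nill$ case) to be resolved by a strict inequality, and use continuity of the finitely many linear comparisons to conclude that \greedy\ executes identically for $\alpha-\epsilon$. Your treatment is in fact slightly more explicit than the paper's about the termination step and about invoking the proposition that $\setal \in \demandset^\star_{f,c}(\alpha)$ to pass from $\set_{\alpha'}=\setal$ to $\g_{f,c}(\alpha')=\g_{f,c}(\alpha)$.
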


\begin{proof}
    Since $\alpha \not\in \cand{f,c}{a_1,a_2}$ for any $a_1,a_2\in \actions \cup \{\nill\}$, $a_1 \neq a_2$, the execution of \greedy\ does not involve any tie breaking because always $\alpha f(\setal[t] \mid \set^{t-1}) - c(\setal[t]) > \alpha f(a \mid \set^{t-1}) - c(a)$ for all $a \neq \setal[t]$. As all of these finitely many inequalities are strict, there has to be an $\epsilon > 0$ such that still $(\alpha - \epsilon) f(\setal[t] \mid \set^{t-1}) - c(\setal[t]) > (\alpha - \epsilon) f(a \mid \set^{t-1}) - c(a)$ for all $t$ and $a$. That is, on $\alpha - \epsilon$, the execution of \greedy\ is identical to the one on $\alpha$. So, $\alpha$ cannot be critical.
\end{proof}

We are now ready to define generic cost functions.

\begin{definition}\label{def:generic-prices}
    An additive cost function $c$ is said to be \textit{generic} w.r.t. a success probability function $f$ if for every $\alpha >0$, there exists at most one (unordered) pair of actions $a_1,a_2 \in \actions \cup \{\nill\}$, $a_1 \neq a_2$, such that $\alpha \in \cand{f,c}{a_1,a_2}$.
\end{definition}

By definition, for every generic cost function $c$, and every $\alpha>0$, there could be at most one iteration in \greedy\ (i.e., when generating $\setal$) in which tie breaking occurs.

The following lemma establishes an upper bound on the size of the critical set for every generic cost function.

\begin{proposition}\label{lem:oracle-generic}
    Let $f$ be a gross substitutes function, and let $c$ be a generic cost function w.r.t. $f$. 
    It holds that $|\orbit{f,c}| \leq \frac{n (n+1)}{2}$. 
\end{proposition}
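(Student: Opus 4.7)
The plan is to use a potential function argument. For each $a \in \actions$, let $r(a) \in \{1, \ldots, n\}$ denote the rank of $a$ when actions are sorted by increasing cost (ties broken by index, consistent with the tie-breaking rule of \greedy). Set $r(\nill) := 0$, and define $\Phi(S) = \sum_{a \in S} r(a)$. Then $\Phi$ takes integer values in $\{0, 1, \ldots, n(n+1)/2\}$. The heart of the proof is the following structural claim: at each critical $\alpha^{\star} \in \orbit{f,c}$, writing $S^- := \lim_{\epsilon \to 0^+} \setal$ evaluated at $\alpha^{\star} - \epsilon$ and $S^+ := \setal$ evaluated at $\alpha^{\star}$, the transition $S^- \to S^+$ is either a single addition, $S^+ = S^- \cup \{a_2\}$, or a single swap, $S^+ = (S^- \setminus \{a_1\}) \cup \{a_2\}$ with $r(a_2) > r(a_1)$. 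In either case $\Phi(S^+) - \Phi(S^-) = r(a_2) - r(a_1) \geq 1$, so $\Phi$ strictly increases across every critical value; since $\Phi \leq n(n+1)/2$, the bound $|\orbit{f,c}| \leq n(n+1)/2$ follows immediately.

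To prove the structural claim, I would first invoke genericity (Definition~\ref{def:generic-prices}): at the critical value $\alpha^{\star}$ there is a unique pair $\{a_1, a_2\} \subseteq \actions \cup \{\nill\}$ with $\alpha^{\star} \in \cand{f,c}{a_1, a_2}$. This pair corresponds to a unique iteration $t$ of \greedy\ in which $a_1$ and $a_2$ have equal marginal utility with respect to the common prefix $\set^{t-1}$, since the selections in iterations $1, \ldots, t-1$ are determined inductively (once $\set^{t-1}$ is fixed, any further tie would produce another $\cand{f,c}{\cdot,\cdot}$ hit, contradicting genericity). Because all other comparisons in iterations $\leq t-1$ are strict, they remain strict under small perturbation of $\alpha$, so the greedy runs at $\alpha^{\star}-\epsilon$ and $\alpha^{\star}$ agree on their first $t-1$ picks. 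At iteration $t$, the tie-breaking rule of Definition~\ref{def:greedy} selects $a_2$ (the more expensive of the two, or a nontrivial action when $a_1 = \nill$) at $\alpha^{\star}$, while a direct first-order argument shows the $\alpha^{\star}-\epsilon$ run picks $a_1$. Hence $r(a_2) > r(a_1)$ as required.

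The crux is then to show that after iteration $t$, the two greedy runs make identical subsequent selections, so that $S^-$ and $S^+$ differ in exactly the tied pair. I would argue inductively: for each $s > t$, if after iteration $s-1$ the two runs hold sets $T \cup \{a_1\}$ and $T \cup \{a_2\}$ for a common extension $T$, then both sets are simultaneously demanded at $\alpha^{\star}$, and the gross substitutes property applied to this pair tightly relates the marginal values $f(b \mid T \cup \{a_1\})$ and $f(b \mid T \cup \{a_2\})$ for every $b \notin \{a_1, a_2\}$; combined with genericity (which forbids any further ties and hence any further element of $\cand{f,c}{\cdot, \cdot}$ at $\alpha^{\star}$), this forces the argmax at iteration $s$ to be the same $b$ in both runs, and to cause both runs to terminate simultaneously. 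The main obstacle is precisely this inductive step, which is the new property of gross substitutes flagged in the introduction. Once established, the potential argument above finishes the proof with only elementary accounting.
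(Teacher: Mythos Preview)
Your potential-function setup matches the paper's exactly, and your analysis up through iteration $t$ (the first and only tie) is correct. The gap is precisely where you flag it: the inductive step for iterations $s>t$ is not established, and as you have stated it, it cannot be made to work.

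First, your inductive hypothesis --- that after step $s-1$ the two runs hold $T\cup\{a_1\}$ and $T\cup\{a_2\}$ for a common $T$ --- can break. In the paper's Case~2 it may happen that $a_2\in S_\alpha$: the $\alpha$-run picks $a_1$ at step $t$ and later picks $a_2$ at some step $t'>t$, while the $(\alpha-\epsilon)$-run already holds $a_2$ and (as the paper shows) never picks $a_1$. At step $t'$ the two trajectories therefore do not stay ``parallel with one element swapped,'' so the induction as written collapses. Second, even when the hypothesis holds, the assertion that GS ``tightly relates'' $f(b\mid T\cup\{a_1\})$ and $f(b\mid T\cup\{a_2\})$ well enough to force the same argmax is not a standard GS consequence; these marginals can genuinely differ, and genericity only rules out further \emph{ties} at $\alpha^\star$, not coincidence of argmaxes across two different conditioning sets. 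Third, the claim that $T\cup\{a_1\}$ and $T\cup\{a_2\}$ are ``simultaneously demanded'' is unjustified: they are partial greedy prefixes, not greedy outputs.

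The paper does not attempt to trace greedy beyond the tie. Instead it applies the GS \emph{definition} via two cost perturbations. Raising $c(a_1)$ by a carefully chosen $\delta$ yields $c'$ for which (i) greedy on $(f,c',\alpha)$ has no ties and outputs exactly $S_{\alpha-\epsilon}$, and (ii) GS (from $c$ to $c'$) guarantees a demanded set containing $S_\alpha\setminus\{a_1\}$; combined with uniqueness of the greedy run this forces $S_{\alpha-\epsilon}\supseteq (S_\alpha\setminus\{a_1\})\cup\{a_2\}$. Then raising $c(a_2)$ by $\delta$ as well yields $c''$ for which greedy outputs $S_\alpha$, and GS (from $c'$ to $c''$) gives $S_\alpha\supseteq S_{\alpha-\epsilon}\setminus\{a_2\}$. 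These two containments, together with criticality of $\alpha$ and Proposition~\ref{prop:monotone}, pin down $S_{\alpha-\epsilon}$ as either $S_\alpha\setminus\{a_1\}$ or $(S_\alpha\setminus\{a_1\})\cup\{a_2\}$ with $c(a_2)<c(a_1)$. This perturbation trick --- using GS globally on demand sets rather than locally on greedy marginals --- is the missing idea in your proposal.
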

	
We will prove the claim using a potential function $\poten\colon 2^{\actions} \rightarrow \mathbb{Z}_{\geq 0}$, which associates every set of actions $\set$ with a non-negative integer. In order to define $\poten$, for every action $a \in \actions$, we let $r_a$ be the rank of $a$ according to $c$ (i.e., the rank of the highest cost action is $n$, the rank of the 2nd highest cost action is $n-1$, and so on).
		
Since the cost function is assumed to be generic, the rank is unique. To see this, observe that for every generic cost function it holds that for any two actions $a_1 \neq a_2$ it must be that $c(a_1) \neq c(a_2)$, or else $\cand{f,c}{a_1,a_2}$ would be the set of all real numbers (by setting $\set_1=\set_2=\actions$).
	
Now, we define $\poten(\set)=\sum_{a\in \set} r_a$.
The core insight is the following lemma, showing that the potential increases in $\alpha$. 

\begin{lemma}\label{lemma:ispotential}
    For every $\alpha', \alpha \in \orbit{f,c}$, $\alpha' < \alpha$, we have $\poten(\setalp) \leq \poten(\setal)-1$.
\end{lemma}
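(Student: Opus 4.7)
\medskip

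\noindent\textbf{Proof plan.} The plan is to first reduce the claim to consecutive critical values, and then to analyze a single transition via the execution of \greedy.

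First, I will argue that it suffices to prove the inequality for \emph{consecutive} critical values. If $\alpha' < \alpha$ are arbitrary elements of $\orbit{f,c}$ and $\alpha' = \beta_0 < \beta_1 < \dots < \beta_k = \alpha$ are all critical points in between, then chaining $\poten(\set_{\beta_i}) \leq \poten(\set_{\beta_{i+1}}) - 1$ for each consecutive pair yields $\poten(\setalp) \leq \poten(\setal) - k \leq \poten(\setal) - 1$.

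Second, fix consecutive critical values $\alpha' < \alpha$. For sufficiently small $\epsilon>0$ there is no critical value in $(\alpha-\epsilon,\alpha)$, so by the step structure of $\g_{f,c}$ (Corollary~\ref{cor:structure}) together with the fact that \greedy\ depends continuously on $\alpha$ away from critical values, the set returned by \greedy\ on input $\alpha-\epsilon$ coincides with $\setalp$. I will therefore compare the executions of \greedy($\alpha-\epsilon$) and \greedy($\alpha$). Because $c$ is generic, there is exactly one unordered pair $\{a,b\}\subseteq \actions\cup\{\nill\}$ with $\alpha \in \cand{f,c}{a,b}$, and this pair corresponds to a single tie-breaking step of \greedy($\alpha$) at some iteration $i$. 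From the tie equation $\alpha f(a\mid \setal^{i-1}) - c(a) = \alpha f(b\mid \setal^{i-1}) - c(b)$ together with $c(a) > c(b)$ (WLOG, with $c(\nill) = 0$), I get $f(a\mid \setal^{i-1}) > f(b\mid \setal^{i-1})$, so \greedy($\alpha$) picks $a$ by the tie-breaking rule, while a quick calculation shows that at $\alpha - \epsilon$, action $b$ strictly beats $a$, so \greedy($\alpha-\epsilon$) picks $b$ instead.

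Third, I split into two cases. If $b = \nill$, then at $\alpha-\epsilon$ the marginal utility of $a$ is strictly negative and, by genericity, no other action has non-negative marginal utility at iteration $i$, so \greedy($\alpha-\epsilon$) halts with $\setalp = \setal^{i-1}$; by submodularity of $f$ (a GS function is submodular), adding $a$ only weakens subsequent marginals, so \greedy($\alpha$) also halts and $\setal = \setalp \cup \{a\}$, giving $\poten(\setal) - \poten(\setalp) = r_a \geq 1$. If $b\in\actions$, the goal is to show $\setal = (\setalp \setminus \{b\}) \cup \{a\}$, from which $\poten(\setal) - \poten(\setalp) = r_a - r_b \geq 1$ follows since $c(a) > c(b)$ forces $r_a > r_b$.

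This last statement is the main obstacle and is where I expect to use gross substitutes in an essential way. The point is to rule out that, after the swap at iteration $i$, the two executions diverge further. To establish it, I will argue that both $\setalp$ and $\setal$ are demanded at $\alpha$ (the former as a limit of demands below $\alpha$, using that $\alpha$ is the next critical value, and Proposition~\ref{prop:monotone} to pass the limit through the tie at $\alpha$), and then invoke the single-improvement/exchange property of gross substitutes: any element of $\setal \setminus \setalp$ can be swapped with an element of $\setalp \setminus \setal \cup \{\nill\}$ while staying demanded. Genericity then forces $\setal \setminus \setalp = \{a\}$ and $\setalp \setminus \setal = \{b\}$, because any additional discrepancy would require a second tie at $\alpha$, contradicting Definition~\ref{def:generic-prices}. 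Combined with the two cases above, this gives $\poten(\setal) \geq \poten(\setalp) + 1$ for consecutive critical values, completing the proof.
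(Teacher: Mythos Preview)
Your reduction to consecutive critical values and the Case~1 analysis match the paper's. The genuine divergence is in Case~2. The paper proves $\setalp=(\setal\setminus\{a_1\})\cup\{a_2\}$ by a cost-perturbation argument: it raises $c(a_1)$ by a small $\delta$ to obtain $c'$, applies the gross-substitutes definition directly to conclude that the new demand contains $\setal\setminus\{a_1\}$, then raises $c(a_2)$ as well to obtain $c''$ and argues that \greedy\ on $(f,c'',\alpha)$ reproduces $\setal$; combining these two perturbations pins down the symmetric difference. Your route instead observes that both $\setal$ and $\setalp$ lie in $\demandset_{f,c}(\alpha)$ and invokes the M$^\natural$-exchange property (equivalent to GS): for any $x\in\setal\setminus\setalp$ there is $x'\in(\setalp\setminus\setal)\cup\{\nill\}$ with $\setal-x+x'$ still demanded, which yields $\alpha\in\cand{f,c}{x,x'}$ and hence, by genericity, $|\setal\bigtriangleup\setalp|\le 2$. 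This is a cleaner, more conceptual argument; the paper's version has the advantage of using only the original GS definition rather than the M$^\natural$ equivalence, and of tracking the \greedy\ execution explicitly.

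Two points to tighten. First, the property you need is the M$^\natural$-concave \emph{exchange} axiom on two demanded sets, not the \emph{single-improvement} property (which applies to a set \emph{not} in demand); be precise here, since that is where GS enters. Second, to conclude $\alpha\in\cand{f,c}{x,x'}$ you must also check the $\geq 0$ clause in Definition~\ref{def:candidates}; this follows because $\setal$ being demanded implies $\alpha f(x\mid\setal\setminus\{x\})-c(x)\ge 0$, but you should say so. Finally, your assertion that the unique pair $\{a,b\}$ ``corresponds to a single tie-breaking step of \greedy($\alpha$)'' is true but needs the argument the paper gives (if no tie occurs in the execution then $\set_{\alpha-\epsilon}=\setal$, contradicting criticality); alternatively, note that your exchange argument does not actually require identifying this step, so you could streamline by dropping the step-by-step \greedy\ analysis in Case~2 altogether.
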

	
Before proving Lemma~\ref{lemma:ispotential}, we show how it implies Proposition~\ref{lem:oracle-generic}.
Letting $\orbit{f,c} = \{ \alpha_1, \ldots, \alpha_k \}$ with $\alpha_1 < \ldots < \alpha_k$, we have $\poten(\set_{\alpha_1}) \geq 1$ because $\set_{\alpha_1} \neq \emptyset$. 
We get that $\poten(\set_{\alpha_{j+1}}) \geq \poten(\set_{\alpha_j}) +1$ for all $j$, implying that $\poten(\set_{\alpha_k}) \geq k$. However, we also have, $\poten(\set_{\alpha_k}) \leq \sum_{a \in \actions} r_a = \sum_{i=1}^n i = \frac{n(n+1)}{2}$. This implies that $k \leq \frac{n(n+1)}{2}$.
	
We now prove Lemma~\ref{lemma:ispotential}.

\begin{proof}[Proof of Lemma~\ref{lemma:ispotential}]
    In order to prove Lemma~\ref{lemma:ispotential}, it suffices to prove that 
    $\poten(\setalp) < \poten(\setal)$ for every {\em neighboring} $\alpha, \alpha' \in \orbit{f,c}$. That is, $\alpha, \alpha'$ such that $\alpha' < \alpha$ and $(\alpha', \alpha) \cap \orbit{f,c} = \emptyset$.
	
    Specifically, it suffices to prove that for any neighboring $\alpha'$ and $\alpha$, the set  $\setalp$ takes one of the following two forms: either (i)  $ \setalp = \setal\setminus \{a\}$ for some $a \in \setal$, or (ii) $\setalp = (\setal\setminus \{a_1\}) \cup \{a_2\}$ for some $a_1 \in \setal$, and $a_2\notin \setal$, where $c(a_2) < c(a_1)$.
    Indeed, in each one of these cases, the potential of $\setalp$ is smaller than the potential of $\setal$ by at least 1.
    
    We note that since $\alpha, \alpha'$ are neighboring, it holds that $\set_{\alpha'}=\set_{\alpha-\epsilon}$ for every $\epsilon\in(0,\alpha-\alpha')$.  Also note that, for small enough $\epsilon$, if  \greedy\ has a unique action that maximizes the marginal utility with respect to $\alpha$ (also with respect to not choosing any action), then it also maximizes the marginal utility with respect to $\alpha-\epsilon$ (as long as the set of actions chosen by \greedy\ so far has not changed.). 

    Since by Observation~\ref{obs:must-cand}, every critical $\alpha$ must be a candidate critical $\alpha$ for some set in $\cand{f,c}{a_1,a_2}$, $a_1, a_2 \in \actions \cup \{ \nill \}$, $a_1 \neq a_2$, we can distinguish between the following two cases:

    \textbf{Case 1:} $\alpha \in \cand{f,c}{a, \nill}$ for some action $a\in \actions$ (and thus not in any of $\cand{f,c}{a_1,a_2}$ for $a_1, a_2 \in \actions$ nor in $\cand{f,c}{a', \nill}$ for $a'\neq a$ by genericity of $c$).
		
    We first observe that $a \in \setal$. To see this, assume that $a \not\in \setal$. Then by genericity, $\alpha f(\setal[t] \mid \setal^{t-1}) - c(\setal[t]) > \alpha f(a' \mid \setal^{t-1}) - c(a')$ for all $t$ and all $a' \neq \setal[t]$. As these are finitely many strict inequalities, there is an $\epsilon > 0$ such that we also have $(\alpha - \epsilon) f(\setal[t] \mid \setal^{t-1}) - c(\setal[t]) > (\alpha - \epsilon) f(a' \mid \setal^{t-1}) - c(a')$ for all $t$ and all $a' \neq \setal[t]$. That is, the execution of \greedy\ on $\alpha - \epsilon$ is the same as on $\alpha$: This is a contradiction to $\alpha$ being critical.

    Since $a \in \setal$, let $\ell$ be the step in \greedy\ such that $\setal[\ell]=a$.
		
    One can further assume that the marginal utility is zero, namely that $\alpha \cdot f(a \mid \setal^{\ell-1} ) - c(a) = 0$. Indeed, it must be non-negative (since \greedy\ picked it), and if it is strictly positive then the marginal utilities of all actions up to this point are strictly positive, and there are no ties by the genericity assumption.
    This implies that for small enough $\epsilon$ \greedy\ on $\alpha-\epsilon$ will choose exactly the same set $\setal$, which contradicts $\alpha$ being critical.

    Furthermore, $\alpha \cdot f(a' \mid \setal^{\ell-1} ) - c(a') < 0$ for all $a' \neq a$ because otherwise $\alpha \in\cand{f,c}{a,a'}$, contradicting genericity. Similarly, for $\alpha$, \greedy\ has a positive marginal utility for every step $t<\ell$, and in every such step, the action that maximizes the marginal utility is unique.
    Thus, for small enough $\epsilon$, \greedy\ on $\alpha-\epsilon$ 
    selects the same action $\setal[t]$ for all $t<\ell$, and selects no action at step $\ell$ since the marginal utility of all remaining actions at $\alpha-\epsilon$ is negative.
    It follows that, for $\alpha'$, \greedy\ would have chosen the exact same actions except for $a$, and $\setalp=\setal \setminus\{a\}$.
		
    \textbf{Case 2:} $\alpha \in \cand{f,c}{a_1,a_2}$ for some actions $a_1,a_2\in \actions$. Let $c(a_1) > c(a_2)$. 

    We first show that at least one of $a_1$ and $a_2$ is contained in $\setal$. 
    With the goal of a contradiction, let us assume that $a_1,a_2\notin \setal$. Let $d=|\setal|$. By genericity, for every $t\leq d$, it holds that the selected action $\setal[t]$ in iteration $t$ is the unique optimal action (also with respect to not choosing any action). Therefore, for small enough $\epsilon$, we have $\set_{\alpha - \epsilon} = \setal$, which contradicts $\alpha$ being critical.

    Next we show that indeed $a_1 \in \setal$. To this end, let $\ell$ be the first iteration where one of $a_1$ and $a_2$ is added to $\setal$. That is, $\setal[\ell] \in \{ a_1, a_2 \}$ and $a_1, a_2 \not\in \setal^{\ell-1}$. We observe that $\alpha \cdot f(a_1 \mid \setal^{\ell-1}) - c(a_1) = \alpha\cdot f(a_2 \mid \setal^{\ell-1}) - c(a_2)$ because otherwise due to genericity there is no tie-breaking occurring in \greedy, implying that for small enough $\epsilon$, we have $\setal = \set_{\alpha - \epsilon}$, which contradicts $\alpha$ being critical.  Since \greedy\ breaks ties in favor of actions with higher cost, in iteration $\ell$, \greedy\ chooses to add $a_1$, meaning that $\setal[\ell] = a_1$, thus $a_1 \in \setal$.

    Note also that, since \greedy\ picked $a_1$, the marginal utility $\alpha \cdot f(a_1 \mid S^{\ell-1}_\alpha) -c(a_1) = \alpha \cdot f(a_2 \mid S^{\ell-1}_\alpha) -c(a_2)$ must be non-negative, and it cannot be $0$ either, by genericity and the fact that $a_1,a_2\neq \nill$. Therefore, it must be strictly positive.

    We next consider running \greedy\ on instances in which we perturb the costs of $a_1$ and $a_2$, and relate the obtained outcome to $S_\alpha$ and $S_{\alpha-\epsilon}$. To this end, let
    $Z = \{\alpha \cdot\left(f(a\mid S_1)-f(a'\mid S_2)\right) -c(a)+c(a') \mid a, a' \in \actions, S_1, S_2 \subseteq \actions\} \cup \{\alpha \cdot f(a\mid S) -c(a) \mid a \in \actions, S \subseteq \actions\}$. 
    Furthermore, let 
    $$\delta=\min\left\{\frac{z}{2} \mid z \in Z, z>0\right\}.$$

    Define $c'$ by $c'(a_1) = c(a_1) + \delta$, and $c'(a) = c(a)$ for $a \neq a_1$. Let $X$ be the outcome of \greedy\ on $(f, c', \alpha)$. Because $f$ is gross substitutes, there is a set in the demand $\demandset_{f,c'}(\alpha)$ at $(f,c',\alpha)$ that contains $S_\alpha\setminus \{a_1\}$. Since all steps of \greedy\ on $(f,c',\alpha)$ are unique up to step $\ell$ and in step $\ell$ it adds $a_2$, action $a_2$ must be in every demand set. Combining these two observations implies that $(S_\alpha\setminus \{a_1\}) \cup \{a_2\}$ is contained in a demand set. By genericity and the choice of $\delta$, all steps of \greedy\ on $(f,c',\alpha)$ after step $\ell$ must also be unique, showing that $X \supseteq (S_\alpha\setminus \{a_1\}) \cup \{a_2\}$.

    Note that since all steps of \greedy\ on $(f,c',\alpha)$ are unique, \greedy\ on $(f,c,\alpha-\epsilon)$ gives the same outcome for a sufficiently small $\epsilon$. Thus, $S_{\alpha-\epsilon} = X$.

    Now define $c''$ by $c''(a_2) = c'(a_2) + \delta$, and $c''(a) = c'(a)$ for $a \neq a_2$. Let $Y$ be the outcome of \greedy\ on $(f, c'', \alpha)$. 
    We claim that \greedy\ on $(f,c'',\alpha)$ makes the exact same choices as \greedy\ on $(f,c,\alpha)$. Up to step $\ell-1$, this holds by genericity. At step $\ell$, both $a_1$ and $a_2$ used to be the strict best choices (with a strictly positive marginal) and $a_1$ was chosen by tie-breaking (due to the higher cost). By our choice of $\delta$ and since we raised the cost of both $a_1$ and $a_2$ by the same amount, this is still the case. After step $\ell$, all choices that were unique are still unique. Note that this includes decisions involving $a_2$ because of our choice of $\delta$, and because the only element it could tie with is $a_1$ (by genericity),  and $a_1$ was already chosen.

    We conclude that $Y = S_\alpha$.

    By gross substitutes,  $Y$ must contain $X \setminus \{a_2\}$. 

    Combining everything we have shown so far it must hold that $S_{\alpha}$ contains action $a_1$, $S_{\alpha}$ may or may not contain action $a_2$, and $S_{\alpha}$ may contain some additional actions $x_1, \ldots, x_k$ different from $a_1$ and $a_2$; while $S_{\alpha-\epsilon}$ must contain action $a_2$, $S_{\alpha-\epsilon}$ may or may not contain action $a_1$, $S_{\alpha-\epsilon}$ must contain all other actions $x_1, \ldots, x_k$ contained in $S_\alpha$ (if any), and it may not contain any other action.

    First suppose that $a_2 \in S_{\alpha}$. Note that then we cannot have $a_1 \in S_{\alpha-\epsilon}$ because this would mean that $S_{\alpha-\epsilon} = S_{\alpha}$ in contradiction to $\alpha$ being critical. So we must have $S_{\alpha'} = S_{\alpha-\epsilon} = S_{\alpha}\setminus \{a_1\}$.

    Now consider the case where $a_2 \not\in S_{\alpha}$. In this case we can't have $a_1 \in S_{\alpha-\epsilon}$ because this would mean that $S_{\alpha-\epsilon} \supset S_{\alpha}$ in contradiction to Proposition \ref{prop:monotone}. Hence $a_1 \not\in S_{\alpha-\epsilon}$ and thus $S_{\alpha'} = S_{\alpha-\epsilon} = (S_{\alpha}\setminus\{a_1\}) \cup \{a_2\}$.
\end{proof}

\subsubsection{Arbitrary Additive Cost Functions}
\label{sec:arbitrary-cost}

Finally, we 
extend the bounded critical set result from generic cost functions to arbitrary additive ones.
To do so, we define a perturbation over cost functions that leads to a generic cost function with probability 1 and where the size of the critical set can only increase. 

Given an additive cost function $c$, an additive cost function $\hat{c}$  is said to be an {\em \epsper} of $c$ if for every action $a\in \actions$, $\hat{c}(a) \in [c(a),c(a)+\epsilon]$.

We first observe that a small enough perturbation in the cost function cannot insert new sets into the demand:
\begin{observation}
\label{obs:subset-demand}
	For every success probability function $f$, additive cost function $c$ and $\alpha>0$, there exists $\epsilon>0$ such that for every $\epsilon$-perturbation cost function $\hat{c}$ of $c$, $\demandset_{f,\hat{c}}(\alpha)\subseteq \demandset_{f,c}(\alpha)$.
\end{observation}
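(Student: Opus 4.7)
The plan is to exploit the finiteness of the action set together with the fact that an $\epsilon$-perturbation is a one-sided increase in cost. First I would introduce the finite utility gap
\[
\gamma \;=\; \min_{S \notin \demandset_{f,c}(\alpha)} \Bigl(\max_{S' \in \actionset} u_a(S',\alpha) \;-\; u_a(S,\alpha)\Bigr),
\]
which is strictly positive (the minimum is taken over a finite, nonempty set of strictly positive numbers; if $\demandset_{f,c}(\alpha) = \actionset$ then the claim is trivial). I will then pick any $\epsilon$ with $0 < \epsilon < \gamma / n$, so that the total cost perturbation of any set, $\hat{c}(S) - c(S)$, lies in $[0, n\epsilon] \subseteq [0, \gamma)$.

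Next, given an arbitrary $\epsilon$-perturbation $\hat c$ and an arbitrary $S \in \demandset_{f,\hat c}(\alpha)$, I would compare three utilities. Fix any $S^\star \in \demandset_{f,c}(\alpha)$ and denote $u^\star = u_a(S^\star,\alpha)$. Because $\hat c \geq c$, we have $u^{\hat c}_a(S,\alpha) \leq u_a(S,\alpha)$. Because $\hat c \leq c + \epsilon$ componentwise, we have $u^{\hat c}_a(S^\star,\alpha) \geq u_a(S^\star,\alpha) - n\epsilon = u^\star - n\epsilon$. Finally, the optimality of $S$ under $\hat c$ gives $u^{\hat c}_a(S,\alpha) \geq u^{\hat c}_a(S^\star,\alpha)$.

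Chaining these three inequalities yields
\[
u_a(S,\alpha) \;\geq\; u^{\hat c}_a(S,\alpha) \;\geq\; u^{\hat c}_a(S^\star,\alpha) \;\geq\; u^\star - n\epsilon \;>\; u^\star - \gamma,
\]
and by the definition of $\gamma$ this strict inequality is incompatible with $S \notin \demandset_{f,c}(\alpha)$. Hence $S \in \demandset_{f,c}(\alpha)$, which gives the desired inclusion.

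There is no real obstacle here; the only subtlety worth flagging is that the argument genuinely uses the one-sided nature of the perturbation (so that $u^{\hat c}_a(S,\alpha) \leq u_a(S,\alpha)$ holds uniformly), which is why new maximizers cannot be created but it is entirely possible for some elements of $\demandset_{f,c}(\alpha)$ to fall out of $\demandset_{f,\hat c}(\alpha)$. This asymmetry is exactly what the observation asserts.
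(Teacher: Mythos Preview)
Your proof is correct and follows essentially the same approach as the paper: both define the finite utility gap between demanded and non-demanded sets (your $\gamma$ is exactly the paper's $\delta$), choose $\epsilon < \gamma/n$, and then chain the three inequalities coming from $\hat c \geq c$, optimality under $\hat c$, and $\hat c \leq c + \epsilon$. The only cosmetic difference is that you argue by showing every $S \in \demandset_{f,\hat c}(\alpha)$ lies in $\demandset_{f,c}(\alpha)$, whereas the paper argues the contrapositive, showing every $S' \notin \demandset_{f,c}(\alpha)$ is not in $\demandset_{f,\hat c}(\alpha)$.
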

\begin{proof}
    Let $\delta = \left( \max_{\set \in \demandset_{f,c}(\alpha)} \alpha \cdot f(\set) -c(\set) \right) - \left( \max_{\set' \notin \demandset_{f,c}(\alpha)} \alpha\cdot  f(\set') -c(\set') \right)$ be the smallest utility gap between a set inside $\demandset_{f,c}$ and outside $\demandset_{f,c}$. Note that $\delta > 0$ since there are only finitely many sets. We now claim that for $\epsilon<\frac{\delta}{|\actions|}$, we have $\demandset_{f,\hat{c}}(\alpha)\subseteq \demandset_{f,c}(\alpha)$. To this end, consider any $\set' \not\in \demandset_{f,c}(\alpha)$. We claim that $\set' \not\in \demandset_{f,\hat{c}}(\alpha)$. Let $\set \in \demandset_{f,c}(\alpha)$. We have
    \begin{align*}
        \alpha \cdot f(\set') -\hat{c}(\set')  &\stackrel{(\star)}{\leq} \alpha \cdot f(\set') -c(\set') \stackrel{(\star\star)}{\leq} \alpha \cdot f(\set) -c(\set) -\delta \\ &\stackrel{(\star)}{\leq} \alpha \cdot f(\set) -\hat{c}(\set)+\epsilon \cdot |\actions| -\delta < \alpha \cdot f(\set) -\hat{c}(\set),
    \end{align*}
    where $(\star)$ follows from the definition of $\hat{c}$ and $(\star\star)$ from the fact that $\set \in \demandset_{f,c}(\alpha)$ but $\set' \not\in \demandset_{f,c}(\alpha)$. 
\end{proof}

With Observation~\ref{obs:subset-demand} at hand, the following lemma shows that a perturbed cost function can only increase the size of the critical set.

\begin{lemma}\label{lem:greater-oracle}
    For every success probability function $f$ and additive cost function $c$, there exists an $\epsilon>0$ such that for every \epsper~$\hat{c}$ of $c$, $|\orbit{f,c}| \leq |\orbit{f,\hat{c}}|$.
\end{lemma}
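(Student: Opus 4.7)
The plan is to show that $\g_{f,\hat c}$ attains every value that $\g_{f,c}$ attains on $(0,1]$, and therefore must jump at least as many times. Concretely, for each critical $\alpha_i \in \orbit{f,c}$, I will exhibit a \emph{witness} point $\beta_i \in (0,1]$ at which $\g_{f,\hat c}(\beta_i) = v_i := \g_{f,c}(\alpha_i)$ for every sufficiently small $\epsper$ $\hat c$. Since the $v_i$ are strictly increasing by Corollary~\ref{cor:structure}, this forces the monotone step function $\g_{f,\hat c}$ to take at least $|\orbit{f,c}|+1$ distinct values on $[0,1]$ (including $\g_{f,\hat c}(0)=0$), which by Corollary~\ref{cor:structure} applied to $(f,\hat c)$ yields $|\orbit{f,\hat c}| \geq |\orbit{f,c}|$.

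To build the witnesses, enumerate $\orbit{f,c} = \{\alpha_1 < \dots < \alpha_k\}$. By Corollary~\ref{cor:structure}, $\g_{f,c}$ is constant at $v_i$ on a right-open interval starting at $\alpha_i$. I pick $\beta_i$ in the interior of this flat piece so that, in addition, the agent demand $\demandset_{f,c}(\beta_i)$ is a \emph{singleton} $\{S_i\}$. This is possible because the set of $\alpha$ at which two distinct subsets $S,S'$ tie in agent utility is finite---each pair contributing at most the single value $\alpha = (c(S')-c(S))/(f(S')-f(S))$---so non-tie values are dense in every non-degenerate subinterval. At such a $\beta_i$, $f(S_i) = v_i$. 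Applying Observation~\ref{obs:subset-demand} at each $\beta_i$ yields some $\epsilon_i > 0$ such that every $\epsilon_i$-perturbation $\hat c$ satisfies $\demandset_{f,\hat c}(\beta_i) \subseteq \demandset_{f,c}(\beta_i) = \{S_i\}$; since demands are never empty, $\demandset_{f,\hat c}(\beta_i) = \{S_i\}$ and hence $\g_{f,\hat c}(\beta_i) = f(S_i) = v_i$. Taking $\epsilon := \min_i \epsilon_i > 0$ completes the construction.

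The main obstacle is ensuring that a non-tie witness $\beta_i$ can always be found inside $(0,1]$. For all $i < k$, and likewise for $i=k$ when $\alpha_k < 1$, this is immediate because the flat piece in question has non-empty interior inside $(0,1]$ and only finitely many tie values need to be avoided. The only delicate case is the boundary $\alpha_k = 1$ with $1$ itself being a tie point: then the only available witness is $\beta_k = 1$, and one has to argue directly there that the principal-preferred set $S_k$ with $f(S_k)=v_k$ remains uniquely demanded under the perturbation, by combining the strict-gap estimate of Observation~\ref{obs:subset-demand} (eliminating sets originally outside $\demandset_{f,c}(1)$) with a direct bound eliminating the other originally-tied competitors at $\alpha=1$ when costs increase by at most $\epsilon$.
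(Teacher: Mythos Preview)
Your overall strategy---pick one witness $\beta_i$ on each flat piece of $\g_{f,c}$, apply Observation~\ref{obs:subset-demand} at those finitely many points, and set $\epsilon=\min_i\epsilon_i$---is exactly the paper's. Your singleton-demand refinement is unnecessary: once $\beta_i$ lies strictly between consecutive critical values, Proposition~\ref{prop:monotone} together with $\g_{f,c}(\beta_i)=v_i$ already forces every $S\in\demandset_{f,c}(\beta_i)$ to satisfy $f(S)=v_i$, so $\demandset_{f,\hat c}(\beta_i)\subseteq\demandset_{f,c}(\beta_i)$ directly yields $\g_{f,\hat c}(\beta_i)=v_i$ without any tie-avoidance. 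The paper simply takes $\beta_0=\alpha_1/2$ and $\beta_i=(\alpha_i+\alpha_{i+1})/2$ for $1\le i<k$.

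Your handling of the case $\alpha_k=1$, however, does not work. No $\epsilon>0$ makes the claim ``the principal-preferred set $S_k$ remains uniquely demanded at $\alpha=1$'' true for \emph{every} $\epsilon$-perturbation. If $S'\in\demandset_{f,c}(1)$ with $f(S')<v_k$, then from $f(S_k)-c(S_k)=f(S')-c(S')$ we get $c(S_k)>c(S')$, hence some $a\in S_k\setminus S'$; the perturbation that raises only $\hat c(a)$ by any $\epsilon'\in(0,\epsilon]$ gives $f(S_k)-\hat c(S_k)<f(S')-\hat c(S')$, so $S_k\notin\demandset_{f,\hat c}(1)$ and $\g_{f,\hat c}(1)<v_k$. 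Concretely, with $n=1$, $f(\{1\})=c(1)=\tfrac12$ one has $\orbit{f,c}=\{1\}$, yet every positive perturbation pushes the sole jump past $1$, giving $\orbit{f,\hat c}=\emptyset$. The paper does not argue at $\beta_k=1$; it instead sets $\beta_k=2\alpha_k$, exploiting that demands and $\g$ are defined for all $\alpha>0$, so the disjointness argument goes through verbatim on the larger domain. (Strictly speaking the resulting last jump of $\g_{f,\hat c}$ in $(\beta_{k-1},\beta_k]$ may then land above $1$ and fall outside $\orbit{f,\hat c}$; this is a boundary subtlety the paper's proof glosses over as well, but it can cost at most one in the count and is irrelevant for the downstream polynomial bound.)
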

\begin{proof}
    Let $\alpha_1<\ldots <\alpha_k$ be the critical values  in $\orbit{f,c}$.
    Let $\beta_0=\alpha_1/2$, $\beta_i=\frac{\alpha_i+\alpha_{i+1}}{2}$ for $1 \leq i<k$, and let $\beta_k=2\alpha_k$.
    By Proposition~\ref{prop:monotone} and Corollary~\ref{cor:structure} we get that for every $i$,  and $\set \in \demandset_{f,c}(\beta_i) $, $\g(\alpha_i) \leq f(\set) <\g(\alpha_{i+1})$, and therefore $\{\demandset_{f,c}(\beta_i)\}_{i}$ are  disjoint (i.e., for every $i\neq j$,  $\demandset_{f,c}(\beta_i) \cap \demandset_{f,c}(\beta_j) =\emptyset$).
		
    By Observation~\ref{obs:subset-demand}, there exist $\epsilon_0,\ldots,\epsilon_k$ such that for every $i$,  and $\hat{c_i}$ which is an \epsiper~of $c$, it holds that  
    $\demandset_{f,\hat{c_i}}(\beta_i) \subseteq \demandset_{f,c}(\beta_i)$.
    Thus, for $\epsilon = \min_i \epsilon_i$ it holds that for every \epsper~$\hat{c}$ of $c$, $\demandset_{f,\hat{c}}(\beta_i) \subseteq  \demandset_{f,c}(\beta_i)$, and thus $\{\demandset_{f,\hat{c}}(\beta_i)\}_{i}$ are disjoint. 
    Therefore, every interval $(\beta_{i-1},\beta_{i})$ must have a critical $\hat{\alpha}_i$ w.r.t. $\hat{c}$. 
    This concludes the proof.
\end{proof}

We are now ready to prove Theorem~\ref{theorem:gs}, namely to establish the upper bound of $\frac{n(n+1)}{2}$ on the size of the critical set for an arbitrary additive cost function.

\begin{proof}[Proof of Theorem~\ref{theorem:gs}]
    By Lemma~\ref{lem:greater-oracle} there exists $\epsilon>0$ such that for every \epsper~$\hat{c}$ of $c$, it holds that $|\orbit{f,c}| \leq |\orbit{f,\hat{c}}|$.
    Suppose one draws $\hat{c}(a)$ uniformly at random from $[c(a),c(a)+\epsilon]$ for every action $a$. 
    We show that $\hat{c}$ would be generic with probability 1. 
	
    To see this, consider the event that $\cand{f,c}{a_1,a_2}\cap\cand{f,c}{a_3,a_4}$ contains some $\alpha>0$ for two different (unordered) pairs $(a_1, a_2),(a_3,a_4)$. 
    By the union bound, it holds that 
    \begin{eqnarray}
	& & \Pr\Bigl[\cand{f,c}{a_1,a_2}\cap\cand{f,c}{a_3,a_4}\neq \emptyset\Bigr] 
        \nonumber \\ & \leq & \sum_{\set_1,\set_2,\set_3,\set_4} \Pr\Biggl[\exists \alpha>0 : \alpha\cdot f(a_1 \mid \set_1) -\hat{c}(a_1)=\alpha \cdot f(a_2 \mid \set_2) -\hat{c}(a_2) \label{eq:a1a2} \\  &  & \phantom{xxxxxxxxxxx}  \textsc{and } \alpha\cdot f(a_3 \mid \set_3) -\hat{c}(a_3)=\alpha \cdot f(a_4 \mid \set_4) -\hat{c}(a_4)  \Biggr] \nonumber.
    \end{eqnarray}

    If $f(a_1 \mid \set_1)=f(a_2 \mid \set_2)$, then 
    $$
        \Pr\Bigl[\cand{f,c}{a_1,a_2}\cap\cand{f,c}{a_3,a_4}\neq \emptyset\Bigr] \stackrel{\eqref{eq:a1a2}}{\leq}  \Pr[\hat{c}(a_1)=\hat{c}(a_2)] = 0, 
    $$
    where the last equality follows since this is a measure 0 event.
    Else,
    \begin{align*}
        & \Pr\Bigl[\cand{f,c}{a_1,a_2} \cap\cand{f,c}{a_3,a_4}\neq \emptyset\Bigr] \\  &\phantom{xxxxxx} \stackrel{\eqref{eq:a1a2}}{\leq}   \sum_{\substack{\phantom{,}\set_1,\set_2,\\\set_3,\set_4}} \Pr\Biggl[ \left[f(a_1 \mid \set_1)-f(a_2 \mid \set_2)\right]\left[\hat{c}(a_3)-\hat{c}(a_4)\right] = \\[-8pt] & \phantom{xxxxxxxxxxxxxxxxxxxxx}\left[f(a_3 \mid \set_3)-f(a_4 \mid \set_4)\right]\left[\hat{c}(a_1)-\hat{c}(a_2)\right]\Biggr] = 0,
    \end{align*}
    where the last equality follows again since this is a measure 0 event. 
    Applying the union bound once more on all such (finitely many) events shows that $\hat{c}$ is generic with probability 1.
    Thus, there exists a generic $\epsilon$-perturbation $\hat{c}$ of $c$.
    We get that
    $|\orbit{f,c}| \leq |\orbit{f,\hat{c}}| \leq \frac{n (n+1)}{2}$, where the last inequality follows by Lemma~\ref{lem:oracle-generic}.
    This concludes the proof.
\end{proof}

\section{Beyond Gross Substitutes Functions}
\label{sec:beyond}

In this section we study success probability functions beyond gross substitutes. 
In Section~\ref{sec:submodular} we show that submodular functions are more complex than gross substitutes: unlike gross substitutes, they may exhibit an exponential critical set, and the optimal contract is NP-hard to compute. 
In Section~\ref{sec:ptas} we devise an FPTAS for arbitrary success probability functions.
In Section~\ref{sec:weakly-poly} we present a weakly poly-time algorithm for instances with poly-size critical sets.

\subsection{Submodular Functions}
\label{sec:submodular}

Our first result shows that our approach for gross substitutes functions of iterating over all critical points cannot yield a polytime algorithm for submodular functions or even for coverage functions.

\begin{theorem}
    There exists a coverage success probability function $f$ and an additive cost function $c$ such that 
    $|\orbit{f,c}|=2^{n}-1$.
\end{theorem}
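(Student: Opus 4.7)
My plan is to give an explicit construction. Take $c(i) = 2^{i-1}$ for each action $i \in \actions = \{1, \ldots, n\}$, so that $S \mapsto c(S) = \sum_{i \in S} 2^{i-1}$ is a bijection from $\actionset$ onto $\{0, 1, \ldots, 2^n - 1\}$ (binary encoding). Fix $K = 2^{n+1}$ and define $f(S) = c(S)(K - c(S))$. Then $f$ depends on $S$ only through the scalar $c(S)$, and the one-variable function $\phi(x) = x(K - x)$ is strictly concave and strictly increasing on $[0, K/2] \supseteq [0, 2^n - 1]$, so the agent's \bestresponse\ problem reduces to choosing $x \in \{0, 1, \ldots, 2^n - 1\}$ to maximize $\alpha \phi(x) - x$. (If the convention $f(S) \in [0, 1]$ must be enforced, divide both $f$ and $c$ by the common scalar $M = 2^{2n} - 1 \geq \max_S f(S)$; the demand correspondence and hence $\orbit{f,c}$ are unchanged.)

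The first and main step is to verify that $f$ is a coverage function. I would expand $f(S) = K c(S) - c(S)^2$ and rewrite $c(S)^2 = \sum_i c(i)^2 \mathbbm{1}[i \in S] + 2 \sum_{i < j} c(i) c(j) \mathbbm{1}[\{i, j\} \subseteq S]$. The identity $\mathbbm{1}[\{i,j\} \subseteq S] = \mathbbm{1}[i \in S] + \mathbbm{1}[j \in S] - \mathbbm{1}[\{i,j\} \cap S \neq \emptyset]$ lets me rewrite each ``both in $S$'' indicator as a signed combination of singleton and pair ``at-least-one'' indicators. Collecting yields
\[
f(S) = \sum_{i} w_{\{i\}} \, \mathbbm{1}[\{i\} \cap S \neq \emptyset] + \sum_{i < j} w_{\{i, j\}} \, \mathbbm{1}[\{i, j\} \cap S \neq \emptyset],
\]
with pair weights $w_{\{i, j\}} = 2 c(i) c(j) > 0$ (automatically positive) and singleton weights $w_{\{i\}} = c(i)(K - 2 c(\actions) + c(i))$, which are non-negative as soon as $K \geq 2 c(\actions) - 1 = 2^{n+1} - 3$. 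This is a valid coverage representation on a universe consisting of one element per action and one per unordered pair of actions.

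The second step is to count critical values of $\alpha$. Strict concavity of $\phi$ implies that the optimizer $x^\star(\alpha)$ of $\alpha \phi(x) - x$ over $\{0, \ldots, 2^n - 1\}$ is piecewise constant and strictly monotone in $\alpha$, jumping from $x = k-1$ to $x = k$ exactly at $\alpha_k := 1/(\phi(k) - \phi(k-1)) = 1/(K - 2k + 1)$. For $K = 2^{n+1}$ and $k = 1, \ldots, 2^n - 1$, these yield $2^n - 1$ pairwise distinct values lying in $[1/(2^{n+1} - 1),\, 1/3] \subset (0, 1]$, and at each transition the $f$-value of the demanded set strictly increases (because $\phi$ is strictly increasing on our range). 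Hence by Corollary~\ref{cor:structure} each $\alpha_k$ belongs to $\orbit{f, c}$, giving $|\orbit{f, c}| = 2^n - 1$.

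The only nontrivial step is the coverage verification. A priori it is surprising that a strictly concave-in-$c(S)$ function should be coverage, but the quadratic choice $\phi(x) = x(K - x)$ wins because it is a degree-$2$ polynomial in the incidence variables $\mathbbm{1}[i \in S]$, so all M\"obius coefficients of order $\geq 3$ vanish automatically, leaving only singletons and pairs, both of which can be made non-negative by choosing $K$ large enough.
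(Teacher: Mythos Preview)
Your proposal is correct, and it takes a genuinely different and more elementary route than the paper.

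The paper proceeds by induction on $n$: starting from a single-action base case, it shows how to take a coverage function $f$ on $k$ actions with $2^k - 1$ critical values and produce a coverage function $g$ on $k+1$ actions with $2^{k+1} - 1$ critical values. The new action $k+1$ is made so expensive that it is never demanded in a low range of $\alpha$ (reproducing a scaled copy of the original critical set), is always demanded in a high range (reproducing another copy), and contributes one additional critical value in between. Verifying that the coverage property is preserved under this recursion requires a separate structural claim.

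Your construction is direct and explicit: binary costs $c(i) = 2^{i-1}$ make $c(S)$ hit every integer in $\{0, \ldots, 2^n - 1\}$, and the quadratic $f(S) = c(S)(K - c(S))$ collapses the agent's problem to a one-dimensional strictly concave optimization, forcing the demanded ``level'' to step through every integer. The key observation that $f$ is coverage because a degree-two polynomial in incidence variables has vanishing M\"obius coefficients beyond pairs is clean, and the choice $K = 2^{n+1}$ makes the singleton weights $c(i)(K - 2c(\actions) + c(i)) = c(i)(2 + c(i)) > 0$ with room to spare.

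Both arguments are valid; yours is shorter, gives a closed form, and avoids the recursive bookkeeping. The paper's inductive scheme is more modular and might be easier to adapt if one wanted to build similar examples inside other function classes where no convenient quadratic is available.
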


\begin{proof}
    For simplicity of presentation, we present a proof for $f$ values that are not necessarily in $[0,1]$; this can be easily scaled. 

    We prove the theorem by induction on the size of the action set. 
    For $n = 1$, it is trivial. (E.g., $f(1)=2$, and $c(1)=1$, gives $|\orbit{f,c}|=1$.)

    Assume there exist a coverage success probability function $f$ and an additive cost function $c$ over a set of actions $\actions$ of size $k$ such that $|\orbit{f,c}|=2^k-1$, and let $\alpha_1 < \ldots < \alpha_{2^k-1}$ be the critical values of $\alpha$ in $\orbit{f,c}$.

    Let $g$ be a success probability function over the set of actions $\actions \cup\{k+1\}$, given by
    $$
        g(S)=
        \begin{cases}
            \beta_1 \cdot f(S) & \mbox{if } S \subseteq \actions\\
            \beta_2 \cdot f(\actions) +f(S\setminus\{k+1\}) & \mbox{if } k+1 \in S
        \end{cases}
    $$
    where $\beta_1 = \frac{10 \cdot \alpha_{2^k-1}}{\alpha_1}$ and $\beta_2=10 \cdot \beta_1$.
    The function $g$ is a coverage function (see Claim~\ref{cl:submodular} in the appendix).
    Let $\hat{c}$ be the additive cost function over actions in $\actions \cup\{k+1\}$ defined as $\hat{c}(a) =c(a)$ for every $a\in \actions$, and $\hat{c}({k+1}) = 20\cdot \alpha_{2^k-1}\cdot f(\actions)$.
    
    We show that there are $2^k-1$ critical values of $\alpha$ in the range $\alpha \leq \frac{\alpha_{2^k-1}}{\beta_1}$, a single critical value of $\alpha$ in the range $(\frac{\alpha_{2^k-1}}{\beta_1},\frac{\alpha_1}{2})$, and $2^k-1$ additional critical values of  $\alpha$ in the range $\alpha > \frac{\alpha_1}{2}$, amounting to $2^{k+1}-1$ critical $\alpha$'s.
    
    For every $\alpha \leq \frac{\alpha_{2^k-1}}{\beta_1}$ the marginal utility of action $k+1$ with respect to any set $S \subseteq \actions$ is at most 
    \begin{eqnarray*}
        & & \alpha\left(\beta_2 f(\actions) + f(S) - \beta_1 f(S)\right) - \hat{c}(k+1) \\
        & \leq & 
        \alpha \cdot \beta_2 \cdot f(\actions) -\hat{c}(k+1) \\
        & \leq & 10\cdot \alpha_{2^k-1} \cdot f(\actions) - 20\cdot \alpha_{2^k-1}\cdot f(\actions) < 0, 
    \end{eqnarray*}
    where the first inequality follows by $\beta_1 \geq 1$, and the second inequality follows by the range of $\alpha$ and by substituting $\beta_2/\beta_1=10$.
    It follows that for every $\alpha \leq \frac{\alpha_{2^k-1}}{\beta_1}$, action $k+1$ is never included in any demanded set, thus $g(S)=\beta_1 f(S)$. Therefore,  for every $i=1,\ldots,2^k-1$, $\frac{\alpha_i}{\beta_1}$ is a critical $\alpha$ .
    
    We next show that in the range $(\frac{\alpha_{2^k-1}}{\beta_1},\frac{\alpha_1}{2})$, there must exists an additional critical $\alpha$.
    By Corollary~\ref{cor:structure}, every critical $\alpha$ leads to a demanded set of a strictly higher value. Thus, the demanded set at $\alpha=\frac{\alpha_{2^k-1}}{\beta_1}$ must be $\actions$.
    We now show that for $\alpha= \alpha_1/2$, action $k+1$ must be in every set in the demand. 
    Indeed, the utility from action $k+1$ alone is 	
    \begin{eqnarray*}
        & & \alpha\cdot \beta_2 \cdot f(\actions) -\hat{c}(k+1) \\  
        & = &  \frac{\alpha_1}{2} \cdot \frac{100 \cdot \alpha_{2^k-1}}{\alpha_1} \cdot f(\actions) -20\cdot \alpha_{2^k-1}\cdot f(\actions) \\
        & = & 30 \cdot \alpha_{2^k-1} \cdot f(\actions) ,
    \end{eqnarray*}
    while the utility from any set that does not contain action $k+1$ is at most  
    $$\alpha \cdot \beta_1\cdot  f(\actions) = \frac{\alpha_1}{2} \cdot \frac{10 \cdot \alpha_{2^k-1}}{\alpha_1} \cdot f(\actions) = 5 \cdot   \alpha_{2^k-1} \cdot f(\actions)  .$$
    The same argument shows that $k+1$ must be in the demand of every set for $\alpha>\frac{\alpha_1}{2}$.
    We conclude that there must exists a critical alpha in the range $(\frac{\alpha_{2^k-1}}{\beta_1},\frac{\alpha_1}{2})$ and that action $k+1$ must be in the demand of this critical.
    Moreover, for $\alpha=\frac{\alpha_1}{2}$, the demand is exactly $\{k+1\}$.
    This is since for every action $a\neq k+1$ it holds that  $\alpha f(a \mid \{k+1\}) -c(a) = \alpha f(a)-c(a)  <0$, since $\alpha<\alpha_1$
    Thus, all other actions has a negative marginal utility since, and are not in the demand set.
    
    For $\alpha > \frac{\alpha_1}{2}$, for every set $S$, the marginal utility of $S$ with respect to the action $k+1$ is the same for $f$ and $g$. Thus, every $\alpha_i$ in this range is also critical. 
\end{proof}

Our next theorem establishes NP-hardness for computing an optimal contract under submodular functions, using a reduction from subset sum. Concretely, it is NP-hard to compute the optimal contract for the case that function $f$ is budget additive, meaning that it is given by integer values $f(\{1\}), \ldots, f(\{n\})$ and $B$ such that $f(S) = \min\{B, \sum_{i \in S} f(\{i\})\}$.

\begin{theorem} \label{thm:nphard}
    The optimal contract problem for budget-additive (and, hence, submodular) success probability functions and additive cost functions is NP-hard.
\end{theorem}

\begin{proof}
    We prove the theorem by a reduction from \textsc{Subset-Sum}.
    This problem receives as input a (multi-)set of positive integer values $X=\{x_1,\ldots,x_n\}$ and an integer value $Z$. The question is whether there exists a subset $S \subseteq X$ such that $\sum_{j\in S} x_j = Z$.
    W.l.o.g., assume that $x_i<Z$ for all $i$ (all numbers greater than $Z$ can be ignored), and that $\sum_{i\in X} x_i>Z$ (otherwise this is an easy instance).
	
    Given an instance $(x_1,\ldots,x_n,Z)$ to subset-sum, construct an instance to the optimal contract problem for budget additive functions over $n$ actions as follows.  
    For every action $i=1,\ldots,n$, set $f(\{i\})=x_i$, and set $B=Z$. I.e., for every set $S$, $f(S)=\min(Z,\sum_{i\in S} x_i)$. 
    Let the cost function be $c(i)=\epsilon \cdot x_i$, where $\epsilon=\frac{1}{Z^2}$.
	
    If there exists a set $S$ such that $\sum_{i \in S} x_i = Z$, then for a contract of $\alpha\geq\epsilon$ the agent's best response is the set $S$, and for $\alpha<\epsilon$ the agent's best response is the empty-set.
    Thus, the optimal contract is to set $\alpha=\epsilon$ where the principal utility is $(1-\epsilon)\cdot Z$.

    Consider next the case where there does not exist a set $S$ such that $\sum_{i \in S} x_i = Z$. 
    Let $Z_1 =\argmin \{z > Z \mid  \exists S\subseteq [n]. \sum_{i \in S} x_i =z \}$, and let $S_1$ be the set that sums to $Z_1$.
    Similarly, let $ Z_2 =\arg\max \{z < Z \mid \exists S\subseteq [n]. \sum_{i \in S} x_i =z \}$, and let $S_2$ be the set that sums to $Z_2$.

    Every set $S$ such that $\sum_{i \in S} x_i > Z$ gives an agent's utility of $\alpha Z -\epsilon \sum_{i \in S} x_i$. Thus, $S_1$ is optimal among all these sets. 
    Similarly, every set $S$ such that $\sum_{i \in S} x_i < Z$ gives an agent's utility of $(\alpha-\epsilon) \sum_{i \in S} x_i$. Thus, for $\alpha\geq \epsilon$, $S_2$ is optimal among all these sets.
    It follows that there are exactly two critical $\alpha$'s, namely $\alpha_1=\epsilon$, where the agent selects $S_2$ and the principal's utility is $(1-\epsilon)Z_2$, and $\alpha_2=\frac{Z_1-Z_2 }{Z-Z_2}\cdot \epsilon$, where the agent selects $S_1$ and the principal's utility is $(1-\frac{Z_1-Z_2 }{Z -Z_2}\cdot \epsilon) Z$.

    We claim that the latter contract is better for the principal. 
    To see this, observe first that $Z_1-Z_2<Z$. Indeed, let $i$ be an arbitrary action in $S_1 \setminus S_2$ (such an action must exist). It holds that $Z_2+x_i > Z$ (else, contradicting the choice of $Z_2$). It further follows that $Z_2+x_i \geq Z_1$ (else, contradicting the choice of $Z_1$). On the other hand, $x_i<Z$. We get $Z_1 \leq Z_2 + x_i < Z_2 + Z$, as claimed.
    Furthermore, $Z-Z_2 \geq 1$. It follows that $\frac{Z_1-Z_2 }{Z -Z_2}<Z$. We get
    $
    (1-\frac{Z_1-Z_2 }{Z -Z_2} \epsilon) Z > (1-Z \epsilon) Z > (1-\epsilon) Z_2, $
    where the last inequality follows by $Z_2 \leq Z-1$ and $\epsilon = 1/Z^2$. 
    Thus, the principal's utility in this case is $(1-\frac{Z_1-Z_2 }{Z -Z_2}\cdot \epsilon) Z$, which is strictly smaller than $(1-\epsilon)Z$. 

    It follows that the optimal contract is $\alpha=\epsilon$ if and only if the \textsc{Subset-Sum} instance is a YES instance.
\end{proof}

\subsection{FPTAS for Arbitrary Functions}
\label{sec:ptas}

We next present an FPTAS for the optimal contract problem with general success probability $f$ and additive costs $c$, with value and demand oracle access. 
Our result is best possible in two ways: First, we know that computing an optimal contract is \textsf{NP}-hard for submodular $f$ and additive $c$ (see Theorem~\ref{thm:nphard}).
Second, it was recently shown that computing an optimal contract for submodular $f$ and additive $c$ requires exponentially many demand oracle calls \cite{DuettingFGT24}.

\begin{theorem}[FPTAS with demand oracle]\label{thm:FPTAS}
    For a general success probability function $f$ and additive cost function $c$,
    Algorithm~\ref{alg:fptas} gives a $(1-\epsilon)$-approximation to the optimal principal utility with $O\left(\frac{n^2}{\epsilon} \right)$ many value and demand queries to the success probability function $f$.
\end{theorem}

The main challenge in proving Theorem~\ref{thm:FPTAS} is that the optimal $\alpha^\star$ could be very close to $1$. Indeed, if we knew that $\alpha^\star$ was bounded away from $1$, we could find a close to optimal contract through a fine-enough discretization of the contract space.
To bound $\alpha^\star$, we use that, by a reduction to the non-combinatorial model of \cite{DBLP:conf/ec/DuttingRT19}, the gap between the maximum welfare and the principal's utility under the optimal contract is at most $2^n$. 

\begin{observation}[D\"utting et al.~\cite{DBLP:conf/ec/DuttingRT19}]\label{obs:welfare-single}
    For a general success probability function $f$ and additive
    cost function $c$, there exists a contract $\alpha$ that guarantees a utility for the principal of at least  $\frac{OPT}{2^n}$, where $OPT=\max_{S} (f(S)-c(S))$.
\end{observation}

Using this observation, we now show our key lemma, which sets lower and upper bounds on $\alpha^\star$.
\begin{lemma}\label{lem:4m}
    For a general success probability function $f$ and an additive cost function $c$,
    let $\alpha^\star $, $S^\star $ be the optimal contract and the optimal set of actions 
    respectively, let $j^\star \in \arg\max_{j \in S^\star} c(j)$, and let $OPT=\max_{S} (f(S)-c(S)) > 0$.
    Then we have
    $\alpha_{\min} \leq \alpha^\star \leq  \alpha_{\max}$, where $\alpha_{\min} =1- \frac{OPT}{ c({j^\star}) + OPT}$ and $\alpha_{\max} = 1- \frac{OPT}{n\cdot 2^n (c({j^\star}) + OPT )}$.
\end{lemma}

\begin{proof}
    The set that maximizes $f(S)-c(S)$ is also the best response of the agent under contract $\alpha=1$.
    We note that since $OPT>0$, it must be that $\alpha^\star <1 $.
    By Observation~\ref{obs:welfare-single}, it holds that 
    \begin{equation}
        \frac{OPT}{2^n}\leq (1-\alpha^\star)  f(S^\star)   \leq   f(S^\star) -c(S^\star) \leq OPT,\label{eq:welfare3}
    \end{equation}
    where the second inequality is since the utility of the principal under contract $\alpha^\star$ is bounded by the welfare generated by $S^\star$ as the utility of the agent is non-negative.
    
    We first show that $\alpha^\star \geq \alpha_{\min}$. It holds that 
    $$ 
        \alpha^\star \cdot (  c(S^\star) + OPT  )-c(S^\star) \geq \alpha^\star \cdot (  c(S^\star) + f(S^\star)-c(S^\star) )-c(S^\star) =  \alpha^\star \cdot f(S^\star) -c(S^\star) \geq 0, 
    $$
    where the first inequality is by Inequality~\eqref{eq:welfare3}, and the last inequality is since the utility of the agent under $\alpha^\star$ and $S^\star$ is non-negative.

    By rearranging, we get that 
    $$ 
        \alpha^\star \geq \frac{c(S^\star)}{ c(S^\star) + OPT} \geq \frac{c({j^\star})}{c({j^\star}) +OPT } = 1- \frac{OPT}{ c({j^\star}) + OPT},
    $$
    as needed.
    
    To show that $\alpha^\star \leq \alpha_{\max}$, observe that
    $$ 
        (1-\alpha^\star) (c(S^\star) + OPT) \stackrel{\eqref{eq:welfare3}}{\geq }  (1-\alpha^\star) (c(S^\star) + f(S^\star) - c(S^\star)) =  (1-\alpha^\star) f(S^\star) \stackrel{\eqref{eq:welfare3}}{\geq } \frac{OPT}{2^n}.
    $$
    Rearranging this, we obtain
    $$  
        \alpha^\star \leq 1- \frac{OPT}{2^n (c(S^\star) + OPT )} \leq 1- \frac{OPT}{n\cdot 2^n (c({j^\star}) + OPT )}  ,
    $$
    which concludes the proof.
\end{proof}

\begin{algorithm}
    \caption{FPTAS for a single agent using value and demand oracles}\label{alg:fptas}
   \hspace*{\algorithmicindent} \textbf{Parameter:}  $\epsilon \in (0,1) $ \\
   \hspace*{\algorithmicindent} \textbf{Input:}  Costs $c(1),\ldots,c(n) \in \reals_{\geq 0}$, value and demand oracle access to a function $f:2^A \rightarrow [0,1]$  \\
    \hspace*{\algorithmicindent} \textbf{Output:}  A contract $\alpha$ 
    \begin{algorithmic}[1]
    \STATE Let $\alpha =0$ and $S = \arg\max_{\hat{S}: c(\hat{S}) = 0} f(\hat{S})$ \label{st:init}
    \STATE Let $OPT = \max_{\hat{S} \subseteq \actions} (f(\hat{S}) - c(\hat{S}))$
    \FOR{$j\in\actions$ with $c(j)>0$}
    \FOR{$k=0,\ldots,\lceil \log_{1/(1-\epsilon)}n \cdot 2^n \rceil $}
    \STATE Let $\alpha_{j, k} = 1 - (1-\epsilon)^{k+1} \cdot  \frac{OPT}{ c({j}) + OPT}$
    \STATE Let $S_{j,k} = \arg\max_{\hat{S}\subseteq \actions}  \left(f(\hat{S}) -\sum_{j'\in \hat{S}} \frac{c({j'})}{\alpha_{j,k}}\right)$ 
    \IF{$(1-\alpha_{j, k}) f(S_{j,k}) > (1-\alpha)f(S)$}
    \STATE $\alpha=\alpha_{j, k}$
    \STATE $S= S_{j,k}$
    \ENDIF
    \ENDFOR
    \ENDFOR
    \RETURN $\alpha$ 
    \end{algorithmic}
\end{algorithm}
 
We are now ready to prove Theorem~\ref{thm:FPTAS}.

\begin{proof}[Proof of Theorem~\ref{thm:FPTAS}]
    Let $\alpha^\star,S^\star$ be the best contract with its best response. Note that if $u_p(S^\star,\alpha^\star) \leq 0$, the claim holds trivially because Algorithm~\ref{alg:fptas} ensures that $u_p(S, \alpha) \geq 0$. Otherwise, $u_p(S^\star,\alpha^\star) = (1-\alpha^\star)f(S^\star) > 0$. It must then hold that $\alpha^\star < 1$ and $f(S^\star) > 0$. 
    So, in particular, $S^\star \neq \emptyset$, so that $j^\star \in \arg\max_{j \in S^\star} c(j)$ is well defined.  Also note that $OPT = \max_{\hat{S} \subseteq A} (f(\hat{S}) - c(\hat{S})) \geq f(S^\star) - c(S^\star) \geq (1-\alpha^\star) f(S^\star) >0$, because the agent's utility from $S^\star$ is non-negative. 
    
    If $c({j^\star})=0$, then $c(S^\star) = 0$ and the optimal contract is $\alpha=0$, which is the contract considered in Step~\ref{st:init} of the algorithm. 
    Otherwise, $c({j^\star}) > 0$. Consider the iteration of Algorithm~\ref{alg:fptas} in which $j = j^\star$.
    Let us denote $\alpha_{\min} =1- \frac{OPT}{ c({j^\star}) + OPT}$ and $\alpha_{\max} = 1- \frac{OPT}{n\cdot 2^n (c({j^\star}) + OPT )}$.
    We claim that then there must be a value of  $k \in \{0,\ldots,\lceil \log_{1/(1-\epsilon)}n \cdot 2^n \rceil\}$ such that $1 - \alpha_{j, k}  \leq 1 - \alpha^\star \leq  \frac{1 - \alpha_{j, k}}{1-\epsilon}$.
    Indeed, for $k = 0$ we have $\frac{1-\alpha_{j^\star,0}}{1-\epsilon} =  1- \alpha_{\min} \geq  1-\alpha^\star$, where the inequality follows by Lemma~\ref{lem:4m}, while for $k =\lceil \log_{1/(1-\epsilon)}n\cdot 2^n \rceil$ we have $1-\alpha^\star \geq 1- \alpha_{\max} \geq 1-\alpha_{j^\star,\lceil \log_{1/(1-\epsilon)}n\cdot 2^n \rceil}$, where the first inequality follows again by Lemma~\ref{lem:4m}. So there must be a $k \in \{0,\ldots,\lceil \log_{1/(1-\epsilon)}n \cdot 2^n \rceil\}$ with the desired properties.
    
    We claim that for this choice of $j,k$, contract $\alpha_{j, k}$ provides a $(1-\epsilon)$-approximation to the optimal contract.
    To see this, let $S$ be the choice of the agent under $\alpha_{j, k}$.
    Using Proposition~\ref{prop:monotone}, since $\alpha_{j, k} \geq \alpha^\star$, it must hold that $f(S) \geq f(S^\star)$. We thus obtain,
    \[
        (1-\alpha_{j, k}) f(S) \geq (1-\alpha_{j, k}) f(S^\star) \geq  (1-\epsilon)(1-\alpha^\star) f(S^\star) ,
    \]
    which completes the proof.
\end{proof}

\begin{remark}
    \emph{We remark that Theorem~\ref{thm:FPTAS} can be adapted to beyond additive cost functions. 
    In particular, under subadditive cost functions, if the principal is assumed to have access to a best-response oracle (i.e., given, a contract $\alpha$, what is the set of actions that maximizes the agent's utility), then Algorithm~\ref{alg:fptas} (when replacing Steps 2 and 6 with corresponding calls to a best-response oracle) finds a $(1-\epsilon)$-approximation in time polynomial in $n$ and $1/\epsilon$.
    The only differences in the proof (when considering subadditive cost functions) are in the last derivation of the proof of Lemma~\ref{lem:4m} and the argument for the $c({j^\star}) = 0$ case in the proof of Theorem~\ref{thm:FPTAS}, which now hold by subadditivity.
}
\end{remark}

\subsection{Weakly Poly-Time Algorithm for Instances with Poly-Size Critical Sets}
\label{sec:weakly-poly}

Finally we show that in cases where the critical set is of polynomial size, one can compute the optimal contract in weakly polynomial time. Recall that the number of $\g_{f,c}$ queries required by Algorithm~\ref{alg:contract} is the size of $|\orbitfc|$ multiplied by the number of $\g_{f,c}$ queries required to implement $\successor_{f,c}$. 

We show that if all $f$ and $c$ values are multiples of $2^{-k}$,  $\successor_{f,c}$ can be implemented using $2k+1$ queries to a $\g_{f,c}$ oracle.
Thus, Algorithm~\ref{alg:contract} can be computed using $(2k+1) |\orbitfc|$ queries. 
 
\begin{theorem}\label{thm:2k1}
    Consider the case where all values of $f$ and $c$ are multiples of $2^{-k}$.
    Then, $\successor_{f,c}(\alpha)$ can be computed using $2k + 1$ queries to a $\g_{f,c}$ oracle.
\end{theorem}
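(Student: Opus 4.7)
The plan is to perform a bisection on the monotone step function $\g_{f,c}$, exploiting the rationality structure from Lemma~\ref{lemma:ratiosforalpha}. First I would observe that if $\alpha^\star := \successor_{f,c}(\alpha)$ is not $\nill$, then $\alpha^\star$ lies in $Q := \{a/b \mid a, b \in [2^k]\}$, and that any two distinct elements of $Q$ satisfy
\[
\left| \frac{a}{b} - \frac{c}{d} \right| \;=\; \frac{|ad - bc|}{bd} \;\geq\; \frac{1}{4^k}
\]
since the numerator is a positive integer and the denominator is at most $4^k$. Consequently, any half-open interval of length at most $1/4^k$ contains at most one element of $Q$, so once $\alpha^\star$ has been pinned inside such an interval it is determined uniquely.

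I would treat $g^\star := \g_{f,c}(\alpha)$ as given---it is produced by the previous $\successor_{f,c}$ call inside Algorithm~\ref{alg:contract} and returned alongside $\alpha^{(t)}$, and equals zero on the initial call at $\alpha=0$. The routine would first spend one query on $\g_{f,c}(1)$: by Corollary~\ref{cor:mon-g}, if this equals $g^\star$ then no critical point lies in $(\alpha,1]$ and I return $\nill$. Otherwise I initialize $L=\alpha$, $R=1$ with the invariant $\g_{f,c}(L)=g^\star<\g_{f,c}(R)$, which forces $\alpha^\star\in(L,R]$, and then run $2k$ textbook bisection steps: query $\g_{f,c}(M)$ at $M=(L+R)/2$, set $R\leftarrow M$ if the value exceeds $g^\star$, and $L\leftarrow M$ otherwise. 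After $2k$ steps the length of $(L,R]$ is $(1-\alpha)/4^k\leq 1/4^k$, so by the separation bound above it contains exactly one element of $Q$, which is $\alpha^\star$. This fraction is recovered from the numerical endpoints with no further oracle calls by a standard Euclidean/continued-fraction procedure applied to $L$ and $R$.

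The hard part is the very tight accounting of the extra ``$+1$'' query. The lone non-bisection query at $\g_{f,c}(1)$ must play two roles at once: detecting the $\nill$ case and certifying the upper endpoint that the bisection invariant requires. Moreover, the successor routine must also return $\g_{f,c}(\alpha^\star)$ so that the outer loop of Algorithm~\ref{alg:contract} does not have to re-query it on the next iteration---otherwise the budget would creep up to $2k+2$. Fortunately this value is available for free: it equals the $\g_{f,c}$-value stored from the last bisection step that moved $R$, or $\g_{f,c}(1)$ itself in the degenerate case where $R$ was never moved below $1$. Amortizing $g^\star$ across consecutive $\successor_{f,c}$ calls in this way is what keeps the per-call oracle cost to exactly $2k+1$.
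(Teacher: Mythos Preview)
Your proposal is correct and takes essentially the same approach as the paper: an initial query at $\g_{f,c}(1)$ to detect the $\nill$ case, followed by $2k$ bisection steps on the monotone step function, terminating when the interval has length at most $2^{-2k}$ so that it contains a unique element of $Q$ (recovered by a continued-fraction / Stern--Brocot style search; the paper cites Kwek--Mehlhorn). Your write-up is in fact more careful than the paper's about the ``$+1$'' accounting---you make explicit that $\g_{f,c}(\alpha)$ must be carried over from the previous call and that $\g_{f,c}(\alpha^\star)$ is harvested from the last right-moving bisection step, whereas the paper leaves this amortization implicit.
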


\begin{proof}
    To compute $\successor_{f,c}(\alpha)$, we return $\nill$ if $\g(1) = \g(\alpha)$, otherwise, we return the result of \textsc{Search}($\alpha$, $1$).

\bigskip

\begin{algorithm}[H]
    \caption{\textsc{Search}($\alpha_L$, $\alpha_R$)}
    \label{alg:search}
    \begin{algorithmic}
    \IF {$\alpha_R - \alpha_L \leq \frac{1}{2^{2k}}$}
    \STATE return the unique element of $\ratios \cap (\alpha_L, \alpha_R]$, e.g., by the algorithm of Kwek and Mehlhorn \cite{DBLP:journals/ipl/KwekM03}
    \ELSE \IF{$\g(\frac{\alpha_L + \alpha_R}{2}) > \g(\alpha_L)$} \STATE \textbf{return} \textsc{Search}($\alpha_L$, $\frac{\alpha_L + \alpha_R}{2}$) \ELSE \STATE \textbf{return} \textsc{Search}($\frac{\alpha_L + \alpha_R}{2}$, $\alpha_R$)
    \ENDIF
    \ENDIF
\end{algorithmic}	
\end{algorithm}

\bigskip

    \textsc{Search}($\alpha_L$, $\alpha_R$) returns the smallest $\alpha \in \orbitfc \cap (\alpha_L, \alpha_R]$. Note that $\orbitfc \cap (\alpha_L, \alpha_R]$ is non-empty if and only if $\g(\alpha_R) > \g(\alpha_L)$. This is maintained as an invariant.

    Observe that $\alpha_R - \alpha_L$ halves in every step recursive step. Therefore, after at most $2k$ recursive calls we have $\alpha_R - \alpha_L \leq \frac{1}{2^{2k}}$ and the algorithm terminates.

    It remains to show correctness of the algorithm. First, observe that $\successor_{f,c}(\alpha) \in (\alpha_L, \alpha_R]$ at all times. Furthermore, when $\alpha_R - \alpha_L \leq \frac{1}{2^{2k}}$, there is a unique element in $\ratios \cap (\alpha_L, \alpha_R]$ and it is in $\orbitfc$. This is for the following reasons: (i) Suppose $\frac{a}{b} < \frac{a'}{b'} \in \ratios \cap (\alpha_L, \alpha_R]$, then we have $\frac{a'}{b'} -\frac{a}{b} = \frac{a'b - ab'}{bb'} \geq \frac{1}{2^{2k}}$. This is a contraction to $\alpha_R - \alpha_L \leq \frac{1}{2^{2k}}$. So, there cannot be more than one element inside $\ratios \cap (\alpha_L, \alpha_R]$. (ii) $\orbitfc \cap (\alpha_L, \alpha_R]$ has to be non-empty because otherwise $\g(\alpha_L) = \g(\alpha_R)$.
\end{proof}

Note that this algorithm only runs in weakly polynomial time as its running time depends on $k$. Subsequent to the first publication of this result, algorithms running in strongly polynomial were given by Deo-Campo Vuong et al.~\cite{DeoCampoVuongEtAl24} and D\"utting et al.~\cite{DuettingFGT24}.

\section{Beyond Binary Rewards}
\label{sec:linear-optimal}

In this section we study a generalization of the binary outcome model, where the outcome space is $\Omega = \{0,\ldots,m-1\}$. Each outcome $j \in \Omega$ is associated with a non-negative reward $r(j)$. We index outcomes so that $r(0) \le r(1) \le \ldots \le r(m-1)$. Every set of actions $\set \subseteq \actions$ entails some probability distribution over rewards, with $f_j(S)$ being the probability of outcome $j$ under actions $S$.
We use $R(\set)=\sum_{j} f_j(S) \cdot r(j)$ to denote the expected reward to the principal given action set $\set$. We assume that $R$ is monotone and normalized (i.e., $R(S) \geq R(S')$ for every $S'\subseteq S$, and $R(\emptyset)=0$). Note that this implies that $r(0) = \min_{j \in \Omega} r(j) = 0$.
A contract in this model is a function $t: \Omega \rightarrow \reals_{\geq 0}$, specifying the payment from the principal to the agent for every observed outcome.
A contract is said to be {\em linear} if there exists some $\alpha \in [0,1]$ such that $t(j)=\alpha \cdot r(j)$ for every  $j$.

The agent's utility for a set of actions $\set$ under contract $t$ is $\sum_j f_j(S) \cdot t(j) - c(S)$, while the principal's utility is $\sum_j f_j(S) \cdot r(j) - \sum_j f_j(S) \cdot t(j)$. We assume that the agent chooses a set of actions $S$ that maximizes his utility, breaking ties in favor of the principal. Note that in a linear contract with $\alpha < 1$ this is equivalent to assuming that the agent favors sets with higher expected reward.

As in the binary outcome case, we can make some structural assumptions on the expected principal's reward $R(\set)$. 
For example, that $R(\set)$ is submodular or gross substitutes.  
We show that when restricting attention to linear contracts, all our (positive and negative) results from the binary outcome model continue to hold in the general model. 
Moreover, we show that linear contracts are {\em max-min optimal} among all possible contracts, meaning that if the principal knows the expected reward $R(S)$ for every set of actions $S \subseteq A$, but not the probability distribution over rewards obtained from $S$, then a linear contract maximizes the principal's utility in the worst case over all distributions compatible with the known expected rewards. The latter is a direct corollary of Theorem~1 in \cite{DBLP:conf/ec/DuttingRT19}.

\subsection{Robust Optimality of Linear Contracts}
\label{sec:robust}
We start by showing that linear contracts are robustly optimal in the general model.
To do so we introduce some notation.
Let $\mathcal{D}$ denote the collection of all sets of probability distributions over outcomes $j \in \Omega$ with (fixed) non-negative rewards $r(j)$ that are compatible with the known expected rewards $R(\set) \in [r(0),r(m-1)]$ for $\set \subseteq \actions$.
I.e., $\mathcal{D}$ is the collection of 
all sets of probability distributions
$\{D_\set\}_{\set\subseteq \actions}$ over outcomes such that $\E_{j \sim D_\set}[r(j)]=R(\set)$ for every $\set\subseteq \actions$.

Let $\utilmin{t}$ denote the worst-case  principal's utility under contract $t$, over all sets of probability distributions in $\mathcal{D}$, fixing expected rewards $\{R(S)\}_{S \subseteq A}$.
I.e., 
\[
    \utilmin{t} = \min_{D\in \mathcal{D}} \E_{ j \sim D_{S_{t}}} [r(j) - t(j)],
\] 
where $S_t$ is the agent's best response action set for contract $t$.
The following theorem shows that there exists a linear contract that maximizes $\utilmin{t}$ among all contracts $t$.

\begin{theorem}[cf.~Theorem 1 in \cite{DBLP:conf/ec/DuttingRT19}]\label{th:robust-optimality}
    For every function $R$ of expected rewards over rewards $0 = r(0) \le r(1) \le \ldots \le r(m-1)$, additive cost function $c$, and contract $t$, there exists a linear contract $\alpha$ such that $\utilmin{t} \leq \utilmin{\alpha}$.
\end{theorem}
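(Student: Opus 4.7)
The plan is, for any contract $t$, to exhibit a single adversarial choice of distributions $D \in \mathcal{D}$ against $t$ and a corresponding linear contract $\alpha$ such that the principal's utility under $(t,D)$ is at most $\utilmin{\alpha}$. Since $\utilmin{t}$ is defined as a minimum over $\mathcal{D}$, producing a single bad $D$ already bounds it from above.

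First I would pick any $M \geq \max_{S \subseteq A} R(S)$ (WLOG $M>0$, else $R \equiv 0$ and the claim is trivial), set
\[
\beta := \frac{t(M) - t(0)}{M}, \qquad \alpha := \max\{0, \min\{1, \beta\}\},
\]
and take the candidate linear contract to be $\alpha$. The adversary's strategy is: for every $S \subseteq A$, let $D_S$ be the two-point distribution on $\{0, M\}$ with $\Pr[X_S = M] = R(S)/M$ and $\Pr[X_S = 0] = 1 - R(S)/M$. Since $\mathbb{E}_{D_S}[X_S] = R(S)$, we have $D \in \mathcal{D}$, and a direct computation gives $\mathbb{E}_{D_S}[t(X_S)] = t(0) + \beta R(S)$.

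The key observation is that the adversary has effectively converted $t$ into the linear contract $\beta$, up to a constant shift. Under $(t, D)$ the agent's expected utility from $S$ is $t(0) + \beta R(S) - c(S)$, which differs from the agent's utility $\beta R(S) - c(S)$ under the linear contract $\beta$ only by the additive constant $t(0)$; hence the agent's demand set is exactly $\demandset(\beta)$. Similarly, the principal's utility from $S$ under $(t,D)$ is $(1-\beta) R(S) - t(0)$, differing from its value $(1-\beta) R(S)$ under the linear contract $\beta$ only by the same additive constant, so the principal-favoring tie-break selects the same set $S_\beta$. Therefore $\utilmin{t} \leq (1 - \beta) R(S_\beta) - t(0)$.

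Finally I would split into cases on $\beta$. If $\beta \in [0,1]$ then $\alpha = \beta$ and $S_\alpha = S_\beta$, and since a linear contract yields deterministic principal utility, $\utilmin{\alpha} = (1-\alpha) R(S_\alpha)$; combined with $t(0) \geq 0$ this gives $\utilmin{t} \leq \utilmin{\alpha}$. If $\beta > 1$ then $(1-\beta) R(S_\beta) \leq 0$, hence $\utilmin{t} \leq -t(0) \leq 0 = \utilmin{0}$ (for the linear contract $\alpha = 0$ the agent best-responds with $\emptyset$ and the principal gets $0$). If $\beta < 0$ then under $(t,D)$ the agent's best response becomes $\emptyset$, yielding principal utility $-t(0) \leq 0$, so $\alpha = 0$ again works. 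The main obstacle is verifying that the agent's best response under the constructed $(t,D)$ matches---including the tie-breaking---the response under the linear contract $\beta$; this is immediate from the observation that the agent-side and principal-side utilities differ only by additive constants, so both the argmax and the principal-preferred tie-break are preserved.
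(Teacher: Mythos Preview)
Your proposal is correct and follows essentially the same approach as the paper: both pick a two-point adversarial distribution supported on $\{0,M\}$ (the paper uses $M=R(A)$, you allow any $M\ge R(A)$), observe that under this distribution the agent's and principal's utilities coincide with those under a linear contract up to an additive constant, and then compare to that linear contract. Your treatment of the edge cases $\beta>1$ and $\beta<0$ is a bit more explicit than the paper's (and your stated clipped $\alpha$ does not quite match the $\alpha=0$ you actually use in those cases), but the argument is sound.
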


\begin{proof}
    We first note that the best response of the agent to a linear contract $\alpha$ only depends on the reward function $R$, and not on the specific compatible collection of distributions. To see this, observe that the agent's utility for taking the set of actions $S$ is $\alpha\cdot R(S) -c(S)$. Note that this implies that the principal's utility under the agent's best response does not depend on the compatible collection of distributions either. Indeed, under the agent's best response $S_{\alpha} \in \arg\max (\alpha \cdot R(S) -c(S))$, 
    the principal's utility is $(1-\alpha) \cdot R(S_{\alpha})$. Thus, in order to prove the theorem, it suffices to show that for every $R$ and every contract $t$ there exists a compatible collection of distributions and a linear contract $\alpha$ such that the principal's utility under $\alpha$ is at least as high as the utility under contract $t$.
    
    To this end, fix $R$ and $t$, and let $\ell_0=t(0)$ and $\ell_1=t(m-1)$.
    Consider the collection of distributions $\{D_S\}_{S \subseteq \actions}$ over the two outcomes $\{0,m-1\}$, where $\Pr_{j \sim D_S}[j = m-1] = \frac{R(S)}{r(m-1)}$, and $\Pr_{j \sim D_S}[j = 0] = 1-\frac{R(S)}{r(m-1)}$. Observe that this 
    set of probability distributions is compatible with $R$.
    
    Let $S^\star$ be the best response action set of the agent to contract $t$ under distributions $\{D_S\}_{S \subseteq \actions}$. For every $\set$, it holds that
    \begin{align}
        \frac{R(S^\star)}{r(m-1)} \cdot \ell_1 + (1-&\frac{R(S^\star)}{r(m-1)}) \cdot \ell_0 - c(S^\star) \geq \label{eq:ua1}\\
        &\frac{R(S)}{r(m-1)} \cdot \ell_1 +(1-\frac{R(S)}{r(m-1)}) \cdot\ell_0 - c(S)  \notag.  
    \end{align}

    If $\ell_0 > \ell_1$, then we must have $R(S^\star) = 0$ (otherwise the LHS of Equation~(\ref{eq:ua1}) is strictly smaller than $\ell_0 - c(S^\star) \leq \ell_0$, while the RHS is at least $\ell_0$ by choosing $S = \emptyset$). 
    But since the payments are always non-negative, the principal's utility under this case is bounded by $R(S^\star)\leq 0$, while any linear contract $\alpha \in [0,1]$ yields a non-negative utility for the principal so $\utilmin{t}\leq \utilmin{\alpha}$, which concludes the proof of this case.
	
    Else ($\ell_1\geq \ell_0$), consider the linear contract  
     $\alpha= \frac{\ell_1-\ell_0}{r(m-1)}$.
    Observe that $S^\star$ is a best response to $\alpha$, since for every $S$ it holds that:
    \begin{align*}
        \frac{R(S^\star)}{r(m-1)} \cdot \alpha \cdot r(m -1) & - c(S^\star) =  \frac{R(S^\star)(\ell_1-\ell_0)}{r(m-1)} -c(S^\star) \\
        &\stackrel{\eqref{eq:ua1}}{\geq}\frac{R(S)(\ell_1-\ell_0)}{r(m-1)} -c(S) =  \frac{R(S)}{r(m-1)} \cdot \alpha \cdot r(m-1) -c(S),
    \end{align*}
    where the inequality follows by rearranging Equation~\eqref{eq:ua1}.
    Therefore, the expected utility of the principal under the linear contract  $\alpha$ is at least
    \begin{align*}
        \frac{R(S^\star)}{r(m-1)} \cdot r(m-1)\cdot (1-\alpha)  &= R(S^\star) \cdot(1-\frac{\ell_1-\ell_0}{r(m-1)}) \\ &\geq \frac{R(S^\star)}{r(m-1)}(r(m-1)-\ell_1) - \ell_0 \cdot (1-\frac{R(S^\star)}{r(m-1)}),
    \end{align*}
    where the last term is the principal's utility under contract $t$.  
    Thus, $\utilmin{t}\leq \utilmin{\alpha}$, which concludes the proof of the theorem.
\end{proof}

\subsection{Optimal Linear Contracts in the General Model}
\label{sec:general}

Finally, we show that our computational results for the binary case translate to linear contracts in the general case.
For this, observe that when restricting attention to linear contracts, the agent's best response depends only on the expected rewards of action sets, not on their distributions. 
Therefore, finding the optimal contract among all contracts in the binary outcome model is equivalent to finding the optimal one  among all linear contracts in the general model.

\begin{corollary} The following hold in the general model:
    \begin{itemize}
        \item Given a gross substitutes function $R$ of expected rewards, and an additive cost function $c$, one can compute the optimal linear contract in polynomial time. 
        \item For a budget additive function $R$ of expected rewards, and an additive cost function $c$, it is NP-hard to compute the optimal linear contract.
        \item Given an arbitrary function $R$ of expected rewards, and an additive cost function $c$, one can compute a $(1-\epsilon)$-approximation to the  optimal linear contract in weakly polynomial time.
    \end{itemize}
\end{corollary}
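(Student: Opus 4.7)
The plan is to observe that for linear contracts in the general model, the problem collapses to the binary-outcome problem from the earlier sections, with $R$ playing the role of the success probability function $f$. Concretely, under a linear contract $\alpha$, the agent's expected payment for set $S$ is $\E_{X\sim D_S}[\alpha X] = \alpha \cdot R(S)$, so the agent's expected utility is $\alpha R(S) - c(S)$, and the principal's expected utility is $(1-\alpha) R(S)$. Both quantities depend on the distributions $\{D_S\}$ only through the expectations $\{R(S)\}$. Hence, for any fixed function $R$, the agent's best response to $\alpha$ and the principal's utility are identical to those in the binary model obtained by replacing $f$ with $R$.

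I would first formalize this reduction as a lemma: the mapping $(R,c) \mapsto (f := R, c)$ is a bijection between instances of the general-outcome linear-contract problem and instances of the binary-outcome problem (after scaling, since $R$ may exceed $1$, but every argument in Sections~\ref{sec:gs}--\ref{sec:weakly-poly} carries through verbatim for a monotone normalized $f$ with a normalization factor; alternatively, scale $R$ by $1/R(\actions)$, which rescales $\alpha$ by $R(\actions)$ without changing the combinatorial structure of $\demandset(\alpha)$ or $\orbit{R,c}$). Under this correspondence, an $\alpha$ is optimal (resp.\ a $(1-\epsilon)$-approximation) for the binary instance if and only if the linear contract with the same coefficient $\alpha$ is optimal (resp.\ a $(1-\epsilon)$-approximation) among linear contracts for the general instance.

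Given the reduction, each of the three bullet points is immediate. For the first bullet, if $R$ is gross substitutes, Theorem~\ref{thm:gs-poly} computes the optimum of $(1-\alpha)\g_{R,c}(\alpha)$ in polynomial time in $n$; since the outcome dimension $m$ enters only through the per-query cost of evaluating $R(S) = \sum_j f_j(S) r(j)$, the overall running time is polynomial in $n$ and $m$. For the second bullet, the NP-hardness reduction from subset-sum produces a budget-additive $f$; interpreting $f$ as $R$ (with a deterministic reward structure, e.g.\ $m=1$ and $r(1)=1$) gives a valid instance of the linear-contract problem in the general model, so hardness transfers. For the third bullet, the FPTAS of Section~\ref{sec:ptas} (or the weakly-polynomial algorithm of Section~\ref{sec:weakly-poly}) applied to $R$ in place of $f$ yields a $(1-\epsilon)$-approximate linear contract in weakly polynomial time.

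There is no genuine obstacle here; the only care needed is the simple rescaling to match the $[0,1]$-valued normalization assumed in Section~\ref{sec:model}, and the observation that the agent's tie-breaking in favor of the principal is well-defined in the general model since the principal's expected utility under a linear contract depends only on $R(S)$. Once the reduction is stated cleanly, the three claims follow by direct citation of Theorem~\ref{thm:gs-poly}, the NP-hardness theorem of Section~\ref{sec:submodular}, and the FPTAS/weakly-polynomial results of Sections~\ref{sec:ptas}--\ref{sec:weakly-poly}, respectively.
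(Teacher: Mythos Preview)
Your proposal is correct and follows essentially the same approach as the paper: both argue that under a linear contract $\alpha$ the agent's and principal's utilities depend on the distributions only through the expectations $R(S)$, so the general linear-contract problem is equivalent to the binary problem with $f$ replaced by $R$, and the three bullets then follow by citing Theorem~\ref{thm:gs-poly}, the NP-hardness result of Section~\ref{sec:submodular}, and the FPTAS of Section~\ref{sec:ptas}. Your additional care about the $[0,1]$ normalization of $R$ is a nice touch that the paper leaves implicit.
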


\bibliographystyle{plain}
\bibliography{references}

\appendix

\section{Omitted Proofs}

\propzerozero*
\begin{proof}
    Consider a contract $t$ with $t(0) > 0$, and let $\set$ be the set of actions chosen by the agent under this contract.
    We distinguish between two cases.
    
    \textbf{Case 1:} $t(1)<t(0)$.
    In this case, consider the contract $t'$ where $t'(0)=t'(1)=0$. 
    We denote the set with all zero-cost actions by $\set_0$. By monotonicity of $f$ (and the tie-breaking of the agent in favor of the principal), under contract $t'$,  the agent selects a set of action $\set'$ with $f(\set')\geq f(\set_0)$. 
    It holds that 
    \begin{eqnarray*}
        t(0)-(t(0)-t(1))f(\set) & \geq & t(0)-(t(0)-t(1))f(\set)-c(\set) \\ & =& u_a(\set,t)\geq u_a(\set_0,t) \\ & =& t(0)-(t(0)-t(1))f(\set_0) -c(\set_0)\\ & = &t(0)-(t(0)-t(1))f(\set_0)  ,  
    \end{eqnarray*}
    which implies that $f(\set)\leq f(\set_0)\leq f(\set')$.
    It follows that $t'$ incentivizes the action set $\set'$, which has at least the same value of $f$ (as $\set)$ while paying (weakly) less, and therefore has at least the same principal's utility as $t$.

    \textbf{Case 2:} $t(1)\geq t(0)$.
    In this case, consider the contract $t'$ where $t'(0) = 0$ and $t'(1)=t(1)-t(0)$.
    Since for every action set $\set'$ it holds that 
    \begin{eqnarray*}
        u_a(\set',t')  & = & (t(1)-t(0)) f(\set')-c(\set') \\ & = & t(1) f(\set') +t(0)(1-f(\set'))-c(\set') -t(0) =u_a(\set',t)  -t(0),
    \end{eqnarray*}
    contract $t'$ incentivizes the same set $\set$ as $t$. Therefore, $t'$ incentivizes the same action set while paying (weakly) less, and therefore has at least the same principal's utility as $t$.
\end{proof}

\obscontright*
\begin{proof}
    The existence of the limits follows  immediately by the monotonicity of $\g_{f,c}$ (Corollary~\ref{cor:mon-g}).
    To prove the equality, let $\set$ be an arbitrary set in
    \[
        \argmin_{\set \in f^{-1}(\lim_{\epsilon\rightarrow 0^+}\g_{f,c}(\alpha + \epsilon))} {c(\set)}.
    \]
    We claim that $\set \in \demandset_{f,c}(\alpha)$.
    To see this, assume towards contradiction that there exists $\set'$ such that $\alpha \cdot f(\set') -c(\set') > \alpha \cdot f(\set) -c(\set)$.
    Let $\Delta_f = f(\set) -f(\set')$, and $\Delta_c = c(\set) -c(\set')$.
    First observe that $\Delta_f>0$. Indeed, $\Delta_f<0$ contradicts Proposition~\ref{prop:monotone}, and $\Delta_f=0$ contradicts the definition of $S$ as a set of minimal cost among sets with the same value.
    Similarly, since  $\Delta_f>0$, it also follows that  $\Delta_c>0$. 
	
    Let $\epsilon = \frac{\Delta_c}{2\Delta_f} -\frac{\alpha}{2}$. By the inequality above, $\epsilon >0$.
    We get 
    \begin{eqnarray*}
        (\alpha+\epsilon) \cdot f(\set') -c(\set') - [ (\alpha+\epsilon) \cdot f(\set) -c(\set) ] & = & -(\alpha+\epsilon) \cdot \Delta_f  + \Delta_c   \\ & =  & -\alpha\cdot \Delta_f   - \frac{\Delta_c}{2} + \frac{\alpha}{2} \cdot \Delta_f + \Delta_c  \\ & = &\epsilon \cdot \Delta_f > 0,
    \end{eqnarray*}
    which means that $\set'$ is strictly preferred by the agent over $\set$ at $\alpha+\epsilon$. 
    This contradicts Proposition~\ref{prop:monotone}.
\end{proof}

\begin{claim}\label{cl:submodular}
    Given a success probability function $f: \actionset \rightarrow [0,1]$, define a success probability function $g: 2^{\actions \cup \{n+1\}} \rightarrow [0,1]$ over actions $\actions \cup \{n+1\}$, where
    $$
        g(S)=
	   \begin{cases}
	       \beta_1 \cdot f(S) & \mbox{if } S \subseteq \actions\\
	       \beta_2 \cdot f(\actions) +f(S\setminus\{n+1\}) & \mbox{if } n+1 \in S
            \end{cases}
    $$
    for some $\beta_2 \geq \beta_1\geq 1$.
    If $f$ is a coverage function, then $g$ is a coverage function as well.
\end{claim}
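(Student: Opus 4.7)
The plan is to construct an explicit coverage representation of $g$ directly from one for $f$. Fix a coverage representation $(U, w, h)$ of $f$, so that $f(\set) = \sum_{j \in \bigcup_{i \in \set} h(i)} w_j$; without loss of generality take $U = \bigcup_{i \in \actions} h(i)$, which gives $\sum_{j \in U} w_j = f(\actions)$.

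I would then build a representation of $g$ on the enlarged universe $U' = U \sqcup \widetilde{U} \sqcup \{v\}$, where $\widetilde{U} = \{\tilde{j} \mid j \in U\}$ is a disjoint copy of $U$ and $v$ is a single fresh element. The weights would be
\[
w'_j = w_j \; (j \in U), \qquad w'_{\tilde{j}} = (\beta_1 - 1)\, w_j \; (\tilde{j} \in \widetilde{U}), \qquad w'_v = (\beta_2 - \beta_1 + 1)\, f(\actions),
\]
all non-negative since $\beta_2 \geq \beta_1 \geq 1$. The map would be $h'(i) = h(i) \cup \{\tilde{j} \mid j \in h(i)\}$ for $i \in \actions$, and $h'(n+1) = \widetilde{U} \cup \{v\}$.

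Verification is then a short weight computation in two cases. For $\set \subseteq \actions$, the union $\bigcup_{i \in \set} h'(i)$ consists of $\bigcup_{i \in \set} h(i) \subseteq U$ together with its tilded copy in $\widetilde{U}$, giving total weight $f(\set) + (\beta_1 - 1) f(\set) = \beta_1 f(\set)$, which agrees with $g(\set)$. For $\set \ni n+1$, writing $\set_0 = \set \setminus \{n+1\}$, the covered set contains $\bigcup_{i \in \set_0} h(i) \subseteq U$, all of $\widetilde{U}$ (forced by $n+1$), and $v$; the total weight is $f(\set_0) + (\beta_1 - 1) f(\actions) + (\beta_2 - \beta_1 + 1) f(\actions) = f(\set_0) + \beta_2 f(\actions)$, again matching $g(\set)$.

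The one conceptual hurdle is that the marginal of $n+1$ over $\set_0 \subseteq \actions$ equals $\beta_2 f(\actions) - (\beta_1 - 1) f(\set_0)$, which genuinely depends on $\set_0$, so the naive attempt of giving $n+1$ an independent private block of weight fails. The fix is to engineer exactly the right overlap: by letting $n+1$ cover all of $\widetilde{U}$, the still-uncovered portion $\widetilde{U} \setminus \{\tilde{j} : j \in \bigcup_{i \in \set_0} h(i)\}$ contributes precisely $(\beta_1 - 1)(f(\actions) - f(\set_0))$, and together with the private element $v$ this supplies the required $\set_0$-dependent marginal. Once this construction is written down, the claim is immediate.
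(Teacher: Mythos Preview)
Your proof is correct and is essentially the same construction as the paper's: duplicate the universe with a $(\beta_1-1)$-scaled copy, let action $n+1$ cover the entire copy plus one private element of weight $(\beta_2-\beta_1+1)f(\actions)$, and verify the two cases. The only difference is cosmetic: the paper first passes to the equivalent ``weights on subsets $T\subseteq\actions$'' representation of coverage functions (their Claim~\ref{cla:coverage}) and builds the analogous weights $\hat w_T$, whereas you work directly with the universe/map definition and thereby avoid that auxiliary step.
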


\begin{proof}
    An equivalent definition for a function $f$ to be coverage is if there exist non-negative weights $\{w_T\}_{T\subseteq \actions}$ such that for every $S\subseteq \actions$, it holds that $f(S)=\sum_{T\subseteq A} w_T \cdot \mathbbm{1}_{S\cap T \neq \emptyset}$  (see Claim~\ref{cla:coverage} for details).   
    
    To show that $g$ is a coverage function, we construct weights $\{\hat{w}_T\}_{T \subseteq A\cup\{n+1\}}$ which satisfy for every set $S \subseteq A\cup \{n+1\}$ 
    \begin{equation}
        g(S)=\sum_{T\subseteq A \cup \{n+1\}} \hat{w}_T \cdot \mathbbm{1}_{S\cap T \neq \emptyset}.  \label{eq:coverage}
    \end{equation} 
    Let 
    $$
        \hat{w}_T=
	   \begin{cases}
	       w_T & \mbox{if } T \subseteq \actions\\
            (\beta_1-1) \cdot w_{T\setminus \{n-1\}}  & \mbox{if } n+1 \in T \mbox{ and } T \neq \{n+1\} \\ 
            (\beta_2-\beta_1+1) \cdot f(\actions)  & \mbox{if } T = \{n+1\}
	   \end{cases}
    $$
    To show that Equation~\eqref{eq:coverage} holds, we consider two cases.
    If $n+1 \notin S$ then 
    \begin{align*}
        g(S) &= \sum_{T\subseteq \actions \cup \{n+1\}} \hat{w}_T \cdot \mathbbm{1}_{S\cap T \neq \emptyset}\\ &=  \sum_{T\subseteq \actions } w_T \cdot \mathbbm{1}_{S\cap T \neq \emptyset} +\sum_{T\subseteq \actions } (\beta_1-1) \cdot w_T \cdot \mathbbm{1}_{S\cap T \neq \emptyset}  \\&= \beta_1 \cdot f(S).
    \end{align*}
    If $n+1 \in S$ then
    \begin{align*}
        g(S) &= \sum_{T\subseteq \actions \cup \{n+1\}} \hat{w}_T \cdot \mathbbm{1}_{S\cap T \neq \emptyset} \\ &=  \sum_{T\subseteq \actions } w_T \cdot \mathbbm{1}_{S\cap T \neq \emptyset} +\sum_{T\subseteq \actions } (\beta_1-1) \cdot w_T  + (\beta_2-\beta_1+1) \cdot f(\actions)  \\ &=  f(S) + \beta_2 \cdot f(\actions).
    \end{align*}
    This concludes the proof.	
\end{proof}

\begin{claim}\label{cla:coverage}
    A function $f$ is coverage if and only if there are non-negative weights $\{w_T\}_{T\subseteq \actions}$ such that for every $S\subseteq \actions$ it holds that $f(S)=\sum_{T\subseteq A} w_T \cdot \mathbbm{1}_{S\cap T \neq \emptyset}$.
\end{claim}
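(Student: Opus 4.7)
The plan is to prove the two directions by exhibiting explicit translations between the two representations. The key observation is that in the definition of a coverage function, the only thing that matters about an element $j \in U$ is the set of actions that cover it, namely $T_j := \{i \in \actions : j \in g(i)\}$; elements with the same $T_j$ can be merged into a single ``weight'' on that subset.

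For the forward direction ($\Rightarrow$), suppose $f$ is coverage via $(U, \{w_j\}_{j\in U}, g)$. For each $T \subseteq \actions$, define
\[
w_T \;=\; \sum_{j \in U\,:\, T_j = T} w_j \;\geq\; 0,
\]
where $T_j = \{i \in \actions : j \in g(i)\}$. For any $S \subseteq \actions$, observe that $j \in \bigcup_{i \in S} g(i)$ if and only if there exists $i \in S$ with $j \in g(i)$, i.e.\ $S \cap T_j \neq \emptyset$. Partitioning the sum over $U$ according to the value of $T_j$ yields
\[
f(S) \;=\; \sum_{j \in U\,:\, S\cap T_j \neq \emptyset} w_j \;=\; \sum_{T \subseteq \actions} w_T \cdot \mathbbm{1}_{S \cap T \neq \emptyset},
\]
as required.

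For the backward direction ($\Leftarrow$), suppose non-negative weights $\{w_T\}_{T \subseteq \actions}$ satisfy $f(S) = \sum_T w_T \cdot \mathbbm{1}_{S \cap T \neq \emptyset}$. Take $U = \{T \subseteq \actions : w_T > 0\}$, assign to each $T \in U$ its given weight $w_T$, and define $g : \actions \to 2^U$ by $g(i) = \{T \in U : i \in T\}$. Then for any $S \subseteq \actions$,
\[
\bigcup_{i \in S} g(i) \;=\; \{T \in U : \exists\, i \in S,\, i \in T\} \;=\; \{T \in U : S \cap T \neq \emptyset\},
\]
so $\sum_{T \in \bigcup_{i \in S} g(i)} w_T = \sum_{T \subseteq \actions} w_T \cdot \mathbbm{1}_{S \cap T \neq \emptyset} = f(S)$, matching the original coverage definition.

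There is no real obstacle here: both constructions are essentially bookkeeping, and in each case the identity $j \in \bigcup_{i \in S} g(i) \iff S \cap T_j \neq \emptyset$ does all the work. The only mild point to be careful about is that $f(\emptyset) = 0$ is automatic in both representations since every indicator $\mathbbm{1}_{\emptyset \cap T \neq \emptyset}$ vanishes.
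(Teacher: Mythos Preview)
Your proof is correct and follows essentially the same approach as the paper: both directions hinge on grouping universe elements by the set $T_j=\{i\in\actions:j\in g(i)\}$ of actions covering them, and the backward construction is identical up to your (harmless) restriction of $U$ to subsets with positive weight. If anything, your forward direction is cleaner, as the paper takes a slightly roundabout route of first padding $U$ with zero-weight elements before performing the same grouping.
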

\begin{proof}
    Recall that a function $f$ is coverage if there exists a finite set $U$, where every element $j \in U$ is associated with a weight $\hat{w}_j \in \reals_{\geq 0}$, and a function $g:\actions \rightarrow 2^U$ such that for every set $S \in 2^\actions$, $f(S)=\sum_{j \in \bigcup_{i\in S}g(i)}\hat{w}_j$.
    
    We first show the only if direction. We start by adding, for each $T \subseteq A$, one element $j_T$ to $U$ with weight $\hat{w}_{j_T} = 0$ and setting $g(a) \leftarrow g(a) \cup \bigcup_{T: a \in T} \{j_T\}$. Since in this first step we only added elements of weight zero this does not change $f$. Now we define $w_T$ for $T \subseteq A$ as follows: We collect all elements $j \in U$ that are covered by \emph{all} elements of $T$ (i.e., $j \in g(a)$ for all $a \in T$), and sum up their weights $\hat{w}_j$. Note that this way we assign each of the original elements $j$ in $U$ to the set $T = \{a \mid j \in g(a)\}$, while the additional weight zero elements that we added and the extension of $g$ ensure that for each $T \subseteq A$ there is at least one element (namely $j_T$) that is covered by \emph{all} elements of $T$. Together these two properties ensure that for every $S \subseteq A$ it holds that 
    $$
        f(S) = \sum_{j \in \bigcup_{i \in S} g(i)} \hat{w}_j = \sum_{T \subseteq A} w_T \cdot \mathbbm{1}_{S\cap T \neq \emptyset}. 
    $$

    The if direction follows by the following construction: Let $U=\{T \mid T \subseteq \actions\}$,  let $\hat{w}_T = w_T$ for all $T \in U$, and for every $a\in \actions$ let $g(a) =\{T\mid a\in T\}$. Then for every $S \subseteq A$ it holds that 
    $$
        f(S) = \sum_{T\subseteq A} w_T \cdot \mathbbm{1}_{S\cap T \neq \emptyset}= \sum_{T \in \bigcup_{i\in S}g(i)}\hat{w}_T,
    $$
    as claimed.
\end{proof}

We next show that there are instances with gross-substitutes functions that admit a quadratic number of critical points.
\begin{proposition}\label{prop:gs-lower-bound}
    For every $n$, there exists a principal-agent instance $f,c$ over action set $A$ of size $n$ where $|\orbit{f,c}| \geq \frac{n (n+1)}{2}$.
\end{proposition}
\begin{proof}
    Consider the following OXS function\footnote{The family of OXS functions is a sub-family of gross substitutes functions. An OXS function $f:A\rightarrow \mathbb{R}_{\geq 0}$ is defined by a bipartite graph $G=(U=A,V,E)$, and a weight function $w:E \rightarrow \mathbb{R}_{\geq 0}$, where for $S \subseteq A$, $f(S)$ is the weight of the maximum matching in the graph induced by $S\cup V$.}  which is defined by the following full bipartite graph $G=(U=A=[n],V=[n],U\times V)$, and weights $w$ where for $i,j \in [n]$, $w(i,j) = 2^{i-n\cdot j}$ (where for convenience of notation, let $w(0,j)=0$). For a set $S\subseteq A$, it is easy to see that the maximum matching in $S\cup V$ always matches $S$ to $[|S|] \subseteq V$; moreover, it matches higher nodes in $S$ to lower nodes in $V$ since $2^{i_1-j_1\cdot n} +2^{i_2-j_2\cdot n} > 2^{i_1-j_2\cdot n}+2^{i_2-j_1\cdot n} $ when $i_1>i_2$ and $j_1<j_2$. Thus, for a  set $S = \{i_1,\ldots,i_k\}$ where $i_1 >i_2>\ldots > i_k$, it holds that $f(S) = \sum_{j=1}^k 2^{i_j-j\cdot n}$.  
    For every  $i\in [n]$, let $c(i)=\frac{ 3^{i}}{3^{n^2}}$ (and for convenience of notation, let $c(0)=0$).
   
    For $ i,j \in [n]$ with $i+j\leq n+1$,  let  
    \begin{align*}
        \alpha_{i,j} = 
        \frac{c(i)-c(i-1)}{w(i,j)-w(i-1,j)} =  
        \begin{cases}
            \frac{2 \cdot 3^{i-1} \cdot 2^{n\cdot j}}{3^{n^2} \cdot  2^{i-1}} &\text{ if } i >1 \\
            \frac{ 3 \cdot 2^{n\cdot j}}{2\cdot 3^{n^2}}  &\text{ if } i=1
        \end{cases},
    \end{align*}
    and let $S_{i,j} = \left([n]\setminus [n+1-j]\right)\cup\{i\}$ (where for convenience of notation, let $S_{0,1}=\emptyset$, and let $S_{0,j} = S_{n+2-j,j-1}$, and $\alpha_{0,j} = \alpha_{n+2-j,j-1}$ for $j>1$).
    Observe that for all $j\in [n]$,  $\alpha_{1,j},\ldots, \alpha_{n+1-j,j}$ is a strictly increasing sequence, and that for all $j$, it holds that $\alpha_{n+1-j,j} = \alpha_{0,j+1} < \alpha_{1,j+1}$.
    A simple analysis of the greedy algorithm shows that $\demandset_{f,c}(\alpha_{i,j}) = \{S_{i,j},S_{i-1,j}\}$, since at iteration $k=1,\ldots, j-1$ it strictly prefers action $n+1-k$, 
    and at the $j$-th iteration, it is indifferent between selecting actions $i$ and $i-1$ (where for $i=1$, in the $j$-th iteration it is indifferent between selecting action $i$ and selecting nothing). Thus, each of these values of $\alpha$ is a critical value which concludes the proof.    
\end{proof}

\end{document}